        \def\theequation{\thesection.\arabic{equation}}
\newcommand{\tr}{{\rm tr}}
\newcommand{\mL}{{\mathcal L}}
\newcommand{\mM}{{\mathcal M}}
\newcommand{\mH}{{\mathcal H}}
\newcommand{\la}{\lambda}
\newcommand{\vf}{\varphi}
\newcommand{\al}{\alpha}
\newcommand{\be}{\beta}
\newcommand{\ga}{\gamma}
\newcommand{\om}{\omega}
\newcommand{\vth}{\vartheta}
\newcommand{\bv}{{\bar{v}}}
\newcommand{\tv}{{\breve{v}}}
\newcommand{\tbv}{{\breve{\bar v}}}
\newcommand{\tnu}{{\breve{\nu}}}
\newcommand{\bnu}{{\bar{\nu}}}
\newcommand{\tbnu}{{\breve{\bar \nu}}}
\newcommand{\mC}{\mathbb C}
\newcommand{\mZ}{\mathbb Z}
\newtheorem{predl}{Proposition}[section]
\newtheorem{lemma}{Lemma}[section]
\newenvironment{proof}{\par\noindent{\bf Proof.}}{\hfill$\scriptstyle\blacksquare$}
\def\beq{\begin{equation}}
\def\eq{\end{equation}}
\def\p{\partial}
\newtheorem{theor}{Theorem}[section]
\newcommand{\mat}[4]{\left(\begin{array}{cc}{#1}&{#2}\\ \ \\{#3}&{#4}
\end{array}\right)}
\newcommand{\mats}[4]{\left(\begin{array}{cc}{#1}&{#2}\\ {#3}&{#4}
\end{array}\right)}
\def\res{\mathop{\hbox{Res}}\limits}
\begin{document}

\begin{center}

\setcounter{page}{1}

\vspace{0mm}

{\Large{\bf
Classical elliptic ${\rm BC}_1$ Ruijsenaars-van Diejen model:
}}

\vspace{3mm}

{\Large{\bf
relation to Zhukovsky-Volterra gyrostat and
}}

 \vspace{3mm}

{\Large{\bf 1-site classical XYZ model with boundaries}}

 \vspace{10mm}

 {\Large {A. Mostovskii}}
\qquad\qquad
 {\Large {A. Zotov}}

  \vspace{6mm}

 {\it Steklov Mathematical Institute of Russian
Academy of Sciences,\\ Gubkina str. 8, 119991, Moscow, Russia}




   \vspace{5mm}

 {\small\rm {E-mails: mostovskii.am21@physics.msu.ru, zotov@mi-ras.ru}}

\end{center}

\vspace{0mm}

\begin{abstract}
We present a description of the classical elliptic ${\rm BC}_1$ Ruijsenaars-van Diejen model
with 8 independent coupling constants
through a pair of ${\rm BC}_1$ type classical Sklyanin algebras generated
by the (classical) quadratic reflection equation with non-dynamical XYZ $r$-matrix. For this purpose, we consider the classical version
of the $L$-operator for the Ruijsenaars-van Diejen model proposed by O. Chalykh. In ${\rm BC}_1$ case
it is factorized to the product of two Lax matrices depending on 4 constants. Then we apply an IRF-Vertex type gauge transformation and obtain a product of the Lax matrices for
the Zhukovsky-Volterra gyrostats. Each of them is described by the ${\rm BC}_1$ version of the classical
Sklyanin algebra. In particular case, when 4 pairs of constants coincide, the ${\rm BC}_1$ Ruijsenaars-van Diejen model exactly coincides with the relativistic Zhukovsky-Volterra gyrostat.
Explicit change of variables is obtained.
We also consider another special case of the ${\rm BC}_1$
Ruijsenaars-van Diejen model with 7 independent constants. We show that it can be reproduced
by considering the transfer matrix of the classical 1-site XYZ chain with boundaries.
In the end of the paper, using another gauge transformation we represent the Chalykh's
Lax matrix in a form depending on the Sklyanin's generators.
\end{abstract}

%

{\small{ \tableofcontents }}

\bigskip


\section{Introduction}\label{sec1}
\setcounter{equation}{0}

We study the classical elliptic ${\rm BC}_1$ Ruijsenaars-van Diejen model \cite{vD}.
Its Hamiltonian can be represented in several different forms, see \cite{KH}.
Let us write down the form we use in this paper:
\beq\label{w01}
  \begin{array}{c}
  \displaystyle{
H^{\hbox{\tiny{8vD}}}=v(\eta,q)\bar v(\bar\eta,q)e^{p/c}+v(\eta,-q)\bar v(\bar\eta,-q)e^{-p/c}-2\sum_{a=0}^3\nu_a\bar\nu_a\wp(q+\om_a)\,,
}
 \end{array}
 \eq
where $p,q\in\mC$ are
canonically conjugated momenta and coordinate of particle, $\eta,\bar\eta,c,\nu_a,\bar\nu_a\in\mC$, $a=0,...,3$
are constants, $\wp(x)$ -- is the Weierstrass elliptic function on an elliptic curve $\mC/\mZ+\tau\mZ$
with elliptic moduli $\tau$ and $\om_a$ are four half-periods $0,1/2,1/2+\tau/2,\tau/2$.
The function $v$ is defined as
\beq\label{w02}
  \begin{array}{c}
  \displaystyle{
v(\eta,q|\nu)=v(\eta,q)=
\sum\limits_{a=0}^3\nu_a \exp(4\pi \imath \eta\p_\tau\om_a)\phi(2\eta,q+\om_a)\,,
\quad \phi(x,q)=\frac{\vth'(0)\vth(x+q)}{\vth(x)\vth(q)}
 }
 \end{array}
 \eq
with the first Jacobi theta function $\vth(x)$ (\ref{a01}), and ${\bar v}(\bar\eta,q)=v(\bar\eta,q|\bar\nu)$
is the same function with parameters $\bar\eta,\bar\nu_a$ instead of $\eta,\nu_a$.
The expression (\ref{w01}) depends on 10 coupling constants $\eta,\nu_a,\bar\eta,\bar\nu_a$
 (the eleventh constant $c$ is a relativistic parameter). Two of them can be fixed since the rescaling
  $\nu_a\to \kappa\nu_a$ yields $H^{\hbox{\tiny{8vD}}}\to \kappa H^{\hbox{\tiny{8vD}}}$
   and similarly for $\bar\nu_a$.
  Therefore, we have 8 independent constants (it is reflected in the notation $H^{\hbox{\tiny{8vD}}}$). It is convenient to keep all 10 constants
   for some particular and limiting cases.
The model (\ref{w01}) generalizes the elliptic Ruijsenaars-Schneider system \cite{RS} of ${\rm A}_1$ type (the 2-body case
in the center of mass frame), which has a single coupling constant.

The Hamiltonian (\ref{w01}) corresponds to the case of a 1-body ($n=1$) in the
elliptic $n$-body ${\rm BC}_n$ Ruijsenaars-van Diejen model with 9 coupling constants.
Its integrability at the quantum level was partially proved in \cite{vD} and then finally proved
by Y. Komori and K. Hikami in \cite{KH}
using Dunkl-Cherednik-type operators.
The Lax representation of size $2n\times 2n$ was constructed by O. Chalykh in \cite{Ch1} (see also \cite{Pusztai}
for degenerated trigonometric cases).
In the non-relativistic limit it reproduces the Lax matrix by K. Takasaki \cite{Ta} for the elliptic ${\rm BC}_n$
Calogero-Inozemtsev model with 5 constants \cite{Inoz89}. In the $n=1$ case the Hamiltonian of
${\rm BC}_1$
Calogero-Inozemtsev model is the non-relativistic limit of (\ref{w01})\footnote{The set of constants
$\nu_a$ in (\ref{w03}) does not coincide with $\nu_a$ from (\ref{w01}). The constants are redefined in
the non-relativistic limit, see the next Section.}:
\beq\label{w03}
  \begin{array}{c}
  \displaystyle{
H^{\hbox{\tiny{Inoz}}}=\frac{p^2}{2}-\sum\limits_{a=0}^3\nu_a^2\wp(q+\om_a)\,.
 }
 \end{array}
 \eq
In fact, this Hamiltonian is known for many years since it is the Hamiltonian for the elliptic form of the Painlev\'e VI equation
in time $\tau$ \cite{Painleve1906}, see also a review \cite{LOZ2014}. Let us also remark that
the (difference) q-Painlev\'e equations, in their turn, are related to the ${\rm BC}_1$ Ruijsenaars-van Diejen model under consideration \cite{Tak}.

It was shown in \cite{LOZ05} that equations of motion for the ${\rm BC}_1$
Calogero-Inozemtsev model (\ref{w03}) are equivalently written in the form of
Zhukovsky-Volterra gyrostat \cite{ZhV} (see also \cite{Basak}):
\beq\label{w04}
  \begin{array}{c}
  \displaystyle{
\dot{\vec{S}}=\vec{S}\times J(\vec{S})+\vec{S}\times \vec{\lambda}\,,
 }
 \end{array}
 \eq
where $\vec{S}=(S_1,S_2,S_3)$ is an angular momentum vector of rotation of rigid body in 3-dimensional space (in our consideration
the 3d space is complexified to $\mC^3$), $J(\vec{S})=(J_1S_1,J_2S_2,J_3S_3)$ with some constants
$J_1,J_2,J_3$ (these are the components of the inverse inertia tensor written in principle axes) and $\vec{\lambda}$ is a constant vector (related to the cyclical motion of the liquid in the cavity). Its components are certain linear
combinations of the constants $\nu_a$ (and the fourth independent linear combination
of  $\nu_a$ enters (\ref{w04}) through the length of $\vec{S}$). The above statement
on relation between (\ref{w03}) and (\ref{w04}) means existence of explicit change
of variables $S_k=S_k(p,q,\nu)$.
Similar relation was also observed and discussed in \cite{Tur} and \cite{Zhedanov}. The case $\vec{\lambda}=0$ corresponds to all pairwise equal $\nu_a$. Then
(\ref{w04}) turns into the Euler top, and the Calogero-Inozemtsev model (\ref{w03}) becomes the ${\rm A}_1$ elliptic
Calogero-Moser model \cite{Calogero2}. The relation between the Euler top and the $A$-type Calogero-Moser is known from
\cite{LOZ}, see also \cite{AtZ1}.

The next important result of \cite{LOZ05} is that the Zhukovsky-Volterra gyrostat (\ref{w04}) is a bihamiltonian
model. That is, besides the standard description through the linear Poisson brackets on ${\rm sl}_2^*$,
there is also a description through the quadratic Poisson brackets:
\beq\label{w05}
  \begin{array}{c}
  \displaystyle{
\{S_\al,S_\be\}=-i\varepsilon_{\al\be\ga} S_0 S_\ga\,,\quad \al,\be,\ga\in\{1,2,3\}\,,
}
\\ \ \\
  \displaystyle{
\{S_0,S_\al\}=-i\varepsilon_{\al\be\ga} S_\be S_\ga\Big(\wp(\om_\be)-\wp(\om_\ga)\Big)
+i\varepsilon_{\al\be\ga}(S_\be\lambda_\ga-\lambda_\be S_\ga)\,.
}
 \end{array}
 \eq
When $\vec\lambda=0$, it is the classical Sklyanin algebra \cite{Skl2}. In the generic case ($\vec\lambda\neq 0$)
we call it the (classical) ${\rm BC}_1$ Sklyanin algebra.
Also, we refer to this model as the relativistic Zhukovsky-Volterra gyrostat, where ''relativistic'' means that
the Poisson structure is quadratic\footnote{The non-relativistic models are described by linear $r$-matrix structures based on
Lie algebras, while relativistic models (e.g. the Ruijsenaars-Schneider model) are governed by quadratic $r$-matrix structure
similar to those for Lie groups.}.
The quantum version of the algebra (\ref{w05}) was described in \cite{LOZ05} as well using the
quantum reflection equation \cite{Skl-refl}. See also \cite{MS} for a study of this type generalizations of the Sklyanin algebra.

{\bf Purpose of the paper.} We consider $2\times 2$ Lax matrix $\mL^{\hbox{\tiny{Ch}}}(z)$ for the
${\rm BC}_1$ Ruijsenaars-van Diejen model (\ref{w01}) from \cite{Ch1}.
We mention that under simple transformation $\mL^{\hbox{\tiny{Ch}}}(z)$ takes the ''symmetric'' form $L(z){\bar L}(z)$,
which means that $L(z)=L(z|p,q,\eta,\nu_a)$, ${\bar L}(z)=L(z|p,q,\bar\eta,{\bar\nu}_a)$.
As a by-product, we deduce the (accompany) $M$-matrix for the classical Lax equation.
Our first result is
 related to a particular case of (\ref{w01}) when $\eta=\bar\eta$ and $\nu_a={\bar\nu}_a$, so that
the model (\ref{w01}) contains 4 independent constants in this case. Then
$L(z)={\bar L}(z)$ and the Hamiltonian (\ref{w01}) in this case (up to some constants) is equal to $(H_1^{\hbox{\tiny{4vD}}})^2$, where
\beq\label{w06}
  \begin{array}{c}
  \displaystyle{
H_1^{\hbox{\tiny{4vD}}}=v(\eta,q)e^{p/2c}+v(\eta,-q)e^{-p/2c}\,.
}
 \end{array}
 \eq
We prove that this model is gauge equivalent
to the relativistic Zhukovsky-Volterra gyrostat and find explicit change of
variables
\beq\label{w07}
  \begin{array}{c}
  \displaystyle{
S_0(p,q)=\frac12\Big(v(\eta,q)e^{p/2c}+v(\eta,-q)e^{-p/2c}\Big)=\frac12\,H_1^{\hbox{\tiny{4vD}}}\,,
}
 \end{array}
 \eq
 $$
  \displaystyle{
S_\al(p,q)=c_\al\left[\frac{1}{2}\Big(v(\eta,q)e^{p/2c}-v(\eta,-q)e^{-p/2c}\Big)\vf_\al(2q)+
\tnu_0\vf_\be(2q)\vf_\ga(2q)+\vf_\al(2q)\sum\limits_{k=1}^3\tnu_k\vf_k(2q)\right]\,,
}
 $$
where $\vf_\al$ are the functions (\ref{a27}) and $\tnu_a$ are the linear combinations of $\nu_a$ (\ref{w2113}).
We prove that the map from $p,q$ variables to $S_a(p,q)$ is Poisson, that is the Poisson brackets $\{S_a(p,q),S_b(p,q)\}$
computed through the canonical brackets $\{p,q\}=1$ coincide with the classical ${\rm BC}_1$ Sklyanin algebra (\ref{w05}).
The gauge transformation generating (\ref{w07}) is given by the (intertwining) matrix $\Xi(z,q)$, which comes from
the IRF-Vertex correspondence in quantum models of statistical mechanics \cite{Baxter-irf}. See also \cite{AtZ1,LOZ,LOZ2014} for its application to classical models.


Next, we return to discussion of the model $H^{\hbox{\tiny{8vD}}}$ (\ref{w01}). Using the described above
results we easily get that the model (\ref{w01}) is gauge equivalent to a pair of the
relativistic Zhukovsky-Volterra gyrostats with sets of constants $\eta,\nu_a$ and $\bar\eta,\bnu_a$.
It is important that both gyrostats are defined on the same phase space since their dynamical variables
$S_a(p,q|\eta,\nu_a)$, ${\bar S}_b(p,q|\bar\eta,\bnu_a)$
are expressed through $p,q$ in the same way (\ref{w07}) but with different constants.
This is why we call this model as coupled gyrostats.
Unfortunately,
we do not know a nice answer for the mixed Poisson
brackets $\{S_a(p,q),{\bar S}_b(p,q)\}$ although they can be of course
calculated as functions of $p,q$.
Being lifted to the quantum level, this problem is
to find closed commutation relations for the Sklyanin type generators depending on
different parameters but acting on the same representation space. To our best knowledge,
solution is unknown. We hope to clarify this problem in our future works.

Then in Section \ref{sec5} we explain the relation of the ${\rm BC}_1$ Ruijsenaars-van Diejen model
to 1-site classical XYZ ''chain'' with boundaries defined by a pair of different constant
 $K$-matrices.
This time we deal with the original Sklyanin algebra without linear terms.
It is given by relations (\ref{w05}) with $\lambda_\al=0$.
We show that the classical transfer matrix leads to the Hamiltonian that
can be represented in the form
of (\ref{w01}) with $\eta=\bar\eta$.

Finally, in the last Section \ref{sec6} we perform another gauge transformation with
the Chalykh's Lax matrix and obtain an expression depending on the original Sklyanin's generators.
Necessary elliptic functions properties are given in the Appendix.

\section{Elliptic ${\rm BC}_1$  Ruijsenaars-van Diejen model}\label{sec2}
\setcounter{equation}{0}

\subsection{Notations}
Below we consider the model (\ref{w01}) with the canonically conjugated momentum $p$ and position of particle $q$:
\beq\label{w200}
  \begin{array}{c}
  \displaystyle{
\{p,q\}=1\,.
 }
 \end{array}
 \eq
In (\ref{w01}) and in what follows we use the function
\beq\label{w210}
  \begin{array}{c}
  \displaystyle{
v(z,u|\nu)=v(z,u)=\sum\limits_{a=0}^3\nu_a\vf_a(2z,u+\om_a)=
\sum\limits_{a=0}^3\nu_a \exp(4\pi \imath z\p_\tau\om_a)\phi(2z,u+\om_a)\,,
 }
 \end{array}
 \eq
which has the property
\beq\label{w211}
  \begin{array}{c}
  \displaystyle{
v(z,u|\nu)=v(u,z|\breve\nu)=\sum\limits_{a=0}^3\breve\nu_a\vf_a(2u,z+\om_a)\,,
 }
 \end{array}
 \eq
where
  \beq\label{w2113}
 \begin{array}{c}
  \left(\begin{array}{c}
 \tnu_0
 \\
 \tnu_1
 \\
 \tnu_2
 \\
 \tnu_3
 \end{array}\right)
 =
   \displaystyle{\frac12}
 \left(\begin{array}{cccc}
 1 & 1 & 1 & 1
 \\
  1 & 1 & -1 & -1
 \\
  1 & -1 & 1 & -1
 \\
  1 & -1 & -1 & 1
 \end{array}\right)
  \left(\begin{array}{c}
 \nu_0
 \\
 \nu_1
 \\
 \nu_2
 \\
 \nu_3
 \end{array}\right)\,.
 \end{array}
 \eq
 The property (\ref{w211}) follows from (\ref{a23}). The inverse transformation (from $\tnu_k$ to $\nu_k$)
 is the same as the direct one due to (\ref{a25}).
For shortness we use notation
\beq\label{w212}
  \begin{array}{c}
  \displaystyle{
v(u,z|\breve\nu)=\breve v(u,z)\,.
 }
 \end{array}
 \eq
That is
\beq\label{w213}
  \begin{array}{c}
  \displaystyle{
v(z,u)=\tv(u,z)\,.
 }
 \end{array}
 \eq
One more short notation is used for the function $v$ depending on the set of constants $\bnu_a$:
\beq\label{w214}
  \begin{array}{c}
  \displaystyle{
\bv(z,u)=v(u,z|\bnu)=\sum\limits_{a=0}^3\bnu_a\vf_a(2z,u+\om_a)\,.
 }
 \end{array}
 \eq

\subsection{Lax pair}
In \cite{Ch1} O. Chalykh proposed the following Lax matrix:
\beq\label{w230}
  \begin{array}{c}
  \displaystyle{
\mL^{\hbox{\tiny{Ch}}}(z)=\mat{v(\eta,q)}{-v(z,q)}{-v(z,-q)}{v(\eta,-q)}
\mat{\bv(\bar\eta,q)e^{p/c}}{-\bv(z,q)}{-\bv(z,-q)}{\bv(\bar\eta,-q)e^{-p/c}}\,.
 }
 \end{array}
 \eq
Let us explain how the Hamiltonian $H^{\hbox{\tiny{8vD}}}$ (\ref{w01}) follows from (\ref{w230}).
Consider $\tr \mL^{\hbox{\tiny{Ch}}}(z)$
\beq\label{w2301}
  \begin{array}{c}
  \displaystyle{
\tr \mL^{\hbox{\tiny{Ch}}}(z)=v(\eta,q)\bar v(\bar\eta,q)e^{p/c}+v(\eta,-q)\bar v(\bar\eta,-q)e^{-p/c}+v(z,q)\bar v(z,-q)+v(z,-q)\bar v(z,q)\,.
 }
 \end{array}
 \eq
Two last terms contain dependence on the spectral parameter $z$. In order to simplify this expression
we use the following Lemma.
\begin{lemma}
    The following identities holds true for the functions v and $\bar v$:
    \beq\label{vbarv}
    \begin{array}{c}
    \displaystyle{
    v(z,q)\bar v(z,-q) = \sum_{\alpha=0}^3\nu_\alpha\bar\nu_\alpha\Big(\wp(2z)-\wp(q+\om_\al)\Big)+
    }
    \\
    \displaystyle{
    \sum_{\alpha\neq\be=0}^3\nu_\alpha\bar\nu_\be\vf_{\al+\be}(2z,\om_\al+\om_\be)\Big(E_1(2z)+E_1(q+\om_\al)+E_1(-q+\om_\be)+E_1(2z+\om_\al+\om_\be)\Big)
    }
    \end{array}
    \eq
    and
    \beq\label{vbarv2}
    \begin{array}{c}
    \displaystyle{
    v(z,q)\bar v(z,-q) + v(z,-q)\bar v(z,q)=
        }
    \\
    \displaystyle{
    =2\sum_{\alpha=0}^3\nu_\alpha\bar\nu_\alpha\Big(\wp(2z)-\wp(q+\om_\al)\Big)+
    2\sum_{\alpha\neq\be=0}^3\nu_\alpha\bar\nu_\be\vf_\al(2z)\vf_\be(2z)\,.
    }
    \end{array}
    \eq
\end{lemma}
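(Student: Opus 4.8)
Both identities are elliptic-function identities in the spectral parameter $z$, and the natural tool is the genus-one (Fay-type) addition theorem for the Kronecker function $\phi$, combined with the quasi-periodicity of $\phi$ and of $E_1$ under shifts by the half-periods $\om_a$. I would begin by substituting the definitions $v(z,q)=\sum_\al\nu_\al\vf_\al(2z,q+\om_\al)$ and $\bar v(z,-q)=\sum_\be\bar\nu_\be\vf_\be(2z,-q+\om_\be)$ (cf. (\ref{w210}), (\ref{w214})) into the product and splitting the resulting double sum into its diagonal part ($\al=\be$) and off-diagonal part ($\al\neq\be$).

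For (\ref{vbarv}) the diagonal part is handled by the degenerate product identity $\phi(x,w)\phi(x,-w)=\wp(x)-\wp(w)$. The key point is that the two exponential prefactors $\exp(4\pi\imath z\,\p_\tau\om_\al)$ carried by $\vf_\al(2z,q+\om_\al)$ and $\vf_\al(2z,-q+\om_\al)$ combine with the quasi-periodicity of $\phi(2z,\cdot)$ under the full-period shift $-q+\om_\al=-(q+\om_\al)+2\om_\al$ to produce exactly $\phi(2z,q+\om_\al)\,\phi(2z,-(q+\om_\al))=\wp(2z)-\wp(q+\om_\al)$, which is the first sum. For the off-diagonal part I would apply the addition theorem to each $\phi(2z,q+\om_\al)\,\phi(2z,-q+\om_\be)$: with $A=q+\om_\al$, $B=-q+\om_\be$ and the shared argument $2z$, it expresses the product through $\phi(2z,\om_\al+\om_\be)$ times a sum of $E_1$-terms at $2z$, $q+\om_\al$, $-q+\om_\be$ and $2z+\om_\al+\om_\be$. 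The same prefactor bookkeeping then upgrades $\exp(4\pi\imath z(\p_\tau\om_\al+\p_\tau\om_\be))\,\phi(2z,\om_\al+\om_\be)$ into $\vf_{\al+\be}(2z,\om_\al+\om_\be)$, where $\al+\be$ is the group-law index on the four half-periods, giving the second line of (\ref{vbarv}).

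To obtain (\ref{vbarv2}) I would add to (\ref{vbarv}) its image under $q\to-q$, which by definition is $v(z,-q)\bar v(z,q)$. In the diagonal part $\wp(q+\om_\al)$ is invariant under $q\to-q$ because $2\om_\al$ is a period, so that sum merely doubles. In the off-diagonal part the $q$-dependent contributions enter only through the combinations $E_1(q+\om_\al)+E_1(-q+\om_\al)$ and $E_1(q+\om_\be)+E_1(-q+\om_\be)$; by oddness of $E_1$ together with its quasi-periodicity under the full-period shifts $2\om_\al$ these collapse to $q$-independent constants, so that all $q$-dependence drops out. The surviving $z$-dependent factor, of the form $\vf_{\al+\be}(2z,\om_\al+\om_\be)\big(E_1(2z)\pm E_1(2z+\om_\al+\om_\be)+\mathrm{const}\big)$, is then identified with $\vf_\al(2z)\vf_\be(2z)$ by the product formula for the single-argument functions (\ref{a27}), itself the $q$-independent specialization of the same addition theorem; this yields the second sum in (\ref{vbarv2}).

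The main obstacle is not any isolated identity but the consistent bookkeeping of the half-period shifts: tracking how the prefactors $\exp(4\pi\imath z\,\p_\tau\om_a)$ interact with the quasi-periodicity of $\phi$ and $E_1$ under shifts by $2\om_a$ (which are genuine lattice periods for $a=1,2,3$), and how the index arithmetic $\al+\be$ on $\{0,1,2,3\}$ respects this, so that every lattice factor cancels and one lands precisely on $\vf_{\al+\be}$ and on $\vf_\al\vf_\be$ rather than on their quasi-periodic relatives. Correspondingly, fixing the constants produced in the third step and matching the precise sign of the $E_1(2z+\om_\al+\om_\be)$ term against the product formula for $\vf_\al\vf_\be$ is the delicate calculational point.
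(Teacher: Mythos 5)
Your proposal is correct and follows essentially the same route as the paper: the paper's proof invokes exactly the parity relation (\ref{a223}), the product formula (\ref{a19}) for the diagonal terms, and the addition theorem (\ref{a17}) for the off-diagonal terms, with the symmetrization in $q\to -q$ and the identification of the resulting $E_1$-combination with $\vf_\al(2z)\vf_\be(2z)$ carried out in detail in the analogous Lemma \ref{lem1}. The half-period bookkeeping you flag as the delicate point is indeed where the paper's detailed computation (via (\ref{a13})--(\ref{a141})) spends its effort.
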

\begin{proof}
    The identity (\ref{vbarv}) follows from (\ref{a223}), (\ref{a19}), (\ref{a17}).
    And the  identity (\ref{vbarv2}) is obtained from (\ref{vbarv}) and (\ref{a32}).
    Similar identity is proved below (see Lemma \ref{lem1}) in greater detail.
\end{proof}

Plugging (\ref{vbarv2}) into (\ref{w2301}) we get
\beq\label{w2302}
  \begin{array}{c}
  \displaystyle{
\tr \mL^{\hbox{\tiny{Ch}}}(z)=H^{\hbox{\tiny{8vD}}}+
2\wp(2z)\sum_{\alpha=0}^3\nu_\alpha\bar\nu_\alpha+
    2\sum_{\alpha\neq\be=0}^3\nu_\alpha\bar\nu_\be\vf_\al(2z)\vf_\be(2z)\,,
 }
 \end{array}
 \eq
where two last sums are independent of dynamical variables.

\paragraph{Symmetric form for $\mL^{\hbox{\tiny{Ch}}}(z)$.} Let us slightly modify the Lax matrix (\ref{w230}).
Notice that
\beq\label{w231}
  \begin{array}{c}
  \displaystyle{
\mat{\bv(\bar\eta,q)e^{p/c}}{-\bv(z,q)}{-\bv(z,-q)}{\bv(\bar\eta,-q)e^{-p/c}}=
}
\\ \ \\
  \displaystyle{
=\mat{e^{p/2c}}{0}{0}{-e^{-p/2c}}
\mat{\bv(\bar\eta,q)}{\bv(z,q)}{\bv(z,-q)}{\bv(\bar\eta,-q)}
\mat{e^{p/2c}}{0}{0}{-e^{-p/2c}}\,.
 }
 \end{array}
 \eq
Perform the gauge transformation
\beq\label{w232}
  \begin{array}{c}
  \displaystyle{
\mL^{\hbox{\tiny{Ch}}}(z)\to \mL(z)=\mat{e^{p/4c}}{0}{0}{-e^{-p/4c}}\mL^{\hbox{\tiny{Ch}}}(z)\mat{e^{-p/4c}}{0}{0}{-e^{p/4c}}\,.
}
 \end{array}
 \eq
Using (\ref{w231}) it is easy to see that the result of transformation (\ref{w232}) is represented in the form:
\beq\label{w233}
  \begin{array}{c}
\framebox{$\mL(z)=L(z){\bar L}(z)$}\,,
 \end{array}
 \eq
where
\beq\label{w234}
  \begin{array}{c}
  \displaystyle{
{L}(z)=\mat{v(\eta,q)e^{p/2c}}{v(z,q)}{v(z,-q)}{v(\eta,-q)e^{-p/2c}}
}
 \end{array}
 \eq
and
\beq\label{w235}
  \begin{array}{c}
  \displaystyle{
{\bar L}(z)=\mat{\bv(\bar\eta,q)e^{p/2c}}{\bv(z,q)}{\bv(z,-q)}{\bv(\bar\eta,-q)e^{-p/2c}}\,.
}
 \end{array}
 \eq
It is interesting to notice that the expression $\tr\mL(z)=\tr(L(z){\bar L}(z))$ provides the same Hamiltonian
as $\tr L(z)\tr{\bar L}(z)$. Introduce
\beq\label{w2351}
  \begin{array}{c}
  \displaystyle{
H_1^{\hbox{\tiny{4vD}}} =\tr{L}(z)=v(\eta,q)e^{p/2c}+v(\eta,-q)e^{-p/2c}
}
 \end{array}
 \eq
and
\beq\label{w2352}
  \begin{array}{c}
  \displaystyle{
{\bar H}_1^{\hbox{\tiny{4vD}}} =\tr{\bar L}(z)=
\bv(\bar\eta,q)e^{p/2c}+\bv(\bar\eta,-q)e^{-p/2c}\,.
}
 \end{array}
 \eq
\begin{predl}
The following relation holds true for ${H}^{\hbox{\tiny{8vD}}}$ (\ref{w01}) and
${H}_1^{\hbox{\tiny{4vD}}}$, ${\bar H}_1^{\hbox{\tiny{4vD}}}$ (\ref{w2351})-(\ref{w2352}):
\beq\label{w2353}
  \begin{array}{c}
\framebox{${H}^{\hbox{\tiny{8vD}}}={H}_1^{\hbox{\tiny{4vD}}}{\bar H}_1^{\hbox{\tiny{4vD}}}+{\rm const}$}
 \end{array}
 \eq
\end{predl}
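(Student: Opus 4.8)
The goal is to show that the product $H_1^{\hbox{\tiny{4vD}}}\bar H_1^{\hbox{\tiny{4vD}}}$ reproduces $H^{\hbox{\tiny{8vD}}}$ up to a dynamics-independent constant. The plan is to compute the product of the two traces (\ref{w2351}) and (\ref{w2352}) directly and compare the result with the already-established expression (\ref{w2302}) for $\tr\mL^{\hbox{\tiny{Ch}}}(z)$, exploiting the remark (made just before the Proposition) that $\tr\mL(z)$ yields the same Hamiltonian as the product of traces.

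First I would expand the product
\beq\label{plan1}
H_1^{\hbox{\tiny{4vD}}}\bar H_1^{\hbox{\tiny{4vD}}}=
\Big(v(\eta,q)e^{p/2c}+v(\eta,-q)e^{-p/2c}\Big)
\Big(\bv(\bar\eta,q)e^{p/2c}+\bv(\bar\eta,-q)e^{-p/2c}\Big)\,.
\eq
The cross terms produce $e^{+p/c}$ and $e^{-p/c}$ contributions, namely $v(\eta,q)\bv(\bar\eta,q)e^{p/c}+v(\eta,-q)\bv(\bar\eta,-q)e^{-p/c}$, which are exactly the first two terms of $H^{\hbox{\tiny{8vD}}}$ in (\ref{w01}). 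The remaining two mixed terms are $p$-independent:
\beq\label{plan2}
v(\eta,q)\bv(\bar\eta,-q)+v(\eta,-q)\bv(\bar\eta,q)\,.
\eq
Thus the whole content of the Proposition reduces to showing that (\ref{plan2}) equals the potential term $-2\sum_{a=0}^3\nu_a\bar\nu_a\wp(q+\om_a)$ appearing in (\ref{w01}), modulo a genuine constant.

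The key step is therefore to evaluate (\ref{plan2}). I would handle this by the same technique used to prove the Lemma: the combination $v(\eta,q)\bv(\bar\eta,-q)+v(\eta,-q)\bv(\bar\eta,q)$ is structurally identical to the left-hand side of (\ref{vbarv2}) but with the spectral parameter $z$ replaced by the coupling $\eta$ (and $\bar\eta$) and with the roles of the two arguments interchanged via the symmetry property (\ref{w211})--(\ref{w213}). Concretely, using $v(\eta,q)=\tv(q,\eta)$ and the analogous relation for $\bv$, one rewrites each factor so that $q$ plays the role of the first argument; then applying the identity (\ref{vbarv2}) (or its direct analogue, obtained from the same addition-theorem inputs (\ref{a223}), (\ref{a19}), (\ref{a17}), (\ref{a32})) with the substitution $2z\mapsto q$-type arguments yields a term proportional to $\sum_a\nu_a\bar\nu_a\wp(q+\om_a)$ together with terms depending only on $\eta,\bar\eta$ and the constants. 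The latter are independent of $p,q$ and constitute the additive constant in (\ref{w2353}).

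The main obstacle I anticipate is bookkeeping rather than conceptual: one must correctly track which arguments become dynamical ($q$-dependent) and which become constant ($\eta,\bar\eta$-dependent) after applying the symmetry (\ref{w211}), and verify that the $\eta,\bar\eta$-dependent pieces are genuinely free of $q$ so they can be absorbed into the constant. A cleaner alternative, which I would prefer as a cross-check, is to avoid re-deriving (\ref{plan2}) from scratch and instead invoke the stated remark that $\tr\mL(z)$ and $\tr L(z)\,\tr\bar L(z)$ give the same Hamiltonian: since (\ref{w2302}) already isolates $H^{\hbox{\tiny{8vD}}}$ from the purely $z$-dependent (and hence, after the argument symmetry, constant) remainder inside $\tr\mL(z)$, matching the $p$-dependent parts of $\tr L(z)\,\tr\bar L(z)$ against those of $\tr\mL^{\hbox{\tiny{Ch}}}(z)$ forces the $p$-independent parts to agree up to a constant. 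Either route establishes (\ref{w2353}).
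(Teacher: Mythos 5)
Your expansion of the product is correct and is exactly the first step of the paper's own proof: the $e^{\pm p/c}$ cross terms reproduce the kinetic part of (\ref{w01}), and the whole claim reduces to showing that the $p$-independent remainder $v(\eta,q)\bv(\bar\eta,-q)+v(\eta,-q)\bv(\bar\eta,q)$ equals $-2\sum_a\nu_a\bar\nu_a\wp(q+\om_a)$ plus a constant. The gap is in how you evaluate this remainder. The identity (\ref{vbarv2}) is an identity for $v(z,q)\bar v(z,-q)+v(z,-q)\bar v(z,q)$ with the \emph{same} first argument $z$ in both factors; it is precisely this coincidence that cancels the quasi-periodicity exponents and lets the diagonal terms collapse to $\wp(2z)-\wp(q+\om_\al)$ via (\ref{a19}). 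In the remainder above the first arguments are $\eta$ and $\bar\eta$, and your proposed repair --- dualize via $v(\eta,q)=\tv(q,\eta)$ and then apply ``the direct analogue'' of (\ref{vbarv2}) --- does not restore that structure: after dualizing, the diagonal products become $\vf_a(2q,\eta+\om_a)\vf_a(-2q,\bar\eta+\om_a)=-\vf_a(2q,\eta+\om_a)\vf_a(2q,-\bar\eta+\om_a)$, and (\ref{a17}) turns this into $\phi(2q,\eta-\bar\eta+2\om_a)$ multiplied by a combination of $E_1$'s that still depends on $q$; only when $\eta=\bar\eta$ does the second argument hit a lattice point and the product degenerate (via (\ref{a19}), as in Lemma \ref{lem1}) into $\wp$-terms plus genuine constants. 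A quasi-periodicity check makes the obstruction concrete: $v(\eta,q+\tau)=e^{-4\pi\imath\eta}v(\eta,q)$, so under $q\to q+\tau$ the two cross terms acquire reciprocal multipliers $e^{\mp4\pi\imath(\eta-\bar\eta)}$ and their sum is not an elliptic function of $q$ unless $\eta-\bar\eta\in\frac{1}{2}\mZ$, whereas $-2\sum_a\nu_a\bar\nu_a\wp(q+\om_a)+{\rm const}$ is elliptic. So the step you describe cannot produce the claimed separation for generic $\eta\neq\bar\eta$; it does go through, and then coincides with the paper's argument (apply (\ref{vbarv2}) at $z=\eta$), in the case $\eta=\bar\eta$ --- the setting the paper itself imposes in (\ref{w5700}).

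Your ``cleaner alternative'' is circular rather than a cross-check: the remark that $\tr\mL(z)$ and $\tr L(z)\tr\bar L(z)$ provide the same Hamiltonian is exactly the content of the Proposition, since their difference is $\big[v(z,q)\bv(z,-q)+v(z,-q)\bv(z,q)\big]-\big[v(\eta,q)\bv(\bar\eta,-q)+v(\eta,-q)\bv(\bar\eta,q)\big]$ and its constancy in $p,q$ is precisely what must be proved. Moreover, agreement of the $p$-dependent parts of two functions does not force their $p$-independent, $q$-dependent parts to agree up to a constant. To complete the proof you should either state and use the restriction $\eta=\bar\eta$ explicitly (then (\ref{vbarv2}) with $z=\eta$ applies verbatim and gives (\ref{w2353}) at once), or supply and prove a new identity for the mixed combination with $\eta\neq\bar\eta$; as written, neither of your two routes closes the argument.
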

\begin{proof}
From (\ref{w2351})-(\ref{w2352}) we have:
\beq\label{w2354}
  \begin{array}{c}
  \displaystyle{
  {H}_1^{\hbox{\tiny{4vD}}}{\bar H}_1^{\hbox{\tiny{4vD}}}=
  v(\eta,q)\bar v(\bar\eta,q)e^{p/c}+v(\eta,-q)\bar v(\bar\eta,-q)e^{-p/c}
  +v(\eta,q)\bv(\bar\eta,-q)+v(\eta,-q)\bv(\bar\eta,q)\,.
}
 \end{array}
 \eq
In order to get the desired result one should use the identity (\ref{vbarv2}) for the last
two terms in (\ref{w2354}).
\end{proof}

\paragraph{$M$-matrix.} The Hamiltonian (\ref{w01}) provides the following equations of motion:
\beq\label{w243}
 \begin{array}{c}
  \displaystyle{
\dot q = \{H^{\hbox{\tiny{8vD}}},q\}= \partial_p H^{\hbox{\tiny{8vD}}}=\frac{1}{c}(v(\eta,q)\bar v(\bar\eta,q)e^{p/c}-v(\eta,-q)\bar v(\bar\eta,-q)e^{-p/c})
}
\end{array}
\eq
and
\beq\label{w2431}
 \begin{array}{c}
  \displaystyle{
\dot p = \{H^{\hbox{\tiny{8vD}}},p\}= -\partial_q H^{\hbox{\tiny{8vD}}}=
}
\\
  \displaystyle{
= -(v(\eta,q)\bar v(\bar\eta,q))'e^{p/c}+(v(\eta,-q)\bar v(\bar\eta,-q))'e^{-p/c}+2\sum_{\alpha=0}^3\nu_\alpha\bar\nu_\alpha\wp'(q+\om_\alpha)\,.
  }
\end{array}
\eq
In \cite{Ch1} the (accompany) $M$-matrices were defined in a general implicit form.
Below we give an explicit formula for $M$-matrix related to the Lax matrix (\ref{w233}).
Similar result for (\ref{w230}) was obtained in \cite{Mos}.
\begin{predl}
Equations (\ref{w243}), (\ref{w2431}) are equivalently written in the Lax form
\beq\label{w2440}
  \begin{array}{c}
  \displaystyle{
\dot\mL(z)=[\mL(z),\mM(z)]\,,
}
 \end{array}
 \eq
where $\mM(z)$ is the following matrix:
\beq\label{w244}
  \begin{array}{c}
  \displaystyle{
{\mM}(z)=\frac{1}{c}\mat{\mM_{12}(z,q,p)}{\mM_{11}(z,q,p)}{\mM_{21}(z,q,p)}{\mM_{22}(z,q,p)}
}
 \end{array}
 \eq
 with the entries
 \beq\label{w245}
 \begin{array}{l}
  \displaystyle{
\mM_{11}(z,q,p)=-v'(\eta,q)\bar v(\bar\eta,q)e^{p/c}+\bar v'(z,-q) v(z,q)+\frac14\,\partial_qH^{\hbox{\tiny{8vD}}}\,,
  }
  \\ \ \\
   \displaystyle{
\mM_{12}(z,q,p)=v(\eta,q)\bar v'(z,q)e^{p/2c}+\bar v(\bar\eta,-q)v'(z,q)e^{-p/2c}\,,
  }
  \\ \ \\
   \displaystyle{
\mM_{21}(z,q,p)= v(\eta,-q)\bar v'(z,-q)e^{-p/2c}+\bar v(\bar\eta,q)v'(z,-q)e^{p/2c}\,,
  }
  \\ \ \\
   \displaystyle{
\mM_{22}(z,q,p)=-v'(\eta,-q)\bar v(\bar\eta,-q)e^{-p/c}+\bar v'(z,q) v(z,-q)-\frac14\,\partial_qH^{\hbox{\tiny{8vD}}}\,.
  }
  \end{array}
 \eq
\end{predl}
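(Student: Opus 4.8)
The plan is to verify the matrix equation (\ref{w2440}) directly, exploiting the factorization $\mL(z)=L(z)\bar L(z)$ of (\ref{w233}). Writing $\dot\mL(z)=\dot L(z)\,\bar L(z)+L(z)\,\dot{\bar L}(z)$, I would first reduce the quadratic-in-$\mL$ identity to two \emph{linear} auxiliary equations for the individual factors. Concretely, I look for an intermediate connection matrix $\mathcal B(z)$ such that
\[
\dot L(z)=L(z)\mathcal B(z)-\mM(z)L(z),\qquad \dot{\bar L}(z)=\bar L(z)\mM(z)-\mathcal B(z)\bar L(z).
\]
Multiplying the first relation on the right by $\bar L(z)$ and the second on the left by $L(z)$ and adding, the $\mathcal B(z)$ contributions telescope and one recovers exactly $\dot\mL(z)=[\mL(z),\mM(z)]$. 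Thus it suffices to establish these two auxiliary equations.

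The key simplification is a symmetry. The Hamiltonian $H^{\hbox{\tiny{8vD}}}$ (\ref{w01}), and hence the equations of motion (\ref{w243})--(\ref{w2431}), are invariant under the exchange $(\eta,\nu_a)\leftrightarrow(\bar\eta,\bar\nu_a)$, i.e.\ $v\leftrightarrow\bar v$; under this exchange $L(z)\leftrightarrow\bar L(z)$. Applying the exchange to the whole construction forces $\mathcal B(z)$ to be precisely the image of $\mM(z)$ (\ref{w245}) under $v\leftrightarrow\bar v$, and carries the two auxiliary equations into one another. Hence only one of them is genuinely independent, and the entire Proposition is reduced to checking a single $2\times2$ identity each of whose entries is now \emph{linear} in the $v$'s (rather than a $v\bar v$ product). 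To evaluate $\dot L(z)$ I differentiate entrywise, using $\dot q=\partial_pH^{\hbox{\tiny{8vD}}}$ and $\dot p=-\partial_qH^{\hbox{\tiny{8vD}}}$: the $q$-dependence of $v(\eta,\pm q)$ and $v(z,\pm q)$ produces the $\dot q$-terms, while differentiating $e^{\pm p/2c}$ produces the $\dot p$-terms, and it is the latter that are responsible for the $\partial_qH^{\hbox{\tiny{8vD}}}$ pieces present in (\ref{w245}).

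I would then organize the comparison of the two sides by the power of $e^{p/c}$ carried by each term ($e^{\pm p/2c}$ and $e^{0}$), matching coefficients in each sector separately. The terms containing $\partial_qH^{\hbox{\tiny{8vD}}}$ match essentially by construction of the corresponding entries of $\mM(z)$, so these sectors simply reproduce the equations of motion. A convenient global consistency check is $\tr[\mL(z),\mM(z)]=0$, which is forced by $\tr\mL(z)=H^{\hbox{\tiny{8vD}}}+{\rm const}$ (see (\ref{w2302})) being conserved.

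The main obstacle is the matching of the genuinely $z$-dependent contributions. After substituting the equations of motion, the surviving spectral-parameter terms are products of $v(z,\pm q)$ or $v'(z,\pm q)$ with constant-argument factors $v(\eta,\pm q)$, $\bar v(\bar\eta,\pm q)$ (and their $v\leftrightarrow\bar v$ mirrors); matching these against the $z$-dependent entries of $\mM(z)$ and $\mathcal B(z)$ requires the elliptic addition theorems (\ref{a17}), (\ref{a19}), (\ref{a223}) together with the degeneration (\ref{a32}) — precisely the identities underlying the Lemma. In the factorized route these single-factor identities suffice, whereas if one instead verifies (\ref{w2440}) head-on, working directly with the $v\bar v$ entries of $\mL(z)$, one additionally invokes the product formula (\ref{vbarv2}) (equivalently its $z$-derivative). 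The delicate bookkeeping is keeping track of the signs coming from the $\pm q$ arguments and from the sign pattern ${\rm diag}(e^{p/2c},-e^{-p/2c})$ implicit in (\ref{w234})--(\ref{w235}); I expect this spectral-parameter matching, rather than the handling of the Hamiltonian terms, to be the crux.
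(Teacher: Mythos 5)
Your reduction of the quadratic Lax equation to two linear intertwining relations is the crux, and it fails for the $\mathcal B(z)$ you propose. Look at the $(1,1)$ entry of your first auxiliary equation. On the left, $\dot L_{11}=\dot q\,v'(\eta,q)e^{p/2c}+\frac{\dot p}{2c}\,v(\eta,q)e^{p/2c}$, and since $\dot q$ and $\dot p$ contain $e^{\pm p/c}$ pieces, this produces a term proportional to $e^{3p/2c}$ with coefficient $\frac{1}{2c}\,v(\eta,q)\bigl(v'(\eta,q)\bar v(\bar\eta,q)-v(\eta,q)\bar v'(\bar\eta,q)\bigr)$, which is nonzero for generic $(\eta,\nu_a)\neq(\bar\eta,\bar\nu_a)$. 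On the right, the entries of $L(z)$ carry at most $e^{\pm p/2c}$, while the entries of $\mM(z)$ — and hence of $\mathcal B=\mM|_{v\leftrightarrow\bar v}$ — carry at most $e^{\pm p/c}$ in the diagonal slots and $e^{\pm p/2c}$ in the off-diagonal slots; every product occurring in $(L\mathcal B-\mM L)_{11}$ is therefore bounded by $e^{p/c}$, and the $e^{3p/2c}$ term cannot be matched. The half-integer powers only cancel after recombining $\dot L\,\bar L+L\,\dot{\bar L}$, i.e.\ at the level of the product. The symmetry argument does not rescue this: the exchange $(\eta,\nu_a)\leftrightarrow(\bar\eta,\bar\nu_a)$ sends $\mL=L\bar L$ to $\bar L L\neq\mL$, so it is not a symmetry of the Lax equation, and in any case it only tells you what an equivariant $\mathcal B$ would have to be \emph{if} the intertwining relations held — it does not establish their existence, which is a strictly stronger (overdetermined) statement than (\ref{w2440}) and is false here.

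For comparison, the paper verifies (\ref{w2440}) head-on at the level of the product. The labour-saving symmetry it uses is the genuine one, $(q,p)\to(-q,-p)$ of (\ref{w246}), which swaps the $11\leftrightarrow 22$ and $12\leftrightarrow 21$ entries of both $\mL$ and $\mM$ and reduces the check to the ``11'' and ``12'' components; energy conservation is then used to trade $\dot q$ for $\dot p$, and the key elliptic input is that $\partial_q\bigl(v(z,q)v(z,-q)\bigr)$ is independent of $z$ by (\ref{a50}) — close in spirit to the identities you list in your final paragraph, and your remark that the $\partial_q H^{\hbox{\tiny{8vD}}}$ pieces come from differentiating $e^{\pm p/c}$ is correct. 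If you want to keep a factorized argument you would have to allow a $\mathcal B$ with $e^{\pm p/c}$ (and $e^{\pm 3p/2c}$) entries and then verify both overdetermined linear systems separately, which is no less work than the direct computation.
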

\begin{proof}
Notice that
 \beq\label{w246}
\begin{array}{c}
\displaystyle{
    \mL_{22}(z,q,p) = \mL_{11}(z,-q,-p), \quad \mL_{21}(z,q,p) = \mL_{12}(z,-q,-p),
    }
    \\ \ \\
\displaystyle{
    \mM_{22}(z,q,p) = \mM_{11}(z,-q,-p), \quad \mM_{21}(z,q,p)=\mM_{12}(z,-q,-p).
    }
\end{array}
 \eq
 Therefore, we need to verify the matrix elements "11" and "12" of the Lax equation only.
 Also, it follows from the Lax equation that ${\dot H}^{\hbox{\tiny{8vD}}}=0$, that is $\dot q = -\dot p\,\frac{\p_pH^{\hbox{\tiny{8vD}}}}{\p_qH^{\hbox{\tiny{8vD}}}}$.

 For the matrix element "11" in the l.h.s. of the Lax equation (\ref{w2440}) we have
\beq\label{w247}
\begin{array}{c}
\displaystyle{
 \dot\mL_{11}(z)=\frac{\dot p}{c} v(\eta,q)\bar v(\bar\eta,q)e^{p/c}+\dot q \Big(\p_q(v(\eta,q)\bar v(\bar\eta,q))e^{p/c}+\p_q(v(z,q)\bar v(z,-q))\Big)=
 }
 \\ \ \\
 \displaystyle{
 =\dot p \Big[\frac{1}{c} v(\eta,q)\bar v(\bar\eta,q)e^{p/c}-\frac{\p_pH^{\hbox{\tiny{8vD}}}}{\p_qH^{\hbox{\tiny{8vD}}}}\Big(\p_q(v(\eta,q)\bar v(\bar\eta,q))e^{p/c}+\p_q(v(z,q)\bar v(z,-q))\Big)\Big]=
 }
 \\ \ \\
 \displaystyle{
=\frac{1}{c}\frac{\dot p}{\p_qH^{\hbox{\tiny{8vD}}}}\Big[e^{p/c}v(\eta,q)\bar v(\bar\eta,q)\p_q(v(z,-q)\bar v(z,q))+e^{-p/c}v(\eta,-q)\bar v(\bar\eta,-q)\p_q(v(z,q)\bar v(z,-q))+}
\\ \ \\
\displaystyle{
+\p_q(v(\eta,q)\bar v(\bar\eta,q)v(\eta,-q)\bar v(\bar\eta,-q))\Big]\,.
}
\end{array}
\eq
In order to obtain (\ref{w2431}) we need to check that the expression in the last square brackets is equal to

 \beq\label{w248}
\begin{array}{c}
\displaystyle{
-\mL_{12}(z,q,p)\mM_{21}(z,q,p)+\mL_{21}(z,q,p)\mM_{12}(z,q,p) =
}
\end{array}
\eq
$$
\begin{array}{c}
\displaystyle{
-\Big(v(\eta,q)\bar v(z,q)e^{p/2c}+\bar v(\bar\eta,-q)v(z,q)e^{-p/2c}\Big)\Big(v(\eta,-q)\bar v'(z,-q)e^{-p/2c}+\bar v(\bar\eta,q)v'(z,-q)e^{p/2c}\Big)+
}
\\ \ \\
\displaystyle{
+\Big(v(\eta,-q)\bar v(z,-q)e^{-p/2c}+\bar v(\bar\eta,q)v(z,-q)e^{p/2c}\Big)\Big(v(\eta,q)\bar v'(z,q)e^{p/2c}+\bar v(\bar\eta,-q)v'(z,q)e^{-p/2c}\Big)=
}
\\ \ \\
\displaystyle{
=e^{p/c}v(\eta,q)\bar v(\bar\eta,q)\p_q(v(z,-q)\bar v(z,q))+e^{-p/c}v(\eta,-q)\bar v(\bar\eta,-q)\p_q(v(z,q)\bar v(z,-q))+
}
\\ \ \\
\displaystyle{
+\bar v(\bar\eta,q)\bar v(\bar\eta,-q)\p_q(v(z,q)v(z,-q))+v(\eta,q)\bar v(\eta,-q)\p_q(\bar v(z,q)\bar v(z,-q))=
}
\\ \ \\
\displaystyle{
=e^{p/c}v(\eta,q)\bar v(\bar\eta,q)\p_q(v(z,-q)\bar v(z,q))+e^{-p/c}v(\eta,-q)\bar v(\bar\eta,-q)\p_q(v(z,q)\bar v(z,-q))+
}
\\ \ \\
\displaystyle{
+\bar v(\bar\eta,q)\bar v(\bar\eta,-q)\p_q(v(\eta,q)v(\eta,-q))+v(\eta,q)\bar v(\eta,-q)\p_q(\bar v(\bar\eta,q)\bar v(\bar\eta,-q))=
}
\\ \ \\
\displaystyle{
=e^{p/c}v(\eta,q)\bar v(\bar\eta,q)\p_q(v(z,-q)\bar v(z,q))+e^{-p/c}v(\eta,-q)\bar v(\bar\eta,-q)\p_q(v(z,q)\bar v(z,-q))+
}
\\ \ \\
\displaystyle{
+\p_q(v(\bar\eta,q)\bar v(\bar\eta,-q)v(\eta,q)v(\eta,-q))\,.
}
\end{array}
$$
Here we used that $\p_q(v(z,q)v(z,-q))$ is independent of $z$ due to the property (\ref{a50}).
By comparing (\ref{w248}) and (\ref{w247}),
 we obtain $\dot p = -\p_q H^{\hbox{\tiny{8vD}}}$ and, therefore, $\dot q = \p_p H^{\hbox{\tiny{8vD}}}$.

It remains to verify
\beq\label{w249}
\dot\mL_{12}(z)-\frac{1}{c}\Big(\mL_{11}(z)\mM_{12}(z)+\mL_{12}(z)\mM_{22}(z)-
\mM_{11}(z)\mL_{12}(z)-\mM_{12}(z)\mL_{22}(z)\Big)=0\,.
\eq
This calculation is straightforward and uses the same properties for the function $v(z,q)$.
\end{proof}

\subsection{4-constants ${\rm BC}_1$ Ruijsenaars-van Diejen model}

Consider the special case when $\nu_{\al}=\bar\nu_{\al}$ and $\eta=\bar\eta$. Then the $BC_1$ Ruijsenaars-van-Diejen model is described by the Lax matrix (\ref{w234}).
%
The Hamiltonian in this case takes the form
\beq\label{w253}
 \begin{array}{c}
  \displaystyle{
H^{\hbox{\tiny{4vD}}}_1 = \tr L(z)
=v(\eta,q)e^{p/2c}+v(\eta,-q)e^{-p/2c}\,.
  }
\end{array}
\eq
The equations of motion are written in the standard way:
\beq\label{w254}
 \begin{array}{c}
  \displaystyle{
\dot q = \{H^{\hbox{\tiny{4vD}}}_1,q\}= \partial_p H^{\hbox{\tiny{4vD}}}_1=\frac{1}{2c}(v(\eta,q)e^{p/2c}-v(\eta,-q)e^{-p/2c})
}
\end{array}
\eq
\beq\label{w2541}
 \begin{array}{c}
  \displaystyle{
\dot p = \{H^{\hbox{\tiny{4vD}}}_1,p\}= -\partial_q H^{\hbox{\tiny{4vD}}}_1= -v'(\eta,q)e^{p/2c}+v'(\eta,-q)e^{-p/2c}
  }
\end{array}
\eq
Similarly, for the second flow we have:
\beq\label{w2531}
 \begin{array}{c}
  \displaystyle{
H^{\hbox{\tiny{4vD}}}_2
=\frac12\,v(\eta,q)^2e^{p/c}+\frac12\,v(\eta,-q)^2e^{-p/c}-\sum_{\alpha=0}^3\nu_\alpha^2\wp(q+\om_\alpha)\,,
  }
\end{array}
\eq
which is calculated from $\frac12\,\tr L^2(z)$. Then
\beq\label{w25410}
 \begin{array}{c}
  \displaystyle{
\dot q = \{H^{\hbox{\tiny{4vD}}}_2,q\}= \partial_p H^{\hbox{\tiny{4vD}}}_2=\frac{1}{2c}(v(\eta,q)^2e^{p/c}-v(\eta,-q)^2e^{-p/c})\,,
}
\end{array}
\eq
\beq\label{w25411}
 \begin{array}{c}
  \displaystyle{
\dot p = \{H^{\hbox{\tiny{4vD}}}_2,p\}= -\partial_q H^{\hbox{\tiny{4vD}}}_2=
 }
\\
  \displaystyle{
=-v(\eta,q)v'(\eta,q)e^{p/c}+v(\eta,-q)v'(\eta,-q)e^{-p/c}+\sum_{\alpha=0}^3\nu_\alpha^2\wp'(q+\om_\alpha)\,.
  }
\end{array}
\eq
\begin{predl}
    M-matrices for the flows  (\ref{w253})-(\ref{w2541}) and (\ref{w2531})-(\ref{w25411}) are
as follows:
\beq\label{w255}
  \begin{array}{c}
  \displaystyle{
{M_1}(z)=\frac{1}{2c}\mat{0}{v'(z,q)}{v'(z,-q)}{0}\,,
}
 \end{array}
 \eq
 \beq\label{w2551}
  \begin{array}{c}
  \displaystyle{
{M_2}(z)=\frac{1}{2c}\Big(v(\eta,q)e^{p/2c}+v(\eta,-q)e^{-p/2c}\Big)\mat{0}{v'(z,q)}{v'(z,-q)}{0}\,.
}
 \end{array}
 \eq
\end{predl}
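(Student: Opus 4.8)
The plan is to verify the two Lax equations $\dot L(z)=[L(z),M_1(z)]$ and $\dot L(z)=[L(z),M_2(z)]$ directly, with $L(z)$ given by (\ref{w234}). Exactly as in the proof of the previous Proposition, $L(z)$ and $M_1(z)$ obey the discrete symmetry of the type (\ref{w246}): $L_{22}(z,q,p)=L_{11}(z,-q,-p)$ and $L_{21}(z,q,p)=L_{12}(z,-q,-p)$, while $M_1(z)$ is off-diagonal with $M_{1,21}(z,q,p)=M_{1,12}(z,-q,-p)$. Since $H_1^{\hbox{\tiny{4vD}}}$ is even under the involution $(q,p)\mapsto(-q,-p)$ and this involution preserves $\{p,q\}=1$, the Hamiltonian flow commutes with it; hence it suffices to check the ''11'' and ''12'' entries of each Lax equation, the ''22'' and ''21'' entries then following automatically.

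For $M_1(z)$ from (\ref{w255}) the ''12'' entry is immediate: $[L(z),M_1(z)]_{12}=\frac{1}{2c}v'(z,q)\big(L_{11}(z)-L_{22}(z)\big)=\frac{1}{2c}v'(z,q)\big(v(\eta,q)e^{p/2c}-v(\eta,-q)e^{-p/2c}\big)$, which is exactly $\dot q\,v'(z,q)=\dot L_{12}(z)$ by the equation of motion (\ref{w254}). The ''11'' entry reads $[L(z),M_1(z)]_{11}=\frac{1}{2c}\big(v(z,q)v'(z,-q)-v(z,-q)v'(z,q)\big)=-\frac{1}{2c}\,\p_q\big(v(z,q)v(z,-q)\big)$, and the crucial observation is that this combination is independent of $z$ by the property (\ref{a50}) already used above. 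Evaluating it at $z=\eta$ gives $\frac{1}{2c}\big(v(\eta,q)v'(\eta,-q)-v(\eta,-q)v'(\eta,q)\big)$, which coincides with $\dot L_{11}(z)=\dot q\,v'(\eta,q)e^{p/2c}+\frac{\dot p}{2c}v(\eta,q)e^{p/2c}$ after inserting (\ref{w254})--(\ref{w2541}) and cancelling the terms proportional to $e^{p/c}$. This establishes the first Lax equation.

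For $M_2(z)$ the efficient route is to note that the two Hamiltonians are functionally dependent. A direct computation gives $\tfrac12\big(H_1^{\hbox{\tiny{4vD}}}\big)^2-H_2^{\hbox{\tiny{4vD}}}=v(\eta,q)v(\eta,-q)+\sum_{a=0}^3\nu_a^2\wp(q+\om_a)$, and the specialization $\bar\nu_a=\nu_a$, $z=\eta$ of (\ref{vbarv2}) shows that $v(\eta,q)v(\eta,-q)=-\sum_a\nu_a^2\wp(q+\om_a)+\mathrm{const}$, with the constant independent of $q,p$. Hence $H_2^{\hbox{\tiny{4vD}}}=\tfrac12\big(H_1^{\hbox{\tiny{4vD}}}\big)^2+\mathrm{const}$, so the corresponding Hamiltonian vector fields satisfy $X_{H_2}=H_1^{\hbox{\tiny{4vD}}}X_{H_1}$, i.e. the second flow equals $\tr L(z)=H_1^{\hbox{\tiny{4vD}}}$ times the first. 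Therefore $\dot L(z)|_{H_2}=H_1^{\hbox{\tiny{4vD}}}\,[L(z),M_1(z)]$, and since $H_1^{\hbox{\tiny{4vD}}}=\tr L(z)$ is a scalar it may be absorbed into the commutator: $[L(z),H_1^{\hbox{\tiny{4vD}}}M_1(z)]=[L(z),M_2(z)]$, which is precisely (\ref{w2551}).

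The only genuine computation is the ''11'' entry for $M_1$, which rests entirely on the $z$-independence of $v(z,q)v'(z,-q)-v(z,-q)v'(z,q)$ (property (\ref{a50})); the ''12'' entry and the whole $M_2$ statement are then formal consequences. I expect the main obstacle to be presenting the $M_2$ case cleanly: one can either argue through the functional relation $H_2^{\hbox{\tiny{4vD}}}=\tfrac12(H_1^{\hbox{\tiny{4vD}}})^2+\mathrm{const}$ as above, or, if a direct check is preferred, repeat the entry-by-entry verification with the more cumbersome equations (\ref{w25410})--(\ref{w25411}), additionally using the identity $v(\eta,q)v'(\eta,-q)-v(\eta,-q)v'(\eta,q)=\sum_a\nu_a^2\wp'(q+\om_a)$, which is just the $q$-derivative of the same $z$-independent product.
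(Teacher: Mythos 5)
Your verification is correct, and for $M_1$ it is exactly the ``direct'' check the paper has in mind: the only nontrivial entry is ``11'', where $[L,M_1]_{11}=\frac{1}{2c}\big(v(z,q)v'(z,-q)-v(z,-q)v'(z,q)\big)$ is $z$-independent by (\ref{a54}) (the $q$-derivative of (\ref{a50})) and so can be matched with $\dot L_{11}$ by evaluating at $z=\eta$; the reduction to the ``11'' and ``12'' entries via the involution $(q,p)\mapsto(-q,-p)$ is the same device the paper uses in (\ref{w246}). Where you genuinely depart from the paper is the $M_2$ case: the paper's (unwritten) proof is a second direct entry-by-entry check, whereas you observe that $H_2^{\hbox{\tiny{4vD}}}=\tfrac12\big(H_1^{\hbox{\tiny{4vD}}}\big)^2+\mathrm{const}$ (a consequence of (\ref{a50}) at $z=\eta$, and consistent with the factorization (\ref{w2353}) specialized to coinciding constants), so that $X_{H_2}=H_1^{\hbox{\tiny{4vD}}}X_{H_1}$ and $M_2=H_1^{\hbox{\tiny{4vD}}}M_1$ follows for free because the scalar $\tr L(z)$ passes through the commutator. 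This shortcut buys a cleaner proof and makes transparent why $M_2$ is just $\tr L(z)$ times $M_1$ --- a structure that persists for higher flows generated by $\tr L^k(z)$ --- at the mild cost of invoking the functional dependence of the two Hamiltonians rather than treating the flows independently. Both routes rest on the same key identity (\ref{a54}), so the argument is sound as written.
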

The proof is direct. It is based on the property (\ref{a54}).

\subsection{${\rm A}_1$ Ruijsenaars-Schneider model and non-relativistic limit}

\paragraph{Reproducing ${\rm A}_1$ Ruijsenaars-Schneider model.}
%
%
Consider the Lax matrix for the two-particle Ruijsenaars-Schneider model. It can be written as follows:
%
%
%
\beq\label{w252}
  \begin{array}{c}
  \displaystyle{
{{L}^{\hbox{\tiny{RS}}}}(z)=\mat{\phi(\eta,q_{12})e^{p_1/c}}{\phi(z,q_{12})e^{p_2/c}}
{\phi(z,q_{21})e^{p_1/c}}{\phi(\eta,q_{21})e^{p_2/c}}\,.
}
 \end{array}
 \eq
 It is easy to verify that the total momentum $p_1+p_2$ is a conservation law for the system with the Lax matrix (\ref{w252}). Therefore, we can switch to the center of mass frame: $p_1+p_2=0$, $q_1+q_2=0$.
 In this frame, after variables renaming $q=q_1$, $p=p_1$ and additional conjugation by
 ${\rm diag}\{e^{p/2c}, e^{-p/2c}\}$, the matrix (\ref{w252}) is rewritten as
 \beq\label{w2521}
  \begin{array}{c}
  \displaystyle{
{{L}^{\hbox{\tiny{RS}}}}(z)\to
\mat{\phi(\eta,2q)e^{p/c}}{\phi(z,2q)}{\phi(z,-2q)}{\phi(\eta,-2q)e^{-p/c}}\,.
}
 \end{array}
 \eq
 Finally, we notice that (\ref{w2521}) coincides with (\ref{w234}) in the case $\breve\nu_1=\breve\nu_2=\breve\nu_3=0$, $\breve\nu_0=1$.

\paragraph{Non-relativistic limit to ${\rm BC}_1$ Calogero-Inozemtsev model.}
Consider the Lax matrix $L(z)$ (\ref{w234}). Similarly to the non-relativistic limit in the Ruijsenaars-Schneider model one should consider the limit $\eta\to 0$ and $c\to\infty$ simultaneously. Let $\eta=\kappa/c$ with some finite $\kappa\in\mC$.
For the matrix element $L_{11}(z)$ in the limit $c\to\infty$ using the local expansion (\ref{a08}) we have
\beq\label{w236}
  \begin{array}{c}
  \displaystyle{
v(\eta,q)e^{p/2c}=\tv(q,\eta)e^{p/2c}=\Big(\frac{\tnu_0}{\eta}+\tnu_0E_1(2q))
+\sum\limits_{k=1}^3\tnu_k\vf_k(2q)+O(\eta)\Big)
\Big(1+\frac{p}{2c}+O(1/c^2)\Big)=
}
\\ \ \\
  \displaystyle{
=\frac{c\tnu_0}{\kappa}+\frac{\tnu_0}{\kappa}\Big(\frac{p}{2}+\frac{\kappa}{\tnu_0}E_1(2q)+
\frac{\kappa}{\tnu_0}\sum\limits_{k=1}^3\tnu_k\vf_k(2q)\Big)+O(\frac{1}{c})\,.
  }
 \end{array}
 \eq
By performing redefinition
\beq\label{w237}
  \begin{array}{c}
  \displaystyle{
p\to p-\frac{2\kappa}{\tnu_0}E_1(2q)-
\frac{2\kappa}{\tnu_0}\sum\limits_{k=1}^3\tnu_k\vf_k(2q)
}
 \end{array}
 \eq
and
\beq\label{w238}
  \begin{array}{c}
  \displaystyle{
\tnu_a\to \frac{\kappa}{\tnu_0}\,\tnu_a\,,\quad a=0,1,2,3\,,
}
 \end{array}
 \eq
we get
\beq\label{w239}
  \begin{array}{c}
  \displaystyle{
L(z)=c1_{2\times 2}+\frac{\tnu_0}{\kappa}\,L^{\hbox{\tiny{Inoz}}}(z)+O(\frac{1}{c})\,,
}
 \end{array}
 \eq
where
\beq\label{w240}
  \begin{array}{c}
  \displaystyle{
L^{\hbox{\tiny{Inoz}}}(z)=\mat{p/2}{\tv(q,z)}{\tv(-q,z)}{-p/2}
}
 \end{array}
 \eq
and $\tnu_0$ is replaced with $\kappa$. We keep the notation $\tnu_0$ by denoting back $\kappa:=\tnu_0$
in the final answer.

Obviously, for the Lax matrix $\mL(z)=L(z){\bar L}(z)$ (\ref{w233}) in the limit we again obtain
 the Lax matrix of the form $L^{\hbox{\tiny{Inoz}}}(z)$
\beq\label{w241}
  \begin{array}{c}
  \displaystyle{
\mL(z)=c^21_{2\times 2}+c\cdot{\rm const}\cdot L^{\hbox{\tiny{Inoz}}}(z)+O(1)
}
 \end{array}
 \eq
with an appropriate redefinitions of constants (the constants $\tnu_a$ and $\tbnu_a$ are combined into
sums $\tnu_a+\tbnu_a$).


\section{Zhukovsky-Volterra gyrostat}\label{sec3}
\setcounter{equation}{0}

In this Section we recall description of the Zhukovsky-Volterra gyrostat following \cite{LOZ05}.

\subsection{Linear Poisson structure}

The Lax matrix for the Zhukovsky-Volterra gyrostat has the form:
\beq\label{w310}
  \begin{array}{c}
  \displaystyle{
L^{\hbox{\tiny{ZhV}}}(z)=
\sum\limits_{\al=1}^3 \Big(S_\al\vf_\al(z)-\frac{\lambda_\al}{\vf_\al(z)}\Big)\sigma_{4-\al}\,,
}
 \end{array}
 \eq
where
$\sigma_k$ are the Pauli matrices\footnote{Notice that the numeration of dynamical variables $S_\al$ and the functions $\vf_\al(z)$ is non-standard in (\ref{w310}).
The numbers of $\sigma_1$ and $\sigma_3$ components are interchanged. This comes from the previous
consideration of the van Diejen model, where the numeration (\ref{a222}) (and (\ref{a27})) is used.}
 \beq\label{w311}
 \begin{array}{c}
  \displaystyle{
  \sigma_1=\mats{0}{1}{1}{0}\,,\quad
  \sigma_2=\mats{0}{-\imath}{\imath}{0}\,,\quad
   \sigma_3=\mats{1}{0}{0}{-1}
  }
 \end{array}
\eq
and below we also use notation
 $\sigma_0=\sigma_4$ for the identity $2\times 2$ matrix.
The Poisson structure is given by the Poisson-Lie brackets on ${\rm sl}_2^*$:
 \beq\label{w312}
 \begin{array}{c}
  \displaystyle{
 \{S_\al,S_\be\}=\imath\varepsilon_{\al\be\ga}S_\ga\,,\quad \al,\be,\ga\in\{1,2,3\}\,.
  }
 \end{array}
\eq
Using (\ref{a30}) one gets
 \beq\label{w3121}
 \begin{array}{c}
  \displaystyle{
 \frac14\,\tr\Big(\Big(L^{\hbox{\tiny{ZhV}}}(z)\Big)^2\Big)=C\wp(z)+H^{\hbox{\tiny{ZhV}}}\,,
  }
 \end{array}
\eq
 where
 \beq\label{w3122}
 \begin{array}{c}
  \displaystyle{
 C=\frac12\Big(S_1^2+S_2^2+S_3^2\Big)
  }
 \end{array}
\eq
 is the Casimir function of the brackets (\ref{w312}) and
 $H^{\hbox{\tiny{ZhV}}}$ is the Hamiltonian:
 \beq\label{w3123}
 \begin{array}{c}
  \displaystyle{
H^{\hbox{\tiny{ZhV}}}=-\frac12\sum\limits_{\al=1}^3 S_\al^2\wp(\om_\al)-\sum\limits_{\al=1}^3S_\al\lambda_\al\,.
  }
 \end{array}
\eq
 Introduce the set of matrices:
 \beq\label{w3124}
 \begin{array}{c}
  \displaystyle{
S=\sum\limits_{\al=1}^3 S_\al\sigma_{4-\al}\,,\qquad
\hat\wp(S)=\sum\limits_{\al=1}^3 S_\al\wp(\om_\al)\sigma_{4-\al}\,,\qquad
\lambda=\sum\limits_{\al=1}^3 \lambda_\al\sigma_{4-\al}\,.
  }
 \end{array}
\eq
 Then the equations of motion generated by the Hamiltonian (\ref{w3123}) and the Poisson brackets (\ref{w312})
 take the form\footnote{Equations (\ref{w3125}) differ from (\ref{w04}) by the factor $2\imath$ entering commutator
 of Pauli matrices. To avoid it one may rescale the Hamiltonian.}
 \beq\label{w3125}
 \begin{array}{c}
  \displaystyle{
{\dot S}=[S,\hat\wp(S)]+[S,\lambda]\,.
  }
 \end{array}
\eq
 These equations are represented in the Lax form ${\dot L^{\hbox{\tiny{ZhV}}}(z)}=[L^{\hbox{\tiny{ZhV}}}(z),M^{\hbox{\tiny{ZhV}}}(z)]$
 with $M$-matrix
 \beq\label{w3126}
 \begin{array}{c}
  \displaystyle{
M^{\hbox{\tiny{ZhV}}}(z)=
\frac{1}{2}\sum\limits_{\al=1}^3 S_\al\,\frac{\vf_1(z)\vf_2(z)\vf_3(z)}{\vf_\al(z)}\,\sigma_{4-\al}\,.
  }
 \end{array}
\eq
 Let $r_{12}(z\pm w)$ be the classical elliptic $r$-matrix \cite{Skl2}:
\beq\label{w331}
  \begin{array}{c}
  \displaystyle{
r_{12}(z\pm w)=\frac{1}{2}\sum\limits_{\al=1}^3 \vf_\al(z\pm w)\sigma_{4-\al}\otimes\sigma_{4-\al}\,.
}
 \end{array}
 \eq
\begin{predl}
The Lax matrix (\ref{w310}) satisfies the classical linear reflection equation
\beq\label{w313}
  \begin{array}{c}
  \displaystyle{
\{L_1^{\hbox{\tiny{ZhV}}}(z),L_2^{\hbox{\tiny{ZhV}}}(w)\}=
\frac{1}{2}\,[L_1^{\hbox{\tiny{ZhV}}}(z)+L_2^{\hbox{\tiny{ZhV}}}(w),r_{12}(z-w)]-
\frac{1}{2}\,[L_1^{\hbox{\tiny{ZhV}}}(z)-L_2^{\hbox{\tiny{ZhV}}}(w),r_{12}(z+w)]
}
 \end{array}
 \eq
identically in $z,w$ and provides the Poisson brackets (\ref{w312}).
\end{predl}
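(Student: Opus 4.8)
The plan is to verify (\ref{w313}) as an identity of matrix-valued elliptic functions of $z$ and $w$ in $\mC^2\otimes\mC^2$, coefficient by coefficient, and then to read off the equivalence with the brackets (\ref{w312}). The starting observation is that $L^{\hbox{\tiny{ZhV}}}(z)$ is \emph{affine} in the dynamical variables: the terms $\lambda_\al/\vf_\al(z)$ are constants, so only the $S_\al\vf_\al(z)$ terms contribute to the Poisson bracket. Hence the left-hand side follows directly from (\ref{w312}),
\[
\{L_1^{\hbox{\tiny{ZhV}}}(z),L_2^{\hbox{\tiny{ZhV}}}(w)\}=\imath\sum_{\al,\be,\ga}\vf_\al(z)\vf_\be(w)\,\varepsilon_{\al\be\ga}\,S_\ga\,\sigma_{4-\al}\otimes\sigma_{4-\be}\,,
\]
which is linear in $S$ and \emph{independent of} $\lambda$. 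This already dictates what must happen on the right-hand side: its part linear in $S$ must reproduce this expression, while its part linear in $\lambda$ must vanish.

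Since $r_{12}$ (\ref{w331}) carries no dynamical variables and the commutators are linear in the Lax matrices, I would split the right-hand side of (\ref{w313}) into an $S$-part and a $\lambda$-part, with no cross terms (each commutator is linear in one copy of $L^{\hbox{\tiny{ZhV}}}$, hence in either $S$ or $\lambda$). Expanding with the single Pauli rule $[\sigma_{4-\al},\sigma_{4-\ga}]=-2\imath\,\varepsilon_{\al\ga\de}\,\sigma_{4-\de}$ -- the minus sign coming from the odd relabeling $\al\mapsto 4-\al$ -- the $S$-part of the right-hand side becomes a sum over $\sigma_{4-\de}\otimes\sigma_{4-\ga}$ with coefficients built from the products $\vf_\al(z)\vf_\ga(z\pm w)$ and $\vf_\be(w)\vf_\ga(z\pm w)$. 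Matching these against the left-hand side above reduces, monomial by monomial, to the addition theorems for the functions $\vf_\al$ collected in the Appendix (the genus-one identities of the type already used in Section~\ref{sec3}). This is precisely the classical Sklyanin-algebra computation for the boundaryless case $\lambda=0$, now split over the $z-w$ and $z+w$ channels through the combinations $L_1\pm L_2$.

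The genuinely new, boundary content is the vanishing of the $\lambda$-part, which I expect to be the main obstacle. Writing the $\lambda$-piece of $L_1^{\hbox{\tiny{ZhV}}}(z)$ as $\Lambda_1(z)=-\sum_\al(\lambda_\al/\vf_\al(z))\,\sigma_{4-\al}\otimes 1$ and likewise $\Lambda_2(w)$, the two commutators regroup into
\[
\tfrac12\,[\Lambda_1(z),\,r_{12}(z-w)-r_{12}(z+w)]+\tfrac12\,[\Lambda_2(w),\,r_{12}(z-w)+r_{12}(z+w)]\,.
\]
The coefficient of each $\sigma_{4-\de}\otimes\sigma_{4-\ga}$ is then proportional to $\vf_\ga(z-w)\mp\vf_\ga(z+w)$ divided by $\vf_\al(z)$ (respectively $\vf_\be(w)$), and the assertion is that every such combination vanishes by the parity and quasi-periodicity identities for $\vf_\al$ in the Appendix. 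This is exactly the identity encoding the ''reflection'' (the folding $z\to-z$) that distinguishes (\ref{w313}) from the ordinary classical exchange relation, and checking it is the crux of the argument.

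Finally, the equivalence with (\ref{w312}) is immediate once (\ref{w313}) is established: since the left-hand side is built tautologically from $\{S_\al,S_\be\}$ while the right-hand side is fixed independently of the Poisson structure, the reflection equation holds if and only if the $S_\al$ satisfy (\ref{w312}). To make the elliptic bookkeeping efficient rather than term-by-term, I would verify each scalar coefficient identity by Liouville's theorem, comparing the principal parts at the poles $z=0$, $w=0$ and $z=\pm w$ together with the behaviour under shifts by the periods; matching poles and one regular value then fixes the identity ''identically in $z,w$''.
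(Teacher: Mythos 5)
First, a remark on the comparison itself: the paper contains no proof of this Proposition --- Section \ref{sec3} only \emph{recalls} it from the earlier work \cite{LOZ05} --- so your attempt is being judged against the standard computation rather than a written argument in the text. Your architecture is the right one and surely the intended one: the left-hand side follows tautologically from (\ref{w312}) because only the $S_\al\vf_\al(z)$ terms are dynamical; the right-hand side splits cleanly into a part linear in $S$ and a part linear in $\lambda$ (each commutator being linear in one copy of $L^{\hbox{\tiny{ZhV}}}$); and the whole statement reduces to scalar identities among the $\vf_\al$, with the converse direction (''provides the brackets'') following from linear independence of the products $\vf_\al(z)\vf_\be(w)$.

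The gap is that the two places you yourself identify as the crux are left unproved, and one of them is mischaracterized. The $\lambda$-part does \emph{not} vanish because each combination $\big(\vf_\ga(z-w)\mp\vf_\ga(z+w)\big)/\vf_\al(z)$ vanishes ''by parity and quasi-periodicity'' --- these functions are manifestly nonzero. Rather, for a fixed tensor component $\sigma_{4-\mu}\otimes\sigma_{4-\nu}$ (with $\rho$ the third index) the contributions of $[\Lambda_1(z),\,r_{12}(z-w)-r_{12}(z+w)]$ and of $[\Lambda_2(w),\,r_{12}(z-w)+r_{12}(z+w)]$ both carry the same constant $\la_\rho$ and must cancel \emph{against each other}; this is equivalent to the genuine two-variable addition identity
\[
\vf_\rho(w)\Big(\vf_\nu(z-w)-\vf_\nu(z+w)\Big)=\vf_\rho(z)\Big(\vf_\mu(z-w)+\vf_\mu(z+w)\Big)\,,
\]
which is of the same Fay/addition-theorem type as the identities needed for the $S$-part, not a parity statement. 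If you establish it by Liouville's theorem, your list of poles is inadequate for this piece: $z=0$ is a \emph{zero} of $1/\vf_\rho(z)$, not a pole, while the actual poles in $z$ sit at $z=\pm w$ and at $z=\om_\rho$ (the zero of $\vf_\rho$); the latter is missing from your list, and its residue vanishes only because $\vf_\nu(\om_\rho+w)$ is an even function of $w$ for $\nu\neq\rho$ --- a separate small lemma. Finally, a sign caution: with your stated conventions ($\{S_\al,S_\be\}=+\imath\varepsilon_{\al\be\ga}S_\ga$ together with $[\sigma_{4-\al},\sigma_{4-\ga}]=-2\imath\varepsilon_{\al\ga\de}\sigma_{4-\de}$) the residues of the two sides of (\ref{w313}) at $z=0$ in the $S$-part come out with opposite signs; the relabeling $\al\mapsto 4-\al$ must be applied consistently to the $\varepsilon$-symbol in (\ref{w312}) (compare the sign in (\ref{w320}) and (\ref{w05})) before the coefficient matching closes. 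None of this changes the skeleton of your argument, but as written the decisive identities are asserted rather than established.
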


\subsection{Relation of non-relativistic gyrostat to ${\rm BC}_1$ Calogero-Inozemtsev system}
Introduce the following matrix:
\beq\label{w315}
  \begin{array}{c}
  \displaystyle{
\Xi(z)=\mat{\theta_3(z-2q|2\tau)}{-\theta_3(z+2q|2\tau)}{-\theta_2(z-2q|2\tau)}{\theta_2(z+2q|2\tau)}\,.
}
 \end{array}
 \eq
\begin{predl}
The gauge transformation of the Lax matrix (\ref{w240}) (of ${\rm BC}_1$ Calogero-Inozemtsev system) with
the gauge transformation (\ref{w315}) yields the Lax matrix of the Zhukovsky-Volterra (non-relativistic)
gyrostat (\ref{w310})
\beq\label{w316}
  \begin{array}{c}
  \displaystyle{
\Xi(z) L^{\hbox{\tiny{Inoz}}}(z)\Xi^{-1}(z)=L^{\hbox{\tiny{ZhV}}}(z)
}
 \end{array}
 \eq
and provides the Poisson map given by the change of variables
\beq\label{w317}
  \begin{array}{c}
  \displaystyle{
S_\al(p,q)=c_\al\Big(\frac{p}{2}\,\vf_\al(2q)+
\tnu_0\vf_\be(2q)\vf_\ga(2q)+\vf_\al(2q)\sum\limits_{k=1}^3\tnu_k\vf_k(2q)\Big)
}
 \end{array}
 \eq
for distinct $\al,\be,\ga\in\{1,2,3\}$,
and
\beq\label{w318}
  \begin{array}{c}
  \displaystyle{
\la_\al=\frac{\tnu_\al}{c_\al}\,,\quad \al=1,2,3\,,
 }
 \end{array}
 \eq
where $c_\al$ are the constants (\ref{a35}).
The Casimir function $S_1^2+S_2^2+S_3^2$ of the Poisson brackets (\ref{w312}) is mapped under
(\ref{w317}) to
$\tnu_0^2$.
\end{predl}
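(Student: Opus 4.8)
The plan is to establish the three assertions in the order they are stated, each step feeding the next. For the conjugation identity (\ref{w316}) I would verify the equivalent pole-free statement $\Xi(z)\,L^{\hbox{\tiny{Inoz}}}(z)=L^{\hbox{\tiny{ZhV}}}(z)\,\Xi(z)$ entrywise, thereby avoiding the inversion of $\Xi(z)$. Decomposing $L^{\hbox{\tiny{Inoz}}}(z)=\frac{p}{2}\,\sigma_3+\tv(q,z)\,e_{12}+\tv(-q,z)\,e_{21}$ into matrix units splits the left side into three pieces linear in the Lax data, while on the right I expand $L^{\hbox{\tiny{ZhV}}}(z)=\sum_{\al=1}^3\big(S_\al\vf_\al(z)-\lambda_\al/\vf_\al(z)\big)\sigma_{4-\al}$ treating $S_\al,\lambda_\al$ as unknown functions of $p,q$. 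The $\sigma_3$--piece carries all of the $p$--dependence and, after conjugation by the theta matrix $\Xi(z)$, matches the $p$--linear part $c_\al\frac{p}{2}\vf_\al(2q)$ of (\ref{w317}). Inserting $\tv(q,z)=\sum_{a=0}^3\tnu_a\vf_a(2q,z+\om_a)$ into the two off--diagonal pieces and using the theta addition formulas of the Appendix together with the constants $c_\al$ of (\ref{a35}), the $\tnu_0$ term yields $c_\al\tnu_0\vf_\be(2q)\vf_\ga(2q)$ and the $\tnu_k$ terms yield $c_\al\vf_\al(2q)\sum_k\tnu_k\vf_k(2q)$. The decisive point is that the simple pole of $\tv(\pm q,z)$ at $z=0$ survives the conjugation as the $1/\vf_\al(z)$ terms of $L^{\hbox{\tiny{ZhV}}}$, and comparison of the corresponding residues produces $\lambda_\al=\tnu_\al/c_\al$, i.e. (\ref{w318}).

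For the Poisson property I would exploit that in (\ref{w317}) only $S_\al$ depends on $p$, and linearly, with $\partial_p S_\al=\frac{1}{2}c_\al\vf_\al(2q)$. Then $\{S_\al,S_\be\}=\partial_p S_\al\,\partial_q S_\be-\partial_q S_\al\,\partial_p S_\be$ splits into a part linear in $p$ and a part independent of $p$. Matching the $p$--linear part against that of $\imath\varepsilon_{\al\be\ga}S_\ga$ reduces to the Wronskian relation $c_\al c_\be\big(\vf_\al(2q)\vf_\be'(2q)-\vf_\al'(2q)\vf_\be(2q)\big)=\imath\varepsilon_{\al\be\ga}c_\ga\vf_\ga(2q)$, a standard elliptic identity of the same type as those behind (\ref{w3121}) and (\ref{a30}). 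Matching the $p$--independent part requires the analogous but considerably heavier quadratic identity relating the combinations $\tnu_0\vf_\be(2q)\vf_\ga(2q)+\vf_\al(2q)\sum_k\tnu_k\vf_k(2q)$ and their $q$--derivatives; verifying this relation, with its interplay of the $\tnu_0$ and $\tnu_k$ data, is the main obstacle of the whole proposition. A more conceptual alternative is to note that $L^{\hbox{\tiny{Inoz}}}(z)$ with $\{p,q\}=1$ obeys a linear $r$--matrix (reflection) relation and that $\Xi(z)$ realizes the IRF--vertex intertwining of the corresponding $r$--matrices (as in \cite{AtZ1,LOZ,LOZ2014}); then $L^{\hbox{\tiny{ZhV}}}$ satisfies (\ref{w313}), which by the earlier proposition is equivalent to (\ref{w312}). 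The delicate point in that route is the $q$--dependence of $\Xi(z)$, whose derivative contributions to $\{\cdot,\cdot\}$ must cancel precisely.

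Finally I would obtain the Casimir by trace invariance. Since $S_1^2+S_2^2+S_3^2=2C$ and, by (\ref{w3121}), $C$ is the coefficient of $\wp(z)$ in $\frac{1}{4}\tr\big(L^{\hbox{\tiny{ZhV}}}(z)\big)^2$, and since $\tr\big(L^{\hbox{\tiny{ZhV}}}(z)\big)^2=\tr\big(L^{\hbox{\tiny{Inoz}}}(z)\big)^2$ is invariant under conjugation by $\Xi(z)$, I evaluate it on the Inozemtsev side: $\frac{1}{4}\tr\big(L^{\hbox{\tiny{Inoz}}}(z)\big)^2=\frac{p^2}{8}+\frac{1}{2}\,\tv(q,z)\tv(-q,z)$, whose only $z$--dependence sits in the second term. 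Applying the addition formula of the Lemma (\ref{vbarv2}) with coinciding constants to $\tv(q,z)\tv(-q,z)$, the spurious double poles at the nonzero half--periods cancel, and the residue at $z=0$ reassembles through the linear change (\ref{w2113}) into $\frac{1}{4}\big(\sum_a\nu_a\big)^2=\tnu_0^2$, giving $C=\frac{1}{2}\tnu_0^2$ as required. A shorter but less explicit check is available once the map is known to be Poisson: then $C$ is a Casimir of (\ref{w312}), hence Poisson--commutes with $S_1,S_2$, and since the Jacobian of $(p,q)\mapsto(S_1,S_2)$ equals $\{S_1,S_2\}=\imath S_3\neq 0$ generically, $C$ must be constant in $p,q$ and can be fixed at any convenient point.
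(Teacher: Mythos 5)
Your plan is workable, but it is worth noting that the paper does not actually prove this proposition at all: immediately after the statement the authors write that the proof is skipped because a more general statement (the relativistic/quadratic case, Theorems \ref{th1} and \ref{th2}) is proved later; the non-relativistic proposition is then the $c\to\infty$ degeneration of that result. Your proposal is therefore a genuinely different, self-contained route, although the computational machinery you invoke is exactly what the paper deploys in the proofs of Theorems \ref{th1} and \ref{th2}: entrywise conjugation by $\Xi(z)$ using the $2\tau$-theta and Riemann identities (the paper decomposes $L=\sum_a\breve\nu_a L^a$ rather than into $\sigma_3$ plus off-diagonal matrix units, but this is cosmetic), the splitting of $\{S_\al,S_\be\}$ via $\partial_pS_\al=\tfrac12 c_\al\vf_\al(2q)$ with the Wronskian identity $\vf_\al\vf_\be'-\vf_\al'\vf_\be=(\wp_\al-\wp_\be)\vf_\ga$ from (\ref{a31})--(\ref{a32}) combined with (\ref{a38}), the analogue of Lemma \ref{lem1} for the $p$-independent part, and a spectral invariant for the Casimir (the paper uses $\det L=\det L^{\hbox{\tiny{ZhV}}}$ where you use $\tr L^2$; for these $2\times2$ matrices the two are equivalent). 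What your route buys is independence from the relativistic theorems; what the paper's route buys is that the heavy quadratic identity you correctly single out as ``the main obstacle'' is only verified once, in the relativistic setting (identity (\ref{w378})), and the linear case comes for free.

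Two details in your sketch are misstated, though neither is fatal. First, the $-\lambda_\al/\vf_\al(z)$ terms of $L^{\hbox{\tiny{ZhV}}}(z)$ have their poles at $z=\om_\al$ (since $1/\vf_\al(z)=\pm c_\al\vf_\al(z+\om_\al)$ by (\ref{a361})), not at $z=0$; it is the residues of $\tv(\pm q,z)$ at the nonzero half-periods, carrying the constants $\tnu_\al$, that produce $\lambda_\al=\tnu_\al/c_\al$, while the $z=0$ pole (residue $\tnu_0$) produces the matrix $\sum_\al S_\al\sigma_{4-\al}$ and in fact gives the Casimir directly via $\det(\tnu_0\,\Xi(0)\sigma_1\Xi^{-1}(0))=-\tnu_0^2$. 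Second, in the Casimir computation the double poles of $\tv(q,z)\tv(-q,z)$ at the nonzero half-periods do not cancel; they match the non-dynamical terms $\lambda_\al^2/\vf_\al^2(z)$ hidden in $\tr(L^{\hbox{\tiny{ZhV}}})^2$ (cf.\ (\ref{w380})), and only the coefficient of the $z=0$ double pole, which is indeed $\tnu_0^2$, feeds into $C_1$.
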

The statement that the change of variables is a Poisson map means that the brackets (\ref{w312}) are valid for the functions $S_\al(p,q)$
(\ref{w312}), that is
$\{S_\al(p,q),S_\be(p,q)\}=\imath\varepsilon_{\al\be\ga}S_\ga(p,q)$, where $\{S_\al(p,q),S_\be(p,q)\}$ are computed through the canonical Poisson bracket
(\ref{w200}). In fact, this statement was not verified in \cite{LOZ05}. Here we also skip the proof of it since
below a more general statement will be proved for the quadratic Poisson structure.

The matrix (\ref{w315}) is known from the IRF-Vertex correspondence \cite{Baxter-irf}. Its geometrical meaning
and applications to the classical integrable systems can be found in \cite{LOZ}, see also \cite{AtZ1,LOZ2014}.

\subsection{Quadratic Poisson structure: ${\rm BC}_1$ Sklyanin algebra}
Introduce the following quadratic Poisson algebra:
\beq\label{w320}
  \begin{array}{c}
  \displaystyle{
c\{S_\al,S_\be\}=-i\varepsilon_{\al\be\ga} S_0 S_\ga\,,
}
 \end{array}
 \eq
\beq\label{w321}
  \begin{array}{c}
  \displaystyle{
c\{S_0,S_\al\}=-i\varepsilon_{\al\be\ga} S_\be S_\ga(\wp_\be-\wp_\ga)
+i\varepsilon_{\al\be\ga}(S_\be\lambda_\ga-\lambda_\be S_\ga)\,.
}
 \end{array}
 \eq
%
%
The constant $c$ is not necessary, and one can choose $c=1$, but we keep it since
it will be identified with the ''light speed'' $c$ entering (\ref{w234}) through $e^{\pm p/2c}$.
In the case $\lambda_1=\lambda_2=\lambda_3=0$ the brackets (\ref{w320})-(\ref{w321})
define the classical Sklyanin algebra \cite{Skl2}.
The Casimir functions are as follows:
\beq\label{w322}
  \begin{array}{c}
  \displaystyle{
C_1=S_1^2+S_2^2+S_3^2\,,
}
\\ \ \\
  \displaystyle{
C_2=S_0^2+\sum\limits_{k=1}^3 S_k^2\wp_k+2S_k\lambda_k\,,\quad \wp_k=\wp(\om_k)\,.
}
 \end{array}
 \eq
Introduce the Lax matrix
\beq\label{w330}
  \begin{array}{c}
  \displaystyle{
L^{\hbox{\tiny{ZhV}}}(z)=\sigma_0S_0
+\sum\limits_{\al=1}^3 \Big(S_\al\vf_\al(z)-\frac{\lambda_\al}{\vf_\al(z)}\Big)\sigma_{4-\al}\,.
}
 \end{array}
 \eq
This Lax matrix satisfies the classical quadratic reflection equation \cite{Skl-refl}.
More precisely, the following statement holds.
\begin{predl}
The Lax matrix (\ref{w330}) satisfies the classical quadratic reflection equation
\beq\label{w332}
  \begin{array}{c}
  \displaystyle{
\{L_1^{\hbox{\tiny{ZhV}}}(z),L_2^{\hbox{\tiny{ZhV}}}(w)\}=
\frac{1}{2c}\,[L_1^{\hbox{\tiny{ZhV}}}(z)L_2^{\hbox{\tiny{ZhV}}}(w),r_{12}(z-w)]+
}
\\ \ \\
  \displaystyle{
+\frac{1}{2c}\,L_2^{\hbox{\tiny{ZhV}}}(w)r_{12}(z+w)L_1^{\hbox{\tiny{ZhV}}}(z)
-\frac{1}{2c}\,L_1^{\hbox{\tiny{ZhV}}}(z)r_{12}(z+w)L_2^{\hbox{\tiny{ZhV}}}(w)
}
 \end{array}
 \eq
identically in $z,w$ and provides the Poisson brackets (\ref{w320})-(\ref{w321}).
\end{predl}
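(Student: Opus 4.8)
The plan is to prove the operator identity (\ref{w332}) by expanding both sides in the basis $\{\sigma_i\otimes\sigma_j\}_{i,j=0}^{3}$ of $4\times 4$ matrices and comparing the coefficient of each tensor structure as a meromorphic function of $z,w$. Since each entry of $L^{\hbox{\tiny{ZhV}}}(z)$ is linear in the dynamical variables $S_0,\dots,S_3$ with $c$-number (spectral-parameter-dependent) prefactors, and since $\{S_0,S_0\}=0$, the left-hand side reads
\beq\label{plan-lhs}
\{L_1^{\hbox{\tiny{ZhV}}}(z),L_2^{\hbox{\tiny{ZhV}}}(w)\}=\sum_{a,b=0}^{3} g_a(z)\,g_b(w)\,\{S_a,S_b\}\,\sigma_{4-a}\otimes\sigma_{4-b}\,,
\eq
where $g_0\equiv 1$, $g_\al(z)=\vf_\al(z)$ and $\sigma_4:=\sigma_0$; the $c$-number pieces $-\lambda_\al/\vf_\al$ drop out of the bracket. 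Inserting the proposed brackets (\ref{w320})-(\ref{w321}) turns (\ref{plan-lhs}) into an explicit combination of the bilinears $S_0S_\ga$, $S_\be S_\ga$ and of the linear terms $S_\be\lambda_\ga$, each multiplied by one of the factors $\vf_\al(z)\vf_\be(w)$, $\vf_\al(z)$, $\vf_\al(w)$, or $1$.

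For the right-hand side I would write $L^{\hbox{\tiny{ZhV}}}(z)=L_0(z)+\Lambda(z)$, where $L_0(z)=\sigma_0 S_0+\sum_{\al=1}^3 S_\al\vf_\al(z)\sigma_{4-\al}$ carries all the dynamics and $\Lambda(z)=-\sum_{\al=1}^3(\lambda_\al/\vf_\al(z))\sigma_{4-\al}$ is a $c$-number matrix. Because the right-hand side of (\ref{w332}) is bilinear in $L^{\hbox{\tiny{ZhV}}}$, it splits into three sectors graded by the number of $\lambda$'s: the $L_0L_0$ sector (quadratic in $S$), the mixed $L_0\Lambda$ sector (linear in $S$ and in $\lambda$), and the $\Lambda\Lambda$ sector (quadratic in $\lambda$, no dynamical content). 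Each of the three matrix products $L_1L_2$, $L_2 r_{12}L_1$, $L_1 r_{12}L_2$ is reduced to the $\sigma_i\otimes\sigma_j$ basis with the Pauli rule $\sigma_j\sigma_k=\delta_{jk}\sigma_0+\imath\eps_{jkl}\sigma_l$ for $j,k\in\{1,2,3\}$.

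The matching then proceeds sector by sector. In the $L_0L_0$ sector the coefficients of $\sigma_{4-\al}\otimes\sigma_{4-\be}$ must reproduce $\{S_\al,S_\be\}$ of (\ref{w320}) and those of $\sigma_0\otimes\sigma_{4-\al}$ the $\wp$-part of $\{S_0,S_\al\}$ in (\ref{w321}); this is precisely the statement that $L_0$ obeys the classical quadratic reflection equation for the ordinary Sklyanin algebra \cite{Skl2,Skl-refl}, and the verification rests on the addition identities for the $\vf_\al$ collected in the Appendix, which express the products $\vf_\al(z)\vf_\be(w)$ through the functions of $z\pm w$ entering $r_{12}(z\pm w)$ (cf. (\ref{a30}), (\ref{a32})). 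The mixed $L_0\Lambda$ sector must yield exactly the inhomogeneous terms $\imath\eps_{\al\be\ga}(S_\be\lambda_\ga-\lambda_\be S_\ga)$ of (\ref{w321}); here the relevant identities involve the combinations $\vf_\al(z\pm w)/\vf_\be(w)$ and must collapse the $z\pm w$ arguments back to the single-variable factors present in (\ref{plan-lhs}). Finally, the $\Lambda\Lambda$ sector has no counterpart on the left-hand side, so consistency demands that the pure-$\lambda$ part of the right-hand side of (\ref{w332}) vanish identically, which is an identity purely among $1/\vf_\al$ evaluated at $z,w,z\pm w$.

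I expect the elliptic-function bookkeeping of the $L_0\Lambda$ and $\Lambda\Lambda$ sectors to be the main obstacle: unlike the familiar $L_0L_0$ computation, these mix $\vf_\al$ with its reciprocal and require the antisymmetrized product identities to eliminate the spurious $z+w$ and $z-w$ dependence (completely, in the $\Lambda\Lambda$ case). To keep the labour finite I would exploit the covariance of (\ref{w332}) under the simultaneous cyclic permutation of the indices $\{1,2,3\}$ together with the half-periods $\om_\al$, and under the exchange of the two tensor factors accompanied by $z\leftrightarrow w$ (using that $r_{12}$ is symmetric and that $\vf_\al$ is odd). These symmetries reduce the check to a single representative component of type $\sigma_0\otimes\sigma_{4-\al}$ (which fixes $\{S_0,S_\al\}$, hence (\ref{w321})) and a single one of type $\sigma_{4-\al}\otimes\sigma_{4-\be}$ with $\al\neq\be$ (which fixes $\{S_\al,S_\be\}$, hence (\ref{w320})), the remaining components following by symmetry.
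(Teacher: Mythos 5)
The paper does not actually prove this Proposition: Section~3 opens with ``we recall description of the Zhukovsky-Volterra gyrostat following \cite{LOZ05}'', and the statement is quoted without a proof, so there is nothing in-text to compare your argument against. Judged on its own, your plan is the standard direct verification and its architecture is sound: the left-hand side is indeed $\sum_{a,b}g_a(z)g_b(w)\{S_a,S_b\}\,\sigma_{4-a}\otimes\sigma_{4-b}$, the grading of the right-hand side by powers of $\lambda$ is the right organizing principle, and the exchange symmetry you invoke is legitimate because $r_{12}$ is $P$-symmetric and odd, so both sides obey the same antisymmetry under $P(\cdot)P$ combined with $z\leftrightarrow w$.

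Three points need care before this becomes a proof. First, the vanishing of your $\Lambda\Lambda$ sector is precisely the quadratic-in-$\rho$ part of the statement that the Inami--Konno $K$-matrix (\ref{w529})/(\ref{w561}) solves the reflection equation (\ref{w530}); the paper \emph{derives} that fact as a corollary of this very Proposition (by setting $S_0=1$, $S_\al=0$), so you must establish the corresponding $1/\vf_\al$ identity independently rather than cite it, and likewise the identity $[\Lambda_1+\Lambda_2,r_{12}(z-w)]-[\Lambda_1-\Lambda_2,r_{12}(z+w)]=0$, which is what kills the $S_0\lambda_\al$ bilinears that would otherwise appear in your mixed sector but are absent from (\ref{w321}). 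Second, the $L_0L_0$ sector is \emph{not} ``precisely'' the ordinary Sklyanin check: the ordinary algebra is generated by $\{L_1,L_2\}=\frac1c[L_1L_2,r_{12}(z-w)]$ as in (\ref{w5251}), so matching the reflection form additionally requires the non-tautological identity $[L_{0,1}L_{0,2},r_{12}(z-w)]=L_{0,2}r_{12}(z+w)L_{0,1}-L_{0,1}r_{12}(z+w)L_{0,2}$; moreover the identities you cite, (\ref{a30}) and (\ref{a32}), are the degenerate equal-argument cases, whereas the actual inputs are the two-variable addition formulae (\ref{a15})--(\ref{a17}) specialized to half-period shifts. Third, your symmetry reduction omits the diagonal tensor components $\sigma_0\otimes\sigma_0$ and $\sigma_{4-\al}\otimes\sigma_{4-\al}$, whose coefficients on the left vanish because $\{S_a,S_a\}=0$ and must therefore be checked to vanish on the right (they do, for purely Pauli-algebraic reasons, but this should be stated). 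With these repairs the plan is viable; as written it remains an outline in which the elliptic bookkeeping --- which is the entire content of the Proposition --- is deferred.
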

Notice that we keep the same notation for the Lax matrix (\ref{w330}) as it was in (\ref{w310}).
In fact, one can add $\sigma_0 S_0$ to (\ref{w310}) since it is proportional to identity matrix
and does not effect the r.h.s. of (\ref{w313}), and $S_0$ is the Casimir function ($\{S_0,S_\al\}=0$) for the linear brackets
(\ref{w312}).

The above remark emerges from the fact that the
Zhukovsky-Volterra gyrostat is a bihamiltonian model. The Poisson structures (\ref{w312}) and (\ref{w320})-(\ref{w321})
are compatible, that is their linear combination is again the Poisson structure.
The generator $S_0$ is the Casimir function for (\ref{w312}), and it becomes the Hamiltonian for the
quadratic brackets
(\ref{w320})-(\ref{w321}). So that, the equations of motion (\ref{w3125}) and the Lax pair (\ref{w330}), (\ref{w3126})
is the same as in the non-relativistic model. But the Hamiltonian description is different. This time (in the relativistic case)
the equations of motion come from the Poisson brackets (\ref{w321}), and $S_0$ plays the role of the Hamiltonian function.

\section{Gauge equivalence and change of variables}\label{sec4}
\setcounter{equation}{0}

In this Section we describe gauge equivalence between the
4-constant ${\rm BC}_1$ Ruijsenaars-van Diejen model (\ref{w253}) and the relativistic
Zhukovsky-Volterra gyrostat.

\subsection{Gauge transformation}

\begin{theor}\label{th1}
The gauge transformation with the matrix $\Xi(z)$ (\ref{w315}) of 4-constants Lax matrix $L(z)$ (\ref{w234})
provides the Lax matrix of the relativistic Zhukovsky-Volterra gyrostat (\ref{w330})
\beq\label{w351}
  \begin{array}{c}
  \displaystyle{
\Xi(z) L(z)\Xi^{-1}(z)=L^{\hbox{\tiny{ZhV}}}(z)\,,
}
 \end{array}
 \eq
with the following change of variables:
\beq\label{w352}
  \begin{array}{c}
  \displaystyle{
S_0(p,q)=\frac12\Big(\tv(q,\eta)e^{p/2c}+\tv(-q,\eta)e^{-p/2c}\Big)
}
 \end{array}
 \eq
and
\beq\label{w353}
  \begin{array}{c}
   \displaystyle{
S_\al(p,q|\eta,\nu_0,\nu_1,\nu_2,\nu_3)=S_\al(p,q)=
}
\\ \ \\
  \displaystyle{
=\frac{c_\al}{2}\Big(\tv(q,\eta)e^{p/2c}-\tv(-q,\eta)e^{-p/2c}\Big)\vf_\al(2q)+
c_\al\tnu_0\vf_\be(2q)\vf_\ga(2q)+c_\al\vf_\al(2q)\sum\limits_{k=1}^3\tnu_k\vf_k(2q)
}
 \end{array}
 \eq
for distinct $\al,\be,\ga\in\{1,2,3\}$,
where the notation (\ref{w212})
\beq\label{w355}
  \begin{array}{c}
  \displaystyle{
\tv(q,\eta)=\sum\limits_{a=0}^3\tnu_a\vf_a(2q,\eta+\om_a)=v(\eta,q)\,.
 }
 \end{array}
 \eq
is used.
The identification of parameters is given by
\beq\label{w356}
  \begin{array}{c}
  \displaystyle{
\la_\al=\frac{\tnu_\al}{c_\al}\,,\quad \al=1,2,3
 }
 \end{array}
 \eq
with the constants $c_\al$ (\ref{a35}).
\end{theor}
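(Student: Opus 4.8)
The plan is to establish the intertwining (\ref{w351}) in the cross-multiplied form $\Xi(z)L(z)=L^{\hbox{\tiny{ZhV}}}(z)\Xi(z)$, which is preferable because in this form both sides are matrix-valued functions of $z$ whose only singularities are simple poles at the four half-periods $\om_a$ (the zeros of $\det\Xi(z)$ never enter). Since $\det\Xi(z)$ does not vanish identically, $\Xi(z)$ is invertible for generic $z$, so the cross-multiplied identity is equivalent to (\ref{w351}); reading off the Pauli-matrix expansion of the right-hand side against (\ref{w330}) then yields (\ref{w352})--(\ref{w353}) and (\ref{w356}).

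The organizing step is to split $L(z)$ (\ref{w234}) into its $z$-independent diagonal and its $z$-dependent off-diagonal. Writing the diagonal as $\frac12\big(v(\eta,q)e^{p/2c}+v(\eta,-q)e^{-p/2c}\big)\sigma_0+\frac12\big(v(\eta,q)e^{p/2c}-v(\eta,-q)e^{-p/2c}\big)\sigma_3$, I identify the scalar coefficient of $\sigma_0$ with $S_0(p,q)$ of (\ref{w352}) (recall $v(\eta,q)=\tv(q,\eta)$ from (\ref{w355})). As $\sigma_0$ is central it is inert under conjugation, so this term passes through to the $\sigma_0S_0$ term of (\ref{w330}). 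The remaining traceless part is
\[
A\,\sigma_3+\mat{0}{v(z,q)}{v(z,-q)}{0},\qquad A:=\tfrac12\big(v(\eta,q)e^{p/2c}-v(\eta,-q)e^{-p/2c}\big),
\]
and by (\ref{w213}) its off-diagonal block coincides with that of the Calogero--Inozemtsev matrix $L^{\hbox{\tiny{Inoz}}}(z)$ (\ref{w240}); thus this part is exactly $L^{\hbox{\tiny{Inoz}}}(z)$ with the scalar $p/2$ replaced by the scalar $A(p,q)$.

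The whole computation therefore reduces to verifying, for an arbitrary scalar $a$, the identity
\[
\Xi(z)\Big(a\,\sigma_3+\mat{0}{v(z,q)}{v(z,-q)}{0}\Big)=\Big(\sum_{\al=1}^3\big(s_\al\vf_\al(z)-\tfrac{\tnu_\al/c_\al}{\vf_\al(z)}\big)\sigma_{4-\al}\Big)\Xi(z),
\]
where $s_\al=c_\al a\,\vf_\al(2q)+c_\al\tnu_0\vf_\be(2q)\vf_\ga(2q)+c_\al\vf_\al(2q)\sum_{k=1}^3\tnu_k\vf_k(2q)$ for distinct $\al,\be,\ga$. I would verify this entrywise: using (\ref{w211}) to write $v(z,\pm q)=\sum_{a=0}^3\tnu_a\vf_a(\pm2q,z+\om_a)$, the left-hand side has simple poles in $z$ only at the half-periods $\om_a$, while the right-hand side has poles only at $\om_0=0$ (from the $\vf_\al(z)$) and at $\om_\al$ (from the $1/\vf_\al(z)$). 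The level-$2$ theta functions in $\Xi(z)$ intertwine the vertex-type quasi-periodicity of $v(z,\pm q)$ with the IRF-type quasi-periodicity of the $\vf_\al(z)$, so both sides share the same transformation under $z\to z+1$ and $z\to z+\tau$. Matching residues at $z=\om_\al$ forces $\la_\al=\tnu_\al/c_\al$ (\ref{w356}), and matching residues at $z=0$ together with the regular part fixes the coefficients $s_\al$ with the normalizations $c_\al$ from (\ref{a35}); once poles and quasi-periodicity agree the difference of the two sides is an entire, doubly quasi-periodic matrix, hence vanishes by Liouville's theorem. Specializing $a=p/2$ discharges the previously deferred non-relativistic statement (\ref{w316})--(\ref{w318}), while $a=A$ together with the inert $S_0\sigma_0$ term gives (\ref{w351}) with the change of variables (\ref{w352})--(\ref{w353}).

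The step I expect to be the main obstacle is the residue bookkeeping at $z=0$: one must track how multiplication by $\Xi(z)$ distributes the three ingredients $a\vf_\al(2q)$, $\tnu_0\vf_\be(2q)\vf_\ga(2q)$ and $\vf_\al(2q)\sum_k\tnu_k\vf_k(2q)$ into the correct $\sigma_{4-\al}$ channels and check that the constants $c_\al$ of (\ref{a35}) emerge with exactly the right normalization. This is precisely the computation that produces the three-term structure of $S_\al(p,q)$ in (\ref{w353}), and it rests on the addition-type theta identities collected in the Appendix (the same ones used in the Lemma and in the non-relativistic reduction).
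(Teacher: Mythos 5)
Your strategy is sound and genuinely different from the paper's. The paper decomposes $L(z)=\sum_{a}\breve\nu_a L^{a}(z)$ and conjugates each $L^{a}(z)$ entry by entry, reducing everything to the $2\tau$ product formulae (\ref{a58}) and the Riemann/Weierstrass identities (\ref{a56}), (\ref{a61}); the change of variables (\ref{w352})--(\ref{w353}) is then simply read off from the explicit output (\ref{w358})--(\ref{w361}). You instead split off $S_0\sigma_0$ (inert under conjugation), observe that the traceless remainder is $L^{\hbox{\tiny{Inoz}}}(z)$ with $p/2$ replaced by the scalar $A$, and prove a single universal intertwining identity with an arbitrary scalar $a$ by function-theoretic means. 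This buys a cleaner structure: the three-term shape of (\ref{w353}) is explained a priori (the $a$-term comes from $\Xi\sigma_3\Xi^{-1}=\sum_\al c_\al\vf_\al(2q)\vf_\al(z)\sigma_{4-\al}$, the rest from the off-diagonal block), and specializing $a=p/2$ recovers the non-relativistic statement (\ref{w316})--(\ref{w318}) for free, which the paper imports from the literature.

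Two points need repair before this is a complete proof. First, the residue bookkeeping at $z=0$ determines less than you suggest: $\det\Xi(0)=-\theta_1(0)\theta_1(2q)=0$, so $\Xi(0)$ has rank one and the matrix residue condition $\sum_\al s_\al\sigma_{4-\al}\Xi(0)=\breve\nu_0\,\Xi(0)\sigma_1$ yields only two scalar constraints on the three $s_\al$ (and none involving $a$ or $\tnu_1,\tnu_2,\tnu_3$). The coefficients are therefore pinned down only by the $O(z^0)$ Laurent data at $z=0$, i.e.\ by exactly the $E_1$-laden computation with (\ref{a56}), (\ref{a61}) that the paper carries out directly -- so the ``main obstacle'' you flag is not a side issue but the bulk of the work. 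Second, the concluding step ``entire, doubly quasi-periodic, hence vanishes by Liouville'' is false as stated: an entire function with a nontrivial theta-type automorphy factor (e.g.\ $\theta_3(z|2\tau)$ itself, which is what the entries of $\Xi(z)L(z)-L^{\hbox{\tiny{ZhV}}}(z)\Xi(z)$ would be after pole cancellation) need not vanish. You must either count zeros against the degree of the automorphy factor and check correspondingly many extra values of $z$, or pass back to the difference $\Xi(z)L(z)\Xi^{-1}(z)-L^{\hbox{\tiny{ZhV}}}(z)$, whose entries are genuinely elliptic up to conjugation by constant matrices, conclude it is constant, and then verify that the constant vanishes (e.g.\ from the already-computed constant term at $z=0$). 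With these two repairs the argument goes through and is a legitimate alternative to the paper's direct computation.
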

\begin{proof}
The proof is again by direct calculation.
Using (\ref{w213}) let us write $L(z)$ (\ref{w234}) through the dual constants $\tnu_a$ (\ref{w2113}):
\beq\label{w3561}
  \begin{array}{c}
  \displaystyle{
{L}(z)=\mat{\tv(q,\eta)e^{p/2c}}{\tv(q,z)}{\tv(-q,z)}{\tv(-q,\eta)e^{-p/2c}}\,.
}
 \end{array}
 \eq
In order to compute $\Xi(z) L(z)\Xi^{-1}(z)$ one should first use
formulae (\ref{a58})-(\ref{a59}) to transform theta functions with $2\tau$. Then
the Riemann identities for theta-functions should be used. The identities can be found in \cite{Mum},
where the Riemann notations are related to the Jacobi numeration as follows:
$\theta_{11}(x)=-\theta_1(x)$, $\theta_{10}(x)=\theta_2(x)$, $\theta_{00}(x)=\theta_3(x)$ and $\theta_{01}(x)=\theta_4(x)$.

Let us represent (\ref{w3561}) in the form $L(z)=\sum\limits_{\al=0}^{3}\breve\nu_\al L^{\al}(z)$, where
\beq\label{w357}
 \begin{array}{c}
  \displaystyle{
{L^{\al}}(z)=\mat{\vf_{\al}(2q,\eta+\om_\al)e^{p/2c}}{\vf_{\al}(2q,z+\om_\al)}{\vf_{\al}(-2q,z+\om_\al)}
{\vf_{\al}(-2q,\eta+\om_\al)e^{-p/2c}}\,,
}
 \end{array}
\eq
The gauge transformation is performed separately for each $L^a(z)$. Let us demonstrate the calculation  for $L^0(z)$:
\beq\label{w358}
 \begin{array}{c}
  \displaystyle{
(\Xi(z) L^{0}(z)\Xi^{-1}(z))_{11}=\frac{1}{\det\Xi}\Big[\theta_2(z+2q|2\tau)\Big(\theta_3(z-2q|2\tau)\phi(2q,\eta)e^{p/2c}-
}
\\ \ \\
\displaystyle{
-\theta_3(z+2q|2\tau)\phi(-2q,z)\Big)+\theta_2(z-2q|2\tau)\Big(\theta_3(z-2q|2\tau)\phi(2q,z)-
}
\\ \ \\
\displaystyle{
-\theta_3(z+2q|2\tau)\phi(-2q,\eta)e^{-p/2c}\Big)\Big]=
\frac{-1}{2\theta_1(z)\theta_1(2q)}\Big[\phi(2q,\eta)e^{p/2c}\Big(\theta_2(z)\theta_2(2q)-
}
\\ \ \\
\displaystyle{
-\theta_1(z)\theta_1(2q)\Big)-\phi(-2q,\eta)e^{-p/2c}\Big(\theta_2(z)\theta_2(2q)+\theta_1(z)\theta_1(2q)\Big)+
}
\\ \ \\
\displaystyle{
+\theta_2(0)\Big(\theta_2(z-2q)\phi(2q,z)-\theta_2(z+2q)\phi(-2q,z)\Big)\Big]=
}
\\ \ \\
\displaystyle{
=\frac{1}{2}\Big(\phi(2q,\eta)e^{p/2c}+\phi(-2q,\eta)e^{-p/2c}\Big)+
\frac{c_1}{2}\vf_1(2q)\Big(\phi(2q,\eta)e^{p/2c}-\phi(-2q,\eta)e^{-p/2c}\Big)\vf_1(z)+
}
\\ \ \\
\displaystyle{
+c_1\vf_2(2q)\vf_3(2q)\vf_1(z)\,.
}
 \end{array}
\eq
In this calculations we used (\ref{a58}),(\ref{a56}), (\ref{a61}), and $c_k$ are defined in (\ref{a35}).
In the same manner one gets:
\beq\label{w359}
 \begin{array}{c}
  \displaystyle{
(\Xi(z) L^{0}(z)\Xi^{-1}(z))_{12}= \frac{-\imath c_2}{2}\vf_2(2q)\Big(\phi(2q,\eta)e^{p/2c}-\phi(-2q,\eta)e^{-p/2c}\Big)\vf_2(z)-
}
\\ \ \\
\displaystyle{
-\imath c_2\vf_1(2q)\vf_3(2q)\vf_2(z)+\frac{c_3}{2}\vf_3(2q)\Big(\phi(2q,\eta)e^{p/2c}-\phi(-2q,\eta)e^{-p/2c}\Big)\vf_3(z)+
}
\\ \ \\
\displaystyle{
+c_3\vf_1(2q)\vf_2(2q)\vf_3(z)\,,
}
 \end{array}
\eq
\beq\label{w360}
 \begin{array}{c}
  \displaystyle{
(\Xi(z) L^{0}(z)\Xi^{-1}(z))_{21}= \frac{\imath c_2}{2}\vf_2(2q)\Big(\phi(2q,\eta)e^{p/2c}-\phi(-2q,\eta)e^{-p/2c}\Big)\vf_2(z)+
}
\\ \ \\
\displaystyle{
+\imath c_2\vf_1(2q)\vf_3(2q)\vf_2(z)+\frac{c_3}{2}\vf_3(2q)\Big(\phi(2q,\eta)e^{p/2c}-\phi(-2q,\eta)e^{-p/2c}\Big)\vf_3(z)+
}
\\ \ \\
\displaystyle{
+c_3\vf_1(2q)\vf_2(2q)\vf_3(z)\,,
}
 \end{array}
\eq
\beq\label{w361}
 \begin{array}{c}
  \displaystyle{
(\Xi(z) L^{0}(z)\Xi^{-1}(z))_{22}=\frac{1}{2}\Big(\phi(2q,\eta)e^{p/2c}+\phi(-2q,\eta)e^{-p/2c}\Big)-
}
\\ \ \\
\displaystyle{
-\frac{c_1}{2}\vf_1(2q)\Big(\phi(2q,\eta)e^{p/2c}-\phi(-2q,\eta)e^{-p/2c}\Big)\vf_1(z)-c_1\vf_2(2q)\vf_3(2q)\vf_1(z)\,.
}
 \end{array}
\eq
This finishes the proof.
\end{proof}

\subsection{Poisson map}
We begin this subsection with a lemma formulating an answer
for the expression $\tv(q,\eta)\tv(-q,\eta)$ (or equivalently, $v(\eta,q)v(\eta,-q)$),
which is
alternative to (\ref{a50}), see (\ref{a551}).
\begin{lemma}\label{lem1}
The following identity holds true for the function $v$ (\ref{w210}):
\beq\label{w370}
  \begin{array}{c}
  \displaystyle{
v(\eta,q)v(\eta,-q)=
}
\\
  \displaystyle{
=\sum\limits_{k=0}^3\tnu_k^2\Big(\wp(\eta+\om_k)-\wp(2q)\Big)
-2\tnu_0\Big(\tnu_1\vf_2(2q)\vf_3(2q)+\tnu_2\vf_1(2q)\vf_3(2q)+\tnu_3\vf_1(2q)\vf_2(2q)\Big)-
}
\\
  \displaystyle{
  -2\Big(\tnu_1\tnu_2\vf_1(2q)\vf_2(2q)+\tnu_2\tnu_3\vf_2(2q)\vf_3(2q)+\tnu_1\tnu_3\vf_1(2q)\vf_3(2q)\Big)\,.
 }
 \end{array}
 \eq
\end{lemma}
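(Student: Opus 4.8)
The plan is to prove (\ref{w370}) by the same route as the (unlabeled) Lemma that produced (\ref{vbarv})--(\ref{vbarv2}), of which the present statement is the promised detailed version. First I would pass to the dual basis: by the duality (\ref{w211}), written as (\ref{w213}) and recorded in (\ref{w355}), one has
\[
v(\eta,q)=\tv(q,\eta)=\sum_{a=0}^3\tnu_a\,\vf_a(2q,\eta+\om_a),\qquad
v(\eta,-q)=\tv(-q,\eta)=\sum_{b=0}^3\tnu_b\,\vf_b(-2q,\eta+\om_b),
\]
so that the product becomes the double sum $\sum_{a,b=0}^3\tnu_a\tnu_b\,\vf_a(2q,\eta+\om_a)\,\vf_b(-2q,\eta+\om_b)$. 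Recalling from (\ref{w210}) that $\vf_a(\pm2q,\cdot)$ carries the prefactor $\exp(\pm4\pi\imath q\,\p_\tau\om_a)$, I would keep track of these exponentials, which cancel on the diagonal but not off it, and split the sum into diagonal ($a=b$) and off-diagonal ($a\neq b$) parts.

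For the diagonal contribution the two prefactors cancel, leaving $\phi(2q,\eta+\om_a)\phi(-2q,\eta+\om_a)$, which by the degenerate addition theorem (one of (\ref{a17}), (\ref{a19}), (\ref{a223})) equals $\wp(\eta+\om_a)-\wp(2q)$. Summed against $\tnu_a^2$, this reproduces verbatim the first line $\sum_{k=0}^3\tnu_k^2\big(\wp(\eta+\om_k)-\wp(2q)\big)$ of (\ref{w370}).

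The main step is the off-diagonal part. For each unordered pair $\{a,b\}$ I would collect the $(a,b)$ and $(b,a)$ terms into the symmetric combination
\[
\tnu_a\tnu_b\Big[\vf_a(2q,\eta+\om_a)\vf_b(-2q,\eta+\om_b)+\vf_b(2q,\eta+\om_b)\vf_a(-2q,\eta+\om_a)\Big],
\]
apply the addition identity (\ref{a223}) (this generates the $E_1$-type terms exactly as in the intermediate formula (\ref{vbarv})), and then use (\ref{a32}) to collapse them. I expect all $\eta$-dependence, together with the $E_1$ terms, to cancel upon symmetrization, leaving an $\eta$-independent expression $-2\,\vf_c(2q)\vf_d(2q)$ in the single-argument functions (\ref{a27}). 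The index pair $\{c,d\}$ is dictated by the half-period arithmetic $\om_a+\om_b\equiv\om_e$: the pairs $\{0,k\}$ give $-2\,\vf_\be(2q)\vf_\ga(2q)$ with $\{\be,\ga\}=\{1,2,3\}\setminus\{k\}$, and the pairs $\{j,k\}\subset\{1,2,3\}$ give $-2\,\vf_j(2q)\vf_k(2q)$. These are precisely the second and third lines of (\ref{w370}).

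Assembling the diagonal and off-diagonal pieces and matching signs against the normalization (\ref{w2113}) of the dual constants then yields (\ref{w370}). The principal obstacle is this last off-diagonal step: one must verify simultaneously that every $\eta$-dependent contribution cancels under symmetrization and that the surviving single-argument products carry the correct index pair and the overall factor $-2$; this is exactly where the elliptic identities (\ref{a223}), (\ref{a32}) and the combinatorics of the half-periods $\om_a$ must be handled with care. As a cheap consistency check one can set $\tnu_1=\tnu_2=\tnu_3=0$, so that all off-diagonal terms vanish and (\ref{w370}) collapses to $\tnu_0^2\big(\wp(\eta)-\wp(2q)\big)$, the $\mathrm{A}_1$ Ruijsenaars--Schneider product recovered from $v(\eta,q)=\tnu_0\,\phi(2q,\eta)$.
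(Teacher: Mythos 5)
Your proposal follows essentially the same route as the paper's proof: dualize via (\ref{w213}) to write the product as a double sum over the $\tnu_a$, evaluate the diagonal with (\ref{a19}) to get $\wp(\eta+\om_k)-\wp(2q)$, and symmetrize the off-diagonal pairs so that the $E_1(\pm\eta+\om)$ terms collapse (via (\ref{a17}), the oddness (\ref{a06}) and the quasi-periodicity (\ref{a13})--(\ref{a141})) into $2\vf_\al(2q)\vf_\be(2q)$ — exactly the computation the paper carries out in (\ref{w372})--(\ref{w373}). The one step you leave as "expected" (the cancellation of all $\eta$-dependence under symmetrization) is precisely what the paper verifies explicitly, and your consistency check at $\tnu_1=\tnu_2=\tnu_3=0$ is correct.
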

\begin{proof}
Using (\ref{a19}) and (\ref{a223}) we have
\beq\label{w371}
  \begin{array}{c}
  \displaystyle{
v(\eta,q)v(\eta,-q)=\tv(q,\eta)\tv(-q,\eta)=
}
\\ \ \\
  \displaystyle{
=\sum\limits_{k=0}^3\tnu_k^2\vf_k(2q,\eta+\om_k)\vf_k(-2q,\eta+\om_k)+
\sum\limits_{k\neq m}^3\tnu_k\tnu_m\vf_k(2q,\eta+\om_k)\vf_m(-2q,\eta+\om_m)=
}
\\
  \displaystyle{
=\sum\limits_{k=0}^3\tnu_k^2\Big(\wp(\eta+\om_k)-\wp(2q)\Big)
-
}
\\
  \displaystyle{
-\sum\limits_{k<m}^3\tnu_k\tnu_m\Big(\vf_k(2q,\eta+\om_k)\vf_m(2q,-\eta+\om_m)
+\vf_m(2q,\eta+\om_m)\vf_k(2q,-\eta+\om_k)\Big)\,.
 }
 \end{array}
 \eq
Consider expression from the last sum\footnote{Similar calculation was performed in \cite{Z04}.}.
For any distinct $\al,\be,\ga\in\{1,2,3\}$ due to (\ref{a17}), (\ref{a06}) and (\ref{a13})-(\ref{a141})
\beq\label{w372}
  \begin{array}{c}
  \displaystyle{
\vf_\al(2q,\eta+\om_\al)\vf_\be(2q,-\eta+\om_\be)
+\vf_\be(2q,\eta+\om_\be)\vf_\al(2q,-\eta+\om_\al)=
 }
 \\ \ \\
  \displaystyle{
=\vf_\ga(2q,\om_\ga)\Big(E_1(2q)+E_1(\eta+\om_\al)+E_1(-\eta+\om_\be)-E_1(2q+\om_\al+\om_\be)\Big)+
 }
  \\ \ \\
  \displaystyle{
+\vf_\ga(2q,\om_\ga)\Big(E_1(2q)+E_1(\eta+\om_\be)+E_1(-\eta+\om_\al)-E_1(2q+\om_\al+\om_\be)\Big)=
 }
   \\ \ \\
  \displaystyle{
=2\vf_\ga(2q,\om_\ga)\Big(E_1(2q)+E_1(\om_\al+\om_\be)-E_1(2q+\om_\al+\om_\be)\Big)=2\vf_\al(2q)\vf_\be(2q)\,.
 }
 \end{array}
 \eq
Similar calculation yields
\beq\label{w373}
  \begin{array}{c}
  \displaystyle{
\vf_\ga(2q,\eta+\om_\ga)\phi(2q,-\eta)
+\phi(2q,\eta)\vf_\ga(2q,-\eta+\om_\ga)=
2\vf_\al(2q)\vf_\be(2q)\,.
 }
 \end{array}
 \eq
Plugging (\ref{w372}) and (\ref{w373}) into (\ref{w371}) on gets (\ref{w370}).
\end{proof}
\begin{theor}\label{th2}
The change of variables from $p,q$ to $S_a(p,q)$, $a=0,...,3$ given by (\ref{w352})-(\ref{w353})
and identification of parameters (\ref{w356})
provides the Poisson map between the canonical brackets (\ref{w200}) and the ${\rm BC}_1$ Sklyanin
algebra (\ref{w320})-(\ref{w321}). The Casimir functions $C_1$ and $C_2$ (\ref{w322}) under this change of variables
are mapped to the values
\beq\label{w3731}
  \begin{array}{c}
  \displaystyle{
C_1=\tnu_0^2\,,\quad C_2=\sum\limits_{a=0}^3\tnu_a^2\wp(\eta+\om_a)-\sum\limits_{a=1}^3\tnu_a^2\wp(\om_a)
 }
 \end{array}
 \eq
\end{theor}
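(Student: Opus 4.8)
The plan is to verify the bracket relations (\ref{w320})--(\ref{w321}) directly from the explicit change of variables (\ref{w352})--(\ref{w353}) using $\{p,q\}=1$, and to read off the Casimir values from the gauge invariance of the spectral data of $L(z)$. First I would introduce the shorthand $A=\tv(q,\eta)e^{p/2c}$, $B=\tv(-q,\eta)e^{-p/2c}$ and $u=\tfrac12(A-B)$, so that $S_0=\tfrac12(A+B)$ and $S_\al=c_\al\,u\,\vf_\al(2q)+T_\al(q)$, where $T_\al(q)=c_\al\tnu_0\vf_\be(2q)\vf_\ga(2q)+c_\al\vf_\al(2q)\sum_{k=1}^3\tnu_k\vf_k(2q)$ collects the purely coordinate-dependent part. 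The virtue of this splitting is that $\partial_p S_0=\tfrac1{2c}u$ and $\partial_p u=\tfrac1{2c}S_0$, while $q$-derivatives act only on functions of $q$. Hence, by the Leibniz rule, every canonical bracket reduces to the three elementary brackets
\[
\{S_0,g(q)\}=\tfrac1{2c}u\,g'(q),\qquad \{u,g(q)\}=\tfrac1{2c}S_0\,g'(q),\qquad \{S_0,u\}=-\tfrac1{4c}\,\partial_q\big(v(\eta,q)v(\eta,-q)\big),
\]
for an arbitrary $g$ depending on $q$ only; the last one is where Lemma \ref{lem1} enters. Since $\{S_0,u\}$ is itself a function of $q$, the set spanned by $S_0$, $u$ and functions of $q$ is closed under the bracket, and the whole computation stays inside it.

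Next I would compute $\{S_\al,S_\be\}$ and $\{S_0,S_\al\}$ and collect each result in powers of $u$ (equivalently, by parity in $e^{p/2c}$). Writing $\dot\vf_\al:=\partial_q\vf_\al(2q)$, one finds $c\{S_\al,S_\be\}$ to be a sum of a $uS_0$-term and an $S_0$-term; matching these against $-i\varepsilon_{\al\be\ga}S_0S_\ga=-i\varepsilon_{\al\be\ga}\big(c_\ga\,uS_0\,\vf_\ga+S_0 T_\ga\big)$ reduces the claim to the Wronskian-type relation $\vf_\al(2q)\dot\vf_\be-\vf_\be(2q)\dot\vf_\al\propto\varepsilon_{\al\be\ga}\vf_\ga(2q)$ (a standard addition formula for the $\vf_\al$) together with its $T$-counterpart $c_\al\vf_\al T_\be'-c_\be\vf_\be T_\al'\propto\varepsilon_{\al\be\ga}T_\ga$. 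For $c\{S_0,S_\al\}$ the three powers $u^2,u^1,u^0$ must separately reproduce the corresponding parts of $-i\varepsilon_{\al\be\ga}S_\be S_\ga(\wp_\be-\wp_\ga)+i\varepsilon_{\al\be\ga}(S_\be\lambda_\ga-\lambda_\be S_\ga)$: the $u^2$ part is exactly the derivative identity $\vf_\al'(2q)\propto\vf_\be(2q)\vf_\ga(2q)$, while the $u^0$ part is fed by $\partial_q\big(v(\eta,q)v(\eta,-q)\big)$ through Lemma \ref{lem1}. It is at the $u^1$ and $u^0$ levels that the identification $\lambda_\al=\tnu_\al/c_\al$ is forced, and one checks it comes out consistently.

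For the Casimirs I would avoid re-substituting into $S_1^2+S_2^2+S_3^2$ and instead use that $\det L(z)$ and $\tr\big(L(z)^2\big)$ are invariant under the conjugation by $\Xi(z)$ of Theorem \ref{th1}. On the gyrostat side $\tfrac14\tr\big((L^{\hbox{\tiny{ZhV}}}(z))^2\big)=\tfrac12 C_1\,\wp(z)+(\hbox{$z$-independent})+\tfrac12\sum_\al\lambda_\al^2/\vf_\al^2(z)$, so the coefficient of $\wp(z)$ is $\tfrac12C_1$, while the $z$-independent piece encodes $C_2$ through (\ref{w322}). Computing the same invariants on the van Diejen side, using (\ref{a50}) for $v(z,q)v(z,-q)$ and Lemma \ref{lem1} for $v(\eta,q)v(\eta,-q)$, yields $C_1=\tnu_0^2$ and $C_2=\sum_{a=0}^3\tnu_a^2\wp(\eta+\om_a)-\sum_{a=1}^3\tnu_a^2\wp(\om_a)$.

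The hard part will be the two ``nonlinear'' coordinate identities built from $T_\al$ (the $S_0$-level identity in $\{S_\al,S_\be\}$ and the $u^1,u^0$ identities in $\{S_0,S_\al\}$): these are quadratic in the $\vf_\al(2q)$ and require the full set of product and derivative relations of the Appendix together with Lemma \ref{lem1}, and they are precisely what pins down $\lambda_\al=\tnu_\al/c_\al$. A secondary nuisance is the period doubling inside $\Xi(z)$ (theta functions at $2\tau$ and arguments $2q,2z$), which must be tracked so that the $\wp(2z)$-structure on the van Diejen side is matched with the $\wp(z)$-structure of $L^{\hbox{\tiny{ZhV}}}(z)$ in the Casimir comparison. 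A conceptually cleaner but technically heavier alternative would be to exhibit a dynamical reflection equation for $L(z)$ whose $r$-matrix is converted to the non-dynamical XYZ one by the intertwiner $\Xi(z)$, so that the quadratic reflection equation (\ref{w332}) — equivalently (\ref{w320})--(\ref{w321}) — holds automatically.
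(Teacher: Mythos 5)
Your proposal follows essentially the same route as the paper: a direct verification of the brackets from the explicit formulas for $S_a(p,q)$, with the only nontrivial input being the derivative identities $\vf_\al'=-\vf_\be\vf_\ga$, $\vf_\al^2-\vf_\be^2=\wp_\be-\wp_\al$, the constant relation (\ref{a38}), and the $q$-derivative of Lemma \ref{lem1} to handle $\p_q\big(v(\eta,q)v(\eta,-q)\big)$ in $\{S_0,S_\al\}$ — exactly the ingredients of the paper's computation — and the Casimir values are likewise read off by comparing $\det L(z)$ (via (\ref{a50}) and (\ref{a261})) with $\det L^{\hbox{\tiny{ZhV}}}(z)$. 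Your reorganization by powers of $u=\tfrac12(A-B)$ is a tidy bookkeeping device but does not change the substance; the argument is correct.
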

\begin{proof}
The statement of the theorem means that the Poisson brackets between the functions $S_a(p,q)$ (\ref{w352})-(\ref{w353})
computed through the canonical Poisson structure (\ref{w200}) yields  (\ref{w320})-(\ref{w321}).

Below we assume $\al,\be,\ga\in\{1,2,3\}$ are distinct. Consider first $\{S_\al,S_\be\}$:
\beq\label{w374}
  \begin{array}{c}
  \displaystyle{
\frac{c}{c_\al c_\be}\{S_\al,S_\be\}=
c\frac{\p_p S_\al}{c_\al}
\frac{\p_q S_\be}{c_\be}
-
c\frac{\p_q S_\al}{c_\al}
\frac{\p_p S_\be}{c_\be}\,.
 }
 \end{array}
 \eq
Obviously,
\beq\label{w375}
  \begin{array}{c}
  \displaystyle{
2c\frac{\p_p S_\al}{c_\al}=S_0\vf_\al(2q)
 }
 \end{array}
 \eq
and using (\ref{a32}) we have
\beq\label{w376}
  \begin{array}{c}
  \displaystyle{
\frac{1}{2}\frac{\p_q S_\be}{c_\be}=\frac{1}{4}\Big( v'(\eta,q)e^{p/2c}+v'(\eta,-q)e^{-p/2c}\Big)\vf_\be(2q)
-
 }
 \\ \ \\
   \displaystyle{
 -\frac{1}{2}\Big( v(\eta,q)e^{p/2c}-v(\eta,-q)e^{-p/2c}\Big)\vf_\al(2q)\vf_\ga(2q)
 -\tnu_0\Big(\vf_\al^2(2q)+\vf_\ga^2(2q)\Big)\vf_\be(2q)-
 }
  \\ \ \\
   \displaystyle{
 -\tnu_\al\Big(\vf_\al^2(2q)+\vf_\be^2(2q)\Big)\vf_\ga(2q)-2\tnu_\be\vf_\al(2q)\vf_\be(2q)\vf_\ga(2q)
  -\tnu_\ga\Big(\vf_\be^2(2q)+\vf_\ga^2(2q)\Big)\vf_\al(2q)\,.
 }
 \end{array}
 \eq
Similar expressions are obtained for $\p_p S_\be$ and $\p_q S_\al$. Plugging it into (\ref{w374})
and using (\ref{a31}) we get (the terms with $v'(\eta,\pm q)$ are cancelled out):
\beq\label{w377}
  \begin{array}{c}
  \displaystyle{
\frac{c}{c_\al c_\be}\{S_\al,S_\be\}=S_0 S_\ga\frac{\wp_\al-\wp_\be}{c_\ga}\,.
 }
 \end{array}
 \eq
Then the Poisson brackets (\ref{w320}) follow from the identity (\ref{a38}).

The computation of $\{S_0,S_\al\}$ is performed similarly but this time the terms with $v'(\eta,\pm q)$
are not cancelled out. These type terms appear in the combination $v(\eta,q)v'(\eta,-q)-v'(\eta,q)v(\eta,-q)$.
Instead of using (\ref{a54}), here one should differentiate (\ref{w370}) with respect to $q$.
 This yields the following identity:
\beq\label{w378}
  \begin{array}{c}
  \displaystyle{
-\frac14\Big(v(\eta,q)v'(\eta,-q)-v'(\eta,q)v(\eta,-q)\Big)=

 }
 \\ \ \\
  \displaystyle{
  =(\tnu_0^2+\tnu_\al^2+\tnu_\be^2+\tnu_\ga^2)\vf_\al(2q)\vf_\be(2q)\vf_\ga(2q)+
(\tnu_0\tnu_\al+\tnu_\be\tnu_\ga)\Big(\vf_\be^2(2q)+\vf_\ga^2(2q)\Big)\vf_\al(2q)+

 }
  \\ \ \\
  \displaystyle{
  +(\tnu_0\tnu_\be+\tnu_\al\tnu_\ga)\Big(\vf_\al^2(2q)+\vf_\ga^2(2q)\Big)\vf_\be(2q)
+(\tnu_0\tnu_\ga+\tnu_\al\tnu_\be)\Big(\vf_\al^2(2q)+\vf_\be^2(2q)\Big)\vf_\ga(2q)\,,
 }
 \end{array}
 \eq
where we also used (\ref{a33}).
Using (\ref{w378}) together with (\ref{a31}) one can verify that
\beq\label{w379}
  \begin{array}{c}
  \displaystyle{
\frac{c}{c_\al}\{S_0,S_\al\}=-\frac{1}{c_\be c_\ga}\,S_\be S_\ga
+\frac{\wp_\al-\wp_\ga}{c_\be}\,S_\be\tnu_\ga
+\frac{\wp_\al-\wp_\be}{c_\ga}\,S_\ga\tnu_\be\,.
 }
 \end{array}
 \eq
Then, with the help of (\ref{a38}) one obtains (\ref{w321}) with $\la_\al=\tnu_\al/c_\al$ as in (\ref{w356}).

Finally, in order to get (\ref{w3731}) we notice that $\det L(z)=\det L^{\hbox{\tiny{ZhV}}}(z)$ due to (\ref{w351}),
and the Casimir functions (\ref{w322}) are generated by the determinant
\beq\label{w380}
  \begin{array}{c}
  \displaystyle{
\det L^{\hbox{\tiny{ZhV}}}(z)=-\wp(z)C_1+C_2+\sum\limits_{k=1}\frac{\lambda_k^2}{\vf_k^2(z)}\,.
 }
 \end{array}
 \eq
On the other hand
\beq\label{w381}
  \begin{array}{c}
  \displaystyle{
\det L(z)=v(\eta,q)v(\eta,-q)-v(z,q)v(z,-q)=\sum\limits_{a=0}^3\tnu_a^2\Big(\wp(\eta+\om_a)-\wp(z+\om_a)\Big)\,.
 }
 \end{array}
 \eq
By comparing (\ref{w380}) and (\ref{w381}) and using (\ref{a261}) one gets (\ref{w3731}).
\end{proof}

\subsection{Coupled Zhukovsky-Volterra gyrostats}
It is now straightforward to perform the gauge transformation (\ref{w351}) to the
8-constant model (\ref{w233})-(\ref{w235}). Indeed,
\beq\label{w382}
  \begin{array}{c}
  \displaystyle{
\Xi(z) \mL(z)\Xi^{-1}(z)=\Xi(z) L(z)\Xi^{-1}(z)\Xi(z) {\bar L}(z)\Xi^{-1}(z)=L^{\hbox{\tiny{ZhV}}}(z)
{\bar L}^{\hbox{\tiny{ZhV}}}(z)\,,
 }
 \end{array}
 \eq
where the matrix $L^{\hbox{\tiny{ZhV}}}(z)$ is given by the expression (\ref{w330}) and
\beq\label{w383}
  \begin{array}{c}
  \displaystyle{
L^{\hbox{\tiny{ZhV}}}(z)=\sigma_0{\bar S}_0
+\sum\limits_{\al=1}^3 \Big({\bar S}_\al\vf_\al(z)-\frac{{\bar\lambda}_\al}{\vf_\al(z)}\Big)\sigma_{4-\al}\,.
 }
 \end{array}
 \eq
Expressions for ${\bar S}_\al(p,q|\bar\eta,\bnu_a)$ are the same as for $S_\al(p,q|\eta,\nu_a)$
but with another set of constants. That is, ${\bar S}_\al(p,q)=S_\al(p,q|\bar\eta,\bnu_a)$.
The Poisson brackets $\{S_a,S_b\}$ and $\{{\bar S}_a,{\bar S}_b\}$ have the form (\ref{w320})
and
\beq\label{w384}
  \begin{array}{c}
  \displaystyle{
c\{{\bar S}_\al,{\bar S}_\be\}=-i\varepsilon_{\al\be\ga} {\bar S}_0 {\bar S}_\ga\,,
}
\\ \ \\
  \displaystyle{
c\{{\bar S}_0,{\bar S}_\al\}=-i\varepsilon_{\al\be\ga} {\bar S}_\be {\bar S}_\ga(\wp_\be-\wp_\ga)
+i\varepsilon_{\al\be\ga}({\bar S}_\be{\bar\lambda}_\ga-{\bar\lambda}_\be {\bar S}_\ga)\,.
}
 \end{array}
 \eq
The mixed type brackets $\{{S}_a,{\bar S}_b\}$ can be computed as function of $p,q$
but an answer in terms of $S_a$, ${\bar S}_b$ is unknown.

\section{1-site XYZ model with boundaries}\label{sec5}
\setcounter{equation}{0}
Here we compute the transfer-matrix for the 1-site classical XYZ chain with boundaries
and show how it is related to the Ruijsenaars-van Diejen model (\ref{w01}).

\subsection{$L$-matrix and $K$-matrices}
\paragraph{Original classical Sklyanin algebra \cite{Skl2}.}
We consider the XYZ Lax matrix
\beq\label{w525}
\mathbb{L}(z) = \mathbb{S}_0\sigma_0+\sum\limits_{\al=1}^3 \mathbb{S}_{\al}\vf_{\al}(z)\sigma_{4-\al}
\eq
with the Sklyanin algebra generators $\mathbb{S}_0,\mathbb{S}_{\al}$ (without additional constants).
The Poisson brackets
\beq\label{w5252}
  \begin{array}{c}
  \displaystyle{
c\{\mathbb{S}_\al,\mathbb{S}_\be\}=-i\varepsilon_{\al\be\ga} \mathbb{S}_0 \mathbb{S}_\ga\,,
}
\\ \ \\
  \displaystyle{
c\{\mathbb{S}_0,\mathbb{S}_\al\}=
-i\varepsilon_{\al\be\ga} \mathbb{S}_\be \mathbb{S}_\ga\big(\wp(\om_\be)-\wp(\om_\ga)\big)
}
 \end{array}
 \eq
are generated by the classical quadratic $r$-matrix structure (with $r$-matrix (\ref{w331}))
\beq\label{w5251}
  \begin{array}{c}
  \displaystyle{
\{\mathbb{L}_1(z),\mathbb{L}_2(w)\}=\frac{1}{c}\,[\mathbb{L}_1(z)\mathbb{L}_2(w),r(z-w)]
 }
 \end{array}
 \eq
with the Casimir functions
\beq\label{w527}
  \begin{array}{c}
  \displaystyle{
{\bf C}_1=\mathbb{S}_1^2+\mathbb{S}_2^2+\mathbb{S}_3^2\,,
\qquad
{\bf C}_2=\mathbb{S}_0^2+\sum\limits_{k=1}^3 \mathbb{S}_k^2\wp(\om_k)\,.
}
 \end{array}
 \eq
From viewpoint of dynamical systems in classical mechanics, the Lax matrix (\ref{w252}) defines the (relativistic) Euler top
with the equations of motion
\beq\label{w5254}
  \begin{array}{c}
  \displaystyle{
 {\dot {\mathbb S}}=[{\mathbb S},\hat\wp({\mathbb S})]\,,
 \qquad
 {\mathbb S}=\sum\limits_{\al=1}^3\sigma_\alpha{\mathbb S}_\al\,,
 \qquad
 \hat\wp({\mathbb S})=\sum\limits_{\al=1}^3\sigma_\alpha{\mathbb S}_\al\wp(\om_\al)\,,
 }
 \end{array}
 \eq
which follow from the brackets (\ref{w5252}) and the Hamiltonian ${\mathbb S}_0$. The $M$-matrix (\ref{w3126})
is the same here since it is independent of additional constants.

Next, consider the classical analogues of representation of quantum Sklyanin's generators in terms difference operators:
\beq\label{w526}
\mathbb{S}_{a}=\mathbb{S}_{a}(p,q,\eta)=
d_{a}\Big(\vf_{a}(2q,\eta+\om_{a})e^{p/2c}+\vf_{a}(-2q,\eta+\om_{a})e^{-p/2c}\Big)\,,
\eq
for $a=0,1,2,3$, and
\beq\label{w5261}
d_{0}=1\,,\qquad d_\al=-\vf_\al(\eta-\om_\al)\,,\quad \al=1,2,3\,.
\eq
One can deduce these formulae by applying the gauge transformation (\ref{w315})
to the Lax matrix of the ${\rm A}_1$ Ruijsenaars-Schneider model (\ref{w252}) in the center of mass frame (with certain redefinition of constants in (\ref{w252}) including $c\to 2c$, $\eta\to -\eta$):
 ${\mathbb L}(z)|_{\mathbb S=\mathbb S(p,q,\eta)}={\rm const}\cdot\Xi(z)L^{\hbox{\tiny{RS}}}(z)\Xi^{-1}(z)$.
%
%
See \cite{ZZ} and references therein for details.

\paragraph{$K$-matrices.} The classical constant non-dynamical boundary $K$-matrices are solutions
to the following reflection equation \cite{Skl-refl}:
\beq\label{w530}
\displaystyle{
[K_1(z)K_2(w),r _{12}(z-w)]+\,K_2(w)r_{12}(z+w)K_1(z)
-\,K_1(z)r_{12}(z+w)K_2(w)=0\,.
}
\eq
Consider its solutions \cite{IK}:
\beq\label{w529}
  \begin{array}{c}
  \displaystyle{
K^{\pm}(z)=\sigma_0+\sum\limits_{\al=1}^3 \rho^{\pm}_\al\vf_{\al}(z+\om_{\al})\sigma_{4-\al}
 }
 \end{array}
\eq
with an arbitrary constants $\rho^{\pm}_\al\in\mC$.
Due to (\ref{a361}) it is equivalently written as
\beq\label{w561}
  \begin{array}{c}
  \displaystyle{
K^{\pm}(z)=\sigma_0+\sum\limits_{\al=1}^3 {\tilde\rho}^{\pm}_\al\,\frac{1}{\vf_{\al}(z)}\,\sigma_{4-\al}\,.
 }
 \end{array}
\eq
Being written in this form it is easy to see from the description of the Zhukovsky-Volterra gyrostat (\ref{w320})-(\ref{w332}) that $K^{\pm}(z)$ indeed solve (\ref{w530}).
Put $S_0=1$ and $S_\al=0$, $\al=1,2,3$.
 Then the Poisson brackets (\ref{w320})-(\ref{w321}) are fulfilled as $0=0$, and the equation (\ref{w332})
turns into (\ref{w530}).

\subsection{Transfer-matrix}
Following the general construction from \cite{Skl-refl} consider the 1-site XYZ chain
with boundaries given by $K^\pm(z)$-matrices (\ref{w561}). Its classical transfer-matrix has the form
\beq\label{w5311}
  \begin{array}{c}
  \displaystyle{
\tr\Big(K^+(z){\mathbb L}(z)K^-(z){\mathbb L}^{-1}(-z)\Big)\,.
 }
 \end{array}
\eq
It is the generating function of integrals of motion.
In our case ${\mathbb L}^{-1}(-z) = \frac{1}{\det {\mathbb L}(z)}{\mathbb L}(z)$ due to (\ref{a28}).
The expression $\det {\mathbb L}(z)$ is a non-dynamical factor since it belongs to the center of algebra
of (\ref{w5251}). Therefore, we may slightly redefine (\ref{w5311}) as
\beq\label{w531}
  \begin{array}{c}
  \displaystyle{
t(z)=\frac12\,\tr\Big(K^+(z){\mathbb L}(z)K^-(z){\mathbb L}(z)\Big)\,.
 }
 \end{array}
\eq
From (\ref{w525}) and (\ref{w561}) we have
\beq\label{w532}
\begin{array}{c}
\displaystyle{
K^{\pm}(z){\mathbb L}(z) = \mathbb{S}_0\sigma_0+
\sum\limits_{\al=1}^3\frac{{\mathbb{S}_0\,\tilde\rho}^{\pm}_{\al}}{\vf_\al(z)}\,\sigma_{4-\al}+
\sum\limits_{\al=1}^3 \mathbb{S}_{\al}\vf_\al(z)\sigma_{4-\al}
+\sum\limits_{\al,\be=1}^3{\tilde\rho}^{\pm}_\al\mathbb{S}_\be\,\frac{\vf_{\be}(z)}{\vf_\al(z)}\, \sigma_{4-\al}\sigma_{4-\be}\,.
}
\end{array}
\eq
Then direct calculation yields
\beq\label{w534}
\begin{array}{c}
\displaystyle{
t(z)=2\Big(\mathbb{S}_0+\sum\limits_{\al=1}^3{\tilde\rho}^+_\al\mathbb{S}_\al\Big)
\Big(\mathbb{S}_0+\sum\limits_{\al=1}^3{\tilde\rho}^-_\al\mathbb{S}_\al\Big)
-\Big(1-\sum\limits_{\al=1}^3\frac{{\tilde\rho}^+_\al{\tilde\rho}^-_\al}{\vf_\al^2(z)}\Big)
\Big({\bf C}_2-{\bf C}_1\wp(z)\Big)\,.
}
\end{array}
\eq
Therefore, the Hamiltonian of the chain with boundaries takes the form
\beq\label{w535}
\begin{array}{c}
\displaystyle{
\mH^{\hbox{\tiny{XYZ}}}=\Big(\mathbb{S}_0+\sum\limits_{\al=1}^3{\tilde\rho}^+_\al\mathbb{S}_\al\Big)
\Big(\mathbb{S}_0+\sum\limits_{\al=1}^3{\tilde\rho}^-_\al\mathbb{S}_\al\Big)\,.
}
\end{array}
\eq

\subsection{Relation to Ruijsenaars-van Diejen model}
Recall that the Hamiltonian of ${\rm BC}_1$ Ruijsenaars-van Diejen model (\ref{w01}) can be written
(up to constant)
in the factorized form ${H}^{\hbox{\tiny{8vD}}}={H}_1^{\hbox{\tiny{4vD}}}{\bar H}_1^{\hbox{\tiny{4vD}}}$.
Consider a special case
\beq\label{w5700}
\begin{array}{c}
\displaystyle{
\eta=\bar\eta
}
\end{array}
\eq
and let us write down both 4-constant
Hamiltonians ${H}_1^{\hbox{\tiny{4vD}}}$ and ${\bar H}_1^{\hbox{\tiny{4vD}}}$
through the dual function $\breve{v}$:
\beq\label{w570}
\begin{array}{c}
\displaystyle{
H^{\hbox{\tiny{4vD}}}_1
=\breve{v}(q,\eta)e^{p/2c}+{\breve v}(-q,\eta)e^{-p/2c}\,,
}
\\ \ \\
\displaystyle{
{\bar H}^{\hbox{\tiny{4vD}}}_1
=\breve{\bv}(q,\eta)e^{p/2c}+\breve{\bv}(-q,\eta)e^{-p/2c}\,.
}
\end{array}
\eq
Notice that these expressions are linear combinations of the Sklyanin's generators ${\mathbb S}_a(p,q,\eta)$:
\beq\label{w572}
\begin{array}{c}
\displaystyle{
H^{\hbox{\tiny{4vD}}}_1=\sum\limits_{a=0}^3\breve{\nu}_a\Big(\vf_a(2q,\eta+\om_a)e^{p/2c}+\vf_a(-2q,\eta+\om_a)e^{-p/2c}\Big)
=\sum\limits_{a=0}^3\frac{\breve{\nu}_a}{d_a}\,{\mathbb S}_a(p,q,\eta)\,,
}
\\
\displaystyle{
{\bar H}^{\hbox{\tiny{4vD}}}_1=\sum\limits_{a=0}^3\breve{\bnu}_a
\Big(\vf_a(2q,\eta+\om_a)e^{p/2c}+\vf_a(-2q,\eta+\om_a)e^{-p/2c}\Big)
=\sum\limits_{a=0}^3\frac{\breve{\bnu}_a}{d_a}\,{\mathbb S}_a(p,q,\eta)\,.
}
\end{array}
\eq
On the other hand, for the Hamiltonian (\ref{w535}) we have (assuming $\breve{\nu}_0,\breve{\bnu}_0\neq 0$):
\beq\label{w573}
\begin{array}{c}
\displaystyle{
\frac{\breve{\nu}_0\breve{\bnu}_0}{d_0^2}\,\mH^{\hbox{\tiny{XYZ}}}=\Big(\frac{\breve{\nu}_0}{d_0}\,\mathbb{S}_0
+\sum\limits_{\al=1}^3\frac{\breve{\nu}_0}{d_0}\,{\tilde\rho}^+_\al\mathbb{S}_\al\Big)
\Big(\frac{\breve{\bnu}_0}{d_0}\,\mathbb{S}_0
+\sum\limits_{\al=1}^3\frac{\breve{\bnu}_0}{d_0}\,{\tilde\rho}^-_\al\mathbb{S}_\al\Big)\,.
}
\end{array}
\eq
This expression coincides with the Hamiltonian ${H}_1^{\hbox{\tiny{4vD}}}{\bar H}_1^{\hbox{\tiny{4vD}}}$
for the following choice of parameters ${\tilde\rho}^\pm_\al$ from $K$-matrices:
\beq\label{w574}
\begin{array}{c}
\displaystyle{
{\tilde\rho}^+_\al=\frac{d_0}{\breve{\nu}_0}\frac{\breve{\nu}_\al}{d_\al}\,,
\qquad
{\tilde\rho}^-_\al=\frac{d_0}{\breve{\bnu}_0}\frac{\breve{\bnu}_\al}{d_\al}\,,
\qquad
\al=1,2,3\,.
}
\end{array}
\eq

\section{Representation of the Chalykh's Lax matrix through the Sklyanin generators}\label{sec6}
\setcounter{equation}{0}
In this Section we use the Sklyanin generators written in the form
\beq\label{w52611}
{\bf S}_{a}=\frac12 \Big(\vf_{a}(2q,\eta+\om_{a})e^{p/2c}+\vf_{a}(-2q,\eta+\om_{a})e^{-p/2c}\Big)\,,
\qquad a=0,1,2,3\,.
\eq
They are simply related to the original Sklyanin's generators $\mathbb{S}_a$ (\ref{w526}), (\ref{w5252}):
\beq\label{w52612}
{\bf S}_{a}=\frac{1}{2d_a}\,{\mathbb S}_a\,,
\qquad a=0,1,2,3\,.
\eq
Using identities (\ref{a63}), (\ref{a64}), (\ref{w372}), (\ref{w373}) and (\ref{a17}), (\ref{a32}), (\ref{a223}), (\ref{a13}), (\ref{a14}) one can find the Poisson brackets for ${\bf S}_{a}$ (\ref{w52611}):
\beq\label{w52521}
  \begin{array}{c}
  \displaystyle{
c\{{\bf S_\al},{\bf S_\be }\}=-\big( I_\al-I_\be\big) {\bf S}_0{\bf S_\ga }\,,
}
\\ \ \\
  \displaystyle{
c\{{\bf S}_0,{\bf S_\al}\}=I_\al
{\bf S_\be} {\bf S_\ga}
}
 \end{array}
 \eq
with
\beq\label{w52527}
  \begin{array}{c}
  \displaystyle{
I_\al=E_1(\eta+\om_\al)- E_1(\om_\al)- E_1(\eta)\,,\quad \al=1,2,3\,.
}
 \end{array}
 \eq
Consider the Chalykh's Lax matrix (\ref{w230}).
Instead of performing the gauge transformation (\ref{w232}) here we just represent it in the
form
\beq\label{w23311}
  \begin{array}{c}
  \displaystyle{
{\mL}^{\hbox{\tiny{Ch}}}(z)={\bf L}(z){\bf {\bar L}}(z)\,,
}
 \end{array}
 \eq
where
\beq\label{w23411}
  \begin{array}{c}
  \displaystyle{
{\bf L}(z)=\mat{v(\eta,q)e^{p/2c}}{-v(z,q)e^{-p/2c}}{-v(z,-q)e^{p/2c}}{v(\eta,-q)e^{-p/2c}}
=\mat{\tv(q,\eta)e^{p/2c}}{-\tv(q,z)e^{-p/2c}}{-\tv(-q,z)e^{p/2c}}{\tv(-q,\eta)e^{-p/2c}}
}
 \end{array}
 \eq
and
\beq\label{w23511}
  \begin{array}{c}
  \displaystyle{
{\bf {\bar L}}(z)=
\mat{\bv(\bar\eta,q)e^{p/2c}}{-\bv(z,q)e^{-p/2c}}{-\bv(z,-q)e^{p/2c}}{\bv(\bar\eta,-q)e^{-p/2c}}
=\mat{\tbv(q,\bar\eta)e^{p/2c}}{-\tbv(q,z)e^{-p/2c}}{-\tbv(-q,z)e^{p/2c}}{\tbv(-q,\bar\eta)e^{-p/2c}}\,.
}
 \end{array}
 \eq
Next, perform the gauge transformation
 \beq\label{w3151}
  \begin{array}{c}
  \displaystyle{
{\mL}^{\hbox{\tiny{Ch}}}(z)\to {\mL}'(z)=\Xi_\eta(z){\mL}^{\hbox{\tiny{Ch}}}\Xi^{-1}_\eta(z)=
\Xi_\eta(z) {\bf L}(z)\Xi_\eta(z)\cdot\Xi^{-1}_\eta(z){\bf {\bar L}}(z)\Xi^{-1}_\eta(z)
}
 \end{array}
 \eq
with the following matrix:
 \beq\label{w31513}
  \begin{array}{c}
  \displaystyle{
\Xi_\eta(z)=
\mat{\theta_3(z+\eta-2q|2\tau)}{-\theta_3(z+\eta+2q|2\tau)}{-\theta_2(z+\eta-2q|2\tau)}{\theta_2(z+\eta+2q|2\tau)}\,.
}
 \end{array}
 \eq
 Denote
 \beq\label{w31514}
  \begin{array}{c}
  \displaystyle{
{\bf L}'(z)=\Xi_\eta(z) {\bf L}(z)\Xi_\eta(z)\,,\qquad
 {\bf {\bar L}}'(z)= \Xi^{-1}_\eta(z){\bf {\bar L}}(z)\Xi^{-1}_\eta(z)\,,
}
 \end{array}
 \eq
 that is ${\mL}'(z)={\bf L}'(z) {\bf {\bar L}}'(z)$.

\begin{theor}\label{th3}
Dependence of the matrix elements of the (gauge transformed) matrix ${\bf L}'(z)$ on dynamical variables
 is expressed through the Sklyanin's generators ${\bf S}_a$ (\ref{w52611}) only.
\end{theor}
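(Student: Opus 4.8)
The plan is to establish the statement by a direct theta-function computation that mirrors the proof of Theorem~\ref{th1}. The starting remark is that the gauge matrix (\ref{w31513}) is nothing but the intertwiner (\ref{w315}) with a shifted spectral parameter, $\Xi_\eta(z)=\Xi(z+\eta)$, so the whole Riemann-identity toolkit already used in (\ref{w358})--(\ref{w361}) is available, only evaluated at argument $z+\eta$. Since ${\bf L}'(z)=\Xi_\eta(z){\bf L}(z)\Xi_\eta(z)$ is linear in ${\bf L}(z)$, I would first split ${\bf L}(z)=\sum_{a=0}^3\tnu_a{\bf L}^a(z)$ into its single-index pieces, exactly as $L(z)=\sum_a\tnu_a L^a(z)$ was split in (\ref{w357}), and reduce the claim to computing $\Xi_\eta(z){\bf L}^a(z)\Xi_\eta(z)$ for each $a$; the case $a=0$ is the model computation, completely parallel to (\ref{w358}).

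For a fixed matrix element I would separate the two powers $e^{\pm p/2c}$. The coefficient of $e^{p/2c}$ collects the first column of ${\bf L}^a$, namely $\vf_a(2q,\eta+\om_a)$ from the diagonal and $-\vf_a(-2q,z+\om_a)$ from the off-diagonal, each weighted by a product of two $\theta_{2,3}(\,\cdot+\eta\pm2q\,|2\tau)$ factors coming from the two copies of $\Xi_\eta$; the coefficient of $e^{-p/2c}$ collects the second column analogously. The first routine step is to collapse these products of two modulus-$2\tau$ theta functions into ordinary modulus-$\tau$ functions by (\ref{a58}) together with (\ref{a56}) and (\ref{a61}), as in the middle lines of (\ref{w358}). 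The decisive step is then to use the shift $z\mapsto z+\eta$: the addition theorem (\ref{a17}) rewrites each spectral-parameter (that is, $z$-slot) function $\vf_a(\pm2q,z+\om_a)$, once it stands next to the matching $\theta(z+\eta\mp2q|2\tau)$ factors, as the coupling ($\eta$-slot) function $\vf_a(\pm2q,\eta+\om_a)$ built into the generators (\ref{w52611}), times an explicit factor carrying the remaining $z$-dependence. After this conversion the $e^{p/2c}$-coefficient of every entry is a combination of the functions $\vf_b(2q,\eta+\om_b)$ and its $e^{-p/2c}$-coefficient a combination of the functions $\vf_b(-2q,\eta+\om_b)$.

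The main obstacle, and the genuine content of the theorem, is to check that the momentum enters each entry only through the generators ${\bf S}_b$, i.e.\ that the two coefficients pair up as the matching halves $\vf_b(2q,\eta+\om_b)e^{p/2c}$ and $\vf_b(-2q,\eta+\om_b)e^{-p/2c}$ of one and the same ${\bf S}_b$, up to a common, $p$-independent prefactor in $z$ and $q$. This pairing is not forced by the matrix algebra: since $\Xi_\eta(z)|_{q\to-q}=-\Xi_\eta(z)\sigma_1$ and $\Xi_\eta$ does not commute with $\sigma_1$, the flip $q\to-q$ cannot simply be carried through the product, and the symmetric ${\bf S}_b$-structure must be produced as an honest identity among the functions $v(\eta,\pm q)$, $v(z,\pm q)$ and the theta factors of $\Xi_\eta$ --- the same type of cancellation that already occurs in (\ref{w358}) through the combination $\theta_2(0)\big(\theta_2(z-2q)\phi(2q,z)-\theta_2(z+2q)\phi(-2q,z)\big)$. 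I would therefore perform the reorganisation entry by entry, using (\ref{a17}) to bring the $z$-slot and $\eta$-slot contributions to a common form and letting the parts antisymmetric in $q\to-q$ cancel, so that only the ${\bf S}_b$-combinations survive. The bulk of the work, and where slips are most likely, is this bookkeeping over the four entries and the four indices $a$, kept in check by the single-index decomposition so that each piece matches a term of the type already handled in (\ref{w358}).
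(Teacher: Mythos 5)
Your overall architecture coincides with the paper's: the proof there is also a direct entry-by-entry computation of $\Xi_\eta(z){\bf L}(z)\Xi_\eta(z)$, collapsing the modulus-$2\tau$ theta products via (\ref{a58}), (\ref{a56}), (\ref{a61}) exactly as in (\ref{w358})--(\ref{w361}), and you have correctly located the crux: one must convert the $z$-slot functions $\vf_a(\pm 2q,z+\om_a)$ into the $\eta$-slot functions $\vf_a(\pm 2q,\eta+\om_a)$ with prefactors depending on $z$ and $\eta$ only, so that the $e^{p/2c}$ and $e^{-p/2c}$ coefficients reassemble into the generators ${\bf S}_a$ of (\ref{w52611}).

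The gap is in the tool you name for that decisive step. The identity (\ref{a17}) is a product formula for two Kronecker functions sharing the same first argument; applied in the only way available here (e.g.\ writing $\phi(2q,z+\om_a)\phi(2q,\eta-z)=\phi(2q,\eta+\om_a)(\dots)$) it expresses $\vf_a(\pm2q,z+\om_a)$ through $\vf_a(\pm2q,\eta+\om_a)$ only at the price of a factor containing $E_1(2q)$ and $E_1(2q+\eta+\om_a)$, i.e.\ residual $q$-dependence in the would-be coefficients, and your plan provides no mechanism for cancelling it. What actually does the job in the paper is the family of four-theta Weierstrass identities (\ref{w623})--(\ref{w627}) with the substitution $u=y=\frac12(\eta+z+2q)$, $x=\frac12(\eta-z+2q)$, $v=\frac12(\eta+z-2q)$: the square bracket in (\ref{w35822}), a difference of two products of four theta functions, then factors exactly into $\theta_{\al+1}(\eta+2q)$ (the $\eta$-slot numerator of ${\bf S}_\al$ via (\ref{a222})) times ratios of the form $\theta_j(\eta)\theta_j(z)/\big(\theta_k(\eta)\theta_k(z)\big)$ that are manifestly free of $q$ and $p$, which is what yields (\ref{w358221}). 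So the plan is sound and parallel to Theorem \ref{th1} as you intend, but the decisive identity is misidentified: without a four-theta Riemann/Weierstrass identity of this type the computation stalls precisely at the point you yourself single out as the genuine content of the theorem.
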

\begin{proof}
Direct calculation gives
\beq\label{w35822}
 \begin{array}{c}
  \displaystyle{
{\bf L}'(z)_{11}=
\frac{-1}{2\theta_1(z+\eta)\theta_1(2q)}\Big[\tv(q,\eta)e^{p/2c}\Big(\theta_2(z+\eta)\theta_2(2q)-
}
\\ \ \\
\displaystyle{
-\theta_1(z+\eta)\theta_1(2q)\Big)-\tv(-q,\eta)e^{-p/2c}\Big(\theta_2(z+\eta)\theta_2(2q)+\theta_1(z+\eta)\theta_1(2q)\Big)-
}
\\ \ \\
\displaystyle{
-\theta_2(0)\Big(\theta_2(z+\eta-2q)\tv(q,z)e^{-p/2c}-\theta_2(z+\eta+2q)\tv(-q,z)e^{p/2c}\Big)\Big]=
}
\\ \ \\
\displaystyle{
= \frac{-1}{2\theta_1(z+\eta)\theta_1(2q)}\sum\limits_{\al=0}^{3}\breve{\nu}_\al e^{p/2c}\frac{\theta'_1(0)}{\theta_1(2q)\theta_{\al+1}(\eta)\theta_{\al+1}(z)}\Big[\theta_{\al+1}(\eta+2q)\theta_{\al+1}(z)\theta_{2}(z+\eta)\theta_{2}(2q)-
}
\\ \ \\
\displaystyle{
-\theta_{\al+1}(z-2q)\theta_{\al+1}(\eta)\theta_{2}(\eta+z+2q)\theta_{2}(0)\Big] + (q\rightarrow{-q}, p\rightarrow{-p}) +
}
\\ \ \\
\displaystyle{
 + \frac12\tv(q,\eta)e^{p/2c}+\frac12\tv(-q,\eta)e^{-p/2c}\,.
}
\end{array}
\eq
The expression in the last square brackets can be simplified by using the Weierstrass identities (\ref{w623})-(\ref{w627}). Plugging $u=y=\frac12
(\eta+z+2q)$, $x=\frac12
(\eta-z+2q)$, $v=\frac12
(\eta+z-2q)$ into (\ref{w623})-(\ref{w627}) we obtain
\beq\label{w358221}
 \begin{array}{c}
  \displaystyle{
{\bf L}'(z)_{11}= \sum\limits_{\al=0}^{3}\breve{\nu}_{\al}{\bf S_\al}-\breve{\nu}_0{\bf S_1}\frac{\theta_2(\eta)}{\theta_1(\eta)} \frac{\theta_2(z)}{\theta_1(z)}+\breve{\nu}_1{\bf S_0}\frac{\theta_1(\eta)}{\theta_2(\eta)} \frac{\theta_1(z)}{\theta_2(z)}-
}
\\ \ \\
\displaystyle{
-\breve{\nu}_2{\bf S_3}\frac{\theta_4(\eta)}{\theta_3(\eta)} \frac{\theta_4(z)}{\theta_3(z)}-\breve{\nu}_3{\bf S_2}\frac{\theta_3(\eta)}{\theta_4(\eta)} \frac{\theta_3(z)}{\theta_4(z)}\,.
}
 \end{array}
\eq
The evenness of $\theta_a(z)$, $a \in\{2,3,4\}$ was also taken into account.
Other matrix elements are calculated in a similar fashion:
\beq\label{w358222}
  \begin{array}{c}
  \displaystyle{
{\bf L}'(z)_{22}= \sum\limits_{\al=0}^{3}\breve{\nu}_{\al}{\bf S_\al}+\breve{\nu}_0{\bf S_1}\frac{\theta_2(\eta)}{\theta_1(\eta)} \frac{\theta_2(z)}{\theta_1(z)}-\breve{\nu}_1{\bf S_0}\frac{\theta_1(\eta)}{\theta_2(\eta)} \frac{\theta_1(z)}{\theta_2(z)}+
}
\\ \ \\
\displaystyle{
+\breve{\nu}_2{\bf S_3}\frac{\theta_4(\eta)}{\theta_3(\eta)} \frac{\theta_4(z)}{\theta_3(z)}+\breve{\nu}_3{\bf S_2}\frac{\theta_3(\eta)}{\theta_4(\eta)} \frac{\theta_3(z)}{\theta_4(z)}\,,
}
 \end{array}
\eq
\beq\label{w358223}
 \begin{array}{c}
  \displaystyle{
{\bf L}'(z)_{12}
=-\breve{\nu}_0{\bf S_2 }\frac{\theta_3(\eta)}{\theta_1(\eta)} \frac{\theta_3(z)}{\theta_1(z)}+\breve{\nu}_1{\bf S_3}\frac{\theta_4(\eta)}{\theta_2(\eta)} \frac{\theta_4(z)}{\theta_2(z)}-\breve{\nu}_2{\bf S_0}\frac{\theta_1(\eta)}{\theta_3(\eta)} \frac{\theta_1(z)}{\theta_3(z)}+\breve{\nu}_3{\bf S_1}\frac{\theta_2(\eta)}{\theta_4(\eta)} \frac{\theta_2(z)}{\theta_4(z)}-
}
\\ \ \\
\displaystyle{
-\breve{\nu}_0{\bf S_3 }\frac{\theta_4(\eta)}{\theta_1(\eta)} \frac{\theta_4(z)}{\theta_1(z)}+\breve{\nu}_1{\bf S_2}\frac{\theta_3(\eta)}{\theta_2(\eta)} \frac{\theta_3(z)}{\theta_2(z)}+\breve{\nu}_2{\bf S_1}\frac{\theta_2(\eta)}{\theta_3(\eta)} \frac{\theta_2(z)}{\theta_3(z)}+\breve{\nu}_3{\bf S_0}\frac{\theta_1(\eta)}{\theta_4(\eta)} \frac{\theta_1(z)}{\theta_4(z)}\,,
}
 \end{array}
\eq
and
\beq\label{w358224}
 \begin{array}{c}
  \displaystyle{
{\bf L}'(z)_{21}
 =-\left(-\breve{\nu}_0{\bf S_2 }\frac{\theta_3(\eta)}{\theta_1(\eta)} \frac{\theta_3(z)}{\theta_1(z)}+\breve{\nu}_1{\bf S_3}\frac{\theta_4(\eta)}{\theta_2(\eta)} \frac{\theta_4(z)}{\theta_2(z)}-\breve{\nu}_2{\bf S_0}\frac{\theta_1(\eta)}{\theta_3(\eta)} \frac{\theta_1(z)}{\theta_3(z)}+\breve{\nu}_3{\bf S_1}\frac{\theta_2(\eta)}{\theta_4(\eta)} \frac{\theta_2(z)}{\theta_4(z)}\right)-
}
\\ \ \\
\displaystyle{
-\breve{\nu}_0{\bf S_3 }\frac{\theta_4(\eta)}{\theta_1(\eta)} \frac{\theta_4(z)}{\theta_1(z)}+\breve{\nu}_1{\bf S_2}\frac{\theta_3(\eta)}{\theta_2(\eta)} \frac{\theta_3(z)}{\theta_2(z)}+\breve{\nu}_2{\bf S_1}\frac{\theta_2(\eta)}{\theta_3(\eta)} \frac{\theta_2(z)}{\theta_3(z)}+\breve{\nu}_3{\bf S_0}\frac{\theta_1(\eta)}{\theta_4(\eta)} \frac{\theta_1(z)}{\theta_4(z)}\,.
}
 \end{array}
\eq
This finishes the proof. Notice that similarity between (\ref{w358221}), (\ref{w358222}) and (\ref{w358223}), (\ref{w358224}) is not accidental, the same pattern appears in the general case. The action of (\ref{w315}) or (\ref{w31513}) by conjugation on an arbitrary matrix is given in the Appendix, see (\ref{a65})-(\ref{w3582}).
\end{proof}

Obviously, the expression for the gauge transformed matrix $ {\bf {\bar L}}'(z)$ is the same as presented above
but with constants $\bar\eta$ and ${\bar\nu}_a$ instead of $\eta$ and $\nu_a$.
Then, multiplying ${\bf { L}}'(z)$ and  ${\bf {\bar L}}'(z)$ one gets an expression for $\mL'(z)$ in terms
of ${\bf S}_a={\bf S}_a(p,q,\eta)$ and ${\bf{\bar S}}_a={\bf S}_a(p,q,\bar\eta)$.
Of course, an interesting problem is to find a nice expression for the mixed Poisson brackets
$\{{\bf S}_a,{\bf{\bar S}}_b\}$. It should be mentioned that a relation of the van Diejen model
to the Sklyanin algebra was discussed at quantum level in \cite{Rains,RainsRuijs,Spiridonov}.


\section{Appendix: elliptic functions}
\def\theequation{A.\arabic{equation}}
\setcounter{equation}{0}

\paragraph{Main definitions.}
We use the standard definition of the Jacobi theta-functions
\beq\label{a01}
\begin{array}{c}
\displaystyle{
\vth(u,\tau)=\theta_1(u|\tau )=-i\sum_{k\in \mZ}
(-1)^k q^{(k+\frac{1}{2})^2}e^{\pi i (2k+1)u},
\qquad
\theta_2(u|\tau )=\sum_{k\in \mZ}
q^{(k+\frac{1}{2})^2}e^{\pi i (2k+1)u},}
\\ \\
\displaystyle{
\theta_3(u|\tau )=\sum_{k\in \mZ}
q^{k^2}e^{2\pi i ku},
\qquad
\theta_4(u|\tau )=\sum_{k\in \mZ}
(-1)^kq^{k^2}e^{2\pi i ku},}
\end{array}
\eq
where $q=e^{\pi i \tau}$ and ${\rm Im}(\tau)>0$. They are related to each other through shifts of argument
by half-periods:
\beq\label{a02}
\begin{array}{c}
\theta_2(u|\tau)=\theta_1(u+\frac{1}{2}|\tau), \! \quad \! \!
\theta_3(u|\tau)=q^{\frac{1}{4}}e^{\pi iu}\theta_1(u+\frac{\tau+1}{2}|\tau), \! \quad \!\!
\theta_4(u|\tau)=-iq^{\frac{1}{4}}e^{\pi iu}\theta_1(u+\frac{\tau}{2}|\tau).
\end{array}
\eq
The elliptic Kronecker function:
 \beq\label{a03}
  \begin{array}{l}
  \displaystyle{
 \phi(z,u)=\frac{\vth'(0)\vth(z+u)}{\vth(z)\vth(u)}=\phi(u,z)\,,\quad
 \res\limits_{z=0}\phi(z,u)=1\,,
 }
 \end{array}
 \eq
 \beq\label{a031}
  \begin{array}{l}
  \displaystyle{
\phi(-z, -u) = -\phi(z, u)
 }
 \end{array}
 \eq
is a main function.
Its derivative $f(z,u) = \partial_u \vf(z,u)$ equals
\beq\label{a04}
\begin{array}{c} \displaystyle{
    f(z, u) = \phi(z, u)(E_1(z + u) - E_1(u)), \qquad f(-z, -u) = f(z, u)\,,
}\end{array}\eq
where $E_1(z)$ the first
  Eisenstein function. The first and the second Eisenstein functions are defined as follows:
\beq\label{a05}
\begin{array}{c} \displaystyle{
    E_1(z)=\frac{\vth'(z)}{\vth(z)}=\zeta(z)+\frac{z}{3}\frac{\vth'''(0)}{\vth'(0)}\,,
    \quad
    E_2(z) = - \partial_z E_1(z) = \wp(z) - \frac{\vartheta'''(0) }{3\vartheta'(0)}\,,
}\end{array}\eq
\beq\label{a06}
\begin{array}{c}
 \displaystyle{
    E_1(- z) = -E_1(z)\,, \quad E_2(-z) = E_2(z)\,,
}\end{array}
\eq
where $\wp(z)$ and $\zeta(z)$ are the Weierstrass functions.
Also,
  \beq\label{a07}
  \begin{array}{l}
  \displaystyle{
 f(0,q)=-E_2(q)
 }
 \end{array}
 \eq
due to the local behavior of $\phi(z,q)$ near $z=0$:
  \beq\label{a08}
  \begin{array}{l}
  \displaystyle{
\phi(z,q)=z^{-1}+E_1(q)+z\,(E_1^2(q)-\wp(q))/2+O(z^2)\,.
 }
 \end{array}
 \eq
\paragraph{Periodic properties.} Define the half-periods as
 \beq\label{a09}
 \begin{array}{c}
  \displaystyle{
 \om_0=0\,,\quad
 \om_1=\frac{1}{2}\,,\quad
 \om_2=\frac{1+\tau}{2}\,,\quad
 \om_3=\frac{\tau}{2}\,.
  }
 \end{array}
\eq
 The quasi periodic properties for the defined above functions are as follows:
  \beq\label{a10}
  \begin{array}{l}
  \displaystyle{
 \phi(z+1,u)=\phi(z,u)\,,\qquad  \phi(z+\tau,u)=e^{-2\pi\imath u}\phi(z,u)
 }
 \end{array}
 \eq
or
  \beq\label{a11}
  \begin{array}{l}
  \displaystyle{
 \phi(z\pm 2\om_a,u)=e^{\mp4\pi\imath\p_\tau\om_a u}\phi(z,u)\,,
 }
 \end{array}
 \eq
which come from
  \beq\label{a12}
  \begin{array}{l}
  \displaystyle{
 \vth(z+2\om_a)=-e^{-4\pi\imath(z+\om_a)\p_\tau\om_a}\vth(z)\,,
\qquad
 \vth(z+\om_a)=-e^{-4\pi\imath z\p_\tau\om_a}\vth(z-\om_a)\,.
 }
 \end{array}
 \eq
The latter also provides
  \beq\label{a13}
  \begin{array}{l}
  \displaystyle{
 E_2(z+2\om_a)=E_2(z)\,,\qquad E_1(z+2\om_a)=E_1(z)-4\pi\imath\p_\tau\om_a
 }
 \end{array}
 \eq
and
  \beq\label{a14}
  \begin{array}{l}
  \displaystyle{
E_1(\om_a)=-2\pi\imath\p_\tau\om_a\,,\quad a\neq 0\,,
 }
 \end{array}
 \eq
  \beq\label{a141}
  \begin{array}{l}
  \displaystyle{
E_1(\om_a+\om_b)=E_1(\om_a)+E_1(\om_b)=-2\pi\imath\p_\tau(\om_a+\om_b)\,,\quad a,b\neq 0, a\neq b\,.
 }
 \end{array}
 \eq

\paragraph{Theta-functions with $2\tau$:}
%
\beq\label{a58}
\begin{array}{c}
   \displaystyle{
\theta_2(x+y|2\tau)\theta_2(x-y|2\tau)=\frac{1}{2}\Big( \theta_3(x|\tau)\theta_3(y|\tau)-
\theta_4(x|\tau)\theta_4(y|\tau) \Big)\,,
 }
 \\ \ \\
    \displaystyle{
\theta_2(x+y|2\tau)\theta_3(x-y|2\tau)=\frac{1}{2}\Big( \theta_2(x|\tau)\theta_2(y|\tau)-
\theta_1(x|\tau)\theta_1(y|\tau) \Big)\,,
 }
  \\ \ \\
   \displaystyle{
\theta_3(x+y|2\tau)\theta_3(x-y|2\tau)=\frac{1}{2}\Big( \theta_3(x|\tau)\theta_3(y|\tau)+
\theta_4(x|\tau)\theta_4(y|\tau) \Big)\,.
 }
\end{array}
\eq
In particular, for the matrix $\Xi(z)$ (\ref{w315}) we have
\beq\label{a59}
\begin{array}{c}
   \displaystyle{
\det\Xi(z)=-\theta_1(z)\theta_1(2q)\,.
 }
\end{array}
\eq

\paragraph{Riemann identities:}
\beq\label{a60}
\begin{array}{c}
\displaystyle{
\theta_\al(z-2q)\theta_1(2q+z)\theta_\be(0)\theta_\ga(0)=\theta_1(2q)\theta_\al(2q)\theta_\be(z)\theta_\ga(z)
+\theta_1(z)\theta_\al(z)\theta_\be(2q)\theta_\ga(2q)\,,
}
\end{array}
\eq
where $\al, \be, \ga \in\{2,3,4\}$. The rest of identities can be found in \cite{Mum}. Inter alia, it follows from (\ref{a60}) that
\beq\label{a61}
\displaystyle{
\theta_\al(z-2q)\phi(2q,z)-\theta_\al(z+2q)\phi(-2q,z) = \frac{2\theta_1'(0)\theta_\al(z)\theta_\be(2q)\theta_\ga(2q)}{\theta_\be(0)\theta_\ga(0)\theta_1(2q)}\,.
}
\eq
We also use particular cases of the above identities known as the Weierstrass identities (see e.g. \cite{KhZ}):
\begin{equation}\label{w623}
\begin{aligned}
&\theta_{1}(u+x)\theta_{1}(u-x)\theta_{r}(v+y)\theta_{r}(v-y) - \theta_{1}(v+x)\theta_{1}(v-x)\theta_{r}(u+y)\theta_{r}(u-y) \\
&= \theta_{1}(u+v)\theta_{1}(u-v)\theta_{r}(x+y)\theta_{r}(x-y), \quad r=1,2,3,4.
\end{aligned}
\end{equation}
\begin{equation}\label{w624}
\begin{aligned}
&\theta_{2}(u+x)\theta_{2}(u-x)\theta_{3}(v+y)\theta_{3}(v-y) - \theta_{2}(v+x)\theta_{2}(v-x)\theta_{3}(u+y)\theta_{3}(u-y) \\
&= -\theta_{1}(u+v)\theta_{1}(u-v)\theta_{4}(x+y)\theta_{4}(x-y),
\end{aligned}
\end{equation}
\begin{equation}\label{w625}
\begin{aligned}
&\theta_{2}(u+x)\theta_{2}(u-x)\theta_{4}(v+y)\theta_{4}(v-y) - \theta_{2}(v+x)\theta_{2}(v-x)\theta_{4}(u+y)\theta_{4}(u-y) \\
&= -\theta_{1}(u+v)\theta_{1}(u-v)\theta_{3}(x+y)\theta_{3}(x-y),
\end{aligned}
\end{equation}
\begin{equation}\label{w626}
\begin{aligned}
&\theta_{3}(u+x)\theta_{3}(u-x)\theta_{4}(v+y)\theta_{4}(v-y) - \theta_{3}(v+x)\theta_{3}(v-x)\theta_{4}(u+y)\theta_{4}(u-y) \\
&= -\theta_{1}(u+v)\theta_{1}(u-v)\theta_{2}(x+y)\theta_{2}(x-y).
\end{aligned}
\end{equation}
\begin{equation}\label{w627}
\begin{aligned}
&\theta_{r}(u+x)\theta_{r}(u-x)\theta_{r}(v+y)\theta_{r}(v-y) - \theta_{r}(u+y)\theta_{r}(u-y)\theta_{r}(v+x)\theta_{r}(v-x) \\
&= (-1)^{r-1} \theta_{1}(u+v)\theta_{1}(u-v)\theta_{1}(x+y)\theta_{1}(x-y), \quad r=1,2,3,4.
\end{aligned}
\end{equation}

\paragraph{Addition formulae for the function $\phi$.} Main summation formula is
\beq\label{a15}
  \begin{array}{c}
  \displaystyle{
  \phi(z_1, u_1) \phi(z_2, u_2) = \phi(z_1, u_1 + u_2) \phi(z_2 - z_1, u_2) + \phi(z_2, u_1 + u_2) \phi(z_1 - z_2, u_1)\,.
 }
 \end{array}
 \eq
It leads to
\beq\label{a16}
  \begin{array}{c}
  \displaystyle{
  \phi(z_1, u_1) f(z_2, u_2)-f(z_1,u_1)\phi(z_2,u_2) =
  }
  \\ \ \\
  \displaystyle{
  =
  \phi(z_1, u_1 + u_2) f(z_2 - z_1, u_2) - \phi(z_2, u_1 + u_2) f(z_1 - z_2, u_1)\,.
 }
 \end{array}
 \eq
Also, in certain limiting cases (\ref{a07}) yields
\beq\label{a17}
  \begin{array}{c}
  \displaystyle{
 \phi(z,u_1)\phi(z,u_2)=\phi(z,u_1+u_2)\Big(E_1(z)+E_1(u_1)+E_1(u_2)-E_1(z+u_1+u_2)\Big)\,,
 }
 \end{array}
 \eq
\beq\label{a18}
  \begin{array}{c}
  \displaystyle{
  \phi(z, u_1) f(z,u_2)-\phi(z, u_2) f(z,u_1)=\phi(z,u_1+u_2)\Big(\wp(u_1)-\wp(u_2)\Big)=
 }
 \\ \ \\
   \displaystyle{
 =\phi(z,u_1+u_2)\Big(E_2(u_1)-E_2(u_2)\Big)=\phi(z,u_1+u_2)\Big(f(0,u_2)-f(0,u_1)\Big)\,.
 }
 \end{array}
 \eq
\beq\label{a19}
  \begin{array}{c}
  \displaystyle{
  \phi(z, u) \phi(z, -u) = \wp(z)-\wp(u)=E_2(z)-E_2(u)\,,
 }
 \end{array}
 \eq
\beq\label{a20}
  \begin{array}{c}
  \displaystyle{
  \phi(z, u) f(z, -u)-\phi(z, -u) f(z, u)=\wp'(u)\,.
 }
 \end{array}
 \eq

\paragraph{The functions $\vf_k(z,x+\om_k)$.} Using the function $\phi$ and the notation (\ref{a03}) define
the following four functions:
 \beq\label{a21}
 \begin{array}{c}
  \displaystyle{
 \vf_{0}(z,x)=\phi(z,x)\,,
\qquad
 \vf_{1}(z,x+\om_1)=\phi(z,x+\om_1)\,,
 }
 \\ \ \\
  \displaystyle{
 \vf_{2}(z,x+\om_2)=e^{\pi\imath z}\phi(z,x+\om_2)\,,
\qquad
  \vf_{3}(z,x+\om_3)=e^{\pi\imath z}\phi(z,x+\om_3)\,,
 }
  \end{array}
 \eq
or equivalently,
 \beq\label{a22}
 \begin{array}{c}
  \displaystyle{
 \vf_{k}(z,x+\om_k)=e^{2\pi\imath z\p_\tau\om_k}\phi(z,x+\om_k)\,,
 \quad k=0,1,2,3
  }
 \end{array}
\eq
 or
 \beq\label{a222}
 \begin{array}{c}
  \displaystyle{
 \vf_{k}(z,x+\om_k)=\frac{\theta_1'(0)\theta_{k+1}(z+x)}{\theta_{1}(z)\theta_{k+1}(x)}\,,
 \quad k=0,1,2,3\,.
  }
 \end{array}
\eq
Due to (\ref{a031}) and (\ref{a22})
 \beq\label{a223}
 \begin{array}{c}
  \displaystyle{
 \vf_{k}(-z,x+\om_k)=-\vf_{k}(z,-x+\om_k)\,,
 \quad k=0,1,2,3\,.
  }
 \end{array}
\eq
 The action of the matrix $I$ from (\ref{w2113}):
  \beq\label{a23}
 \begin{array}{c}
  \left(\begin{array}{c}
 \vf_{0}(2z,u+\om_0)
 \\
  \vf_{1}(2z,u+\om_1)
 \\
  \vf_{2}(2z,u+\om_2)
 \\
  \vf_{3}(2z,u+\om_3)
 \end{array}\right)
 =
   \displaystyle{\frac12}
 \left(\begin{array}{cccc}
 1 & 1 & 1 & 1
 \\
  1 & 1 & -1 & -1
 \\
  1 & -1 & 1 & -1
 \\
  1 & -1 & -1 & 1
 \end{array}\right)
  \left(\begin{array}{c}
 \vf_{0}(2u,z+\om_0)
 \\
  \vf_{1}(2u,z+\om_1)
 \\
  \vf_{2}(2u,z+\om_2)
 \\
  \vf_{3}(2u,z+\om_3)
 \end{array}\right)\,.
 \end{array}
 \eq
  This transformation matrix is written as
  \beq\label{a24}
  \begin{array}{c}
  \displaystyle{
I_{km}=\frac12\exp\Big(4\pi\imath\Big(\om_{m-1}\p_\tau\om_{k-1}
-\om_{k-1}\p_\tau\om_{m-1}\Big)\Big)\,,\quad k,m=1,...,4
 }
 \end{array}
 \eq
and satisfies the property
  \beq\label{a25}
  \begin{array}{c}
  \displaystyle{
I^{-1}=I\,.
 }
 \end{array}
 \eq
  We also use the widely known identities:
  \beq\label{a26}
 \begin{array}{c}
  \displaystyle{
\sum\limits_{a=0}^3\wp(z+\om_a)=4\wp(2z)
  }
 \end{array}
 \eq
 and for $\al=1,2,3$
  \beq\label{a261}
 \begin{array}{c}
  \displaystyle{
\wp(z+\om_\al)-\wp(\om_\al)=\frac{1}{2}\,\frac{\wp''(\om_\al)}{\wp(z)-\wp(\om_\al)}\,.
  }
 \end{array}
 \eq
\paragraph{The functions $\vf_k(z)$ and theta-constants.}
Here we discuss the properties of the functions (\ref{a222}) at $x=0$ for $k=1,2,3$:
$\vf_{k}(z)=\vf_{k}(z,\om_k)=e^{2\pi\imath z\p_\tau\om_k}\phi(z,\om_k)$ or
 \beq\label{a27}
 \begin{array}{c}
  \displaystyle{
 \vf_{1}(z)=\frac{\theta_1'(0)\theta_{2}(z)}{\theta_{1}(z)\theta_{2}(0)}\,,\qquad
  \vf_{2}(z)=\frac{\theta_1'(0)\theta_{3}(z)}{\theta_{1}(z)\theta_{3}(0)}\,,\qquad
   \vf_{3}(z)=\frac{\theta_1'(0)\theta_{4}(z)}{\theta_{1}(z)\theta_{4}(0)}\,.
  }
 \end{array}
\eq
Due to (\ref{a223}) these functions are odd:
 \beq\label{a28}
 \begin{array}{c}
  \displaystyle{
 \vf_{k}(-z)=-\vf_{k}(z)\,,
 \quad k=1,2,3\,.
  }
 \end{array}
\eq
 According to numeration of half-periods (\ref{a09}) introduce notation $\wp_k=\wp(\om_k)$,
 $k=1,2,3$:
 \beq\label{a29}
 \begin{array}{c}
  \displaystyle{
 \wp_1=\wp\Big(\frac{1}{2}\Big)\,,\qquad
 \wp_2=\wp\Big(\frac{1+\tau}{2}\Big)\,,\qquad
 \wp_3=\wp\Big(\frac{\tau}{2}\Big)\,.
  }
 \end{array}
\eq
As is widely known
 \beq\label{a291}
 \begin{array}{c}
  \displaystyle{
 \wp_1+\wp_2+\wp_3=0\,.
  }
 \end{array}
\eq
From (\ref{a19}) one gets
 \beq\label{a30}
 \begin{array}{c}
  \displaystyle{
 \vf_{k}^2(z)=\wp(z)-\wp_k\,, \quad k=1,2,3\,,
  }
 \end{array}
\eq
 and, therefore,
 \beq\label{a31}
 \begin{array}{c}
  \displaystyle{
 \vf_{\al}^2(z)-\vf_{\be}^2(z)=\wp_\be-\wp_\al\,,\qquad \al,\be\in\{1,2,3\}\,.
  }
 \end{array}
\eq
 Let $\al,\be,\ga\in\{1,2,3\}$ be pairwise distinct. For the derivative we have
 \beq\label{a32}
 \begin{array}{c}
  \displaystyle{
 \p_z\vf_{\al}(z)=\vf'_{\al}(z)=\vf_\al(E_1(z+\om_\al)-E_1(z)-E_1(\om_\al))=-\vf_\be(z)\vf_\ga(z)\,.
  }
 \end{array}
\eq
 Then, by differentiating (\ref{a30}) one obtains
 \beq\label{a33}
 \begin{array}{c}
  \displaystyle{
 -2\vf_1(z)\vf_2(z)\vf_3(z)=\wp'(z)\,.
  }
 \end{array}
\eq
 Its square yields the equation of the elliptic curve:
 \beq\label{a34}
 \begin{array}{c}
  \displaystyle{
 (\wp'(z))^2=4(\wp(z)-\wp_1)(\wp(z)-\wp_2)(\wp(z)-\wp_3)\,.
  }
 \end{array}
\eq
 Introduce the following theta-constants $c_k=c_k(\tau)$, $k=1,2,3$:
 \beq\label{a35}
 \begin{array}{c}
  \displaystyle{
 c_1(\tau)=-\Big(\frac{\vth_2(0)}{\vth'(0)}\Big)^2\,,\quad
 c_2(\tau)=-\imath\Big(\frac{\vth_3(0)}{\vth'(0)}\Big)^2\,,\quad
 c_3(\tau)=-\Big(\frac{\vth_4(0)}{\vth'(0)}\Big)^2\,.
  }
 \end{array}
\eq
 Equivalently,
 \beq\label{a36}
 \begin{array}{c}
  \displaystyle{
 c_1(\tau)=-\frac{1}{(\wp_1-\wp_2)^{1/2}(\wp_1-\wp_3)^{1/2}}
 =-\frac{1}{\vf_2(\om_1)\vf_3(\om_1)}=-\Big(\frac{\vth(\om_1)}{\vth'(0)}\Big)^2\,,
  }
  \\ \ \\
    \displaystyle{
 c_2(\tau)=\frac{1}{(\wp_2-\wp_1)^{1/2}(\wp_2-\wp_3)^{1/2}}=
 -\frac{1}{\vf_1(\om_2)\vf_3(\om_2)}=-e^{-\pi\imath\om_2}\Big(\frac{\vth(\om_2)}{\vth'(0)}\Big)^2\,,
  }
    \\ \ \\
    \displaystyle{
 c_3(\tau)=\frac{1}{(\wp_3-\wp_1)^{1/2}(\wp_3-\wp_2)^{1/2}}=
  \frac{1}{\vf_1(\om_3)\vf_2(\om_3)}=e^{-\pi\imath\om_3}\Big(\frac{\vth(\om_3)}{\vth'(0)}\Big)^2
  }
 \end{array}
\eq
 and
 \beq\label{a361}
 \begin{array}{c}
  \displaystyle{
 \vf_1(z+\om_1)=\frac{1}{c_1\vf_1(z)}\,,\qquad
  \vf_2(z+\om_2)=\frac{1}{c_2\vf_2(z)}\,,\qquad
    \vf_3(z+\om_3)=-\frac{1}{c_3\vf_3(z)}\,.
  }
 \end{array}
\eq
 These set of constants satisfy identities
 \beq\label{a37}
 \begin{array}{c}
  \displaystyle{
\sum\limits_{k=1}^3c_k^2=0\,,
\qquad
\sum\limits_{k=1}^3c_k^2\wp_k=0\,,
  }
  \\
  \displaystyle{
\sum\limits_{k=1}^3c_k^2\wp_k^2=1\,,
\qquad
c_1^2\wp_2\wp_3+c_2^2\wp_1\wp_3+c_3^2\wp_1\wp_2=1
  }
 \end{array}
\eq
 and
 \beq\label{a38}
 \begin{array}{c}
  \displaystyle{
c_\al c_\be=-\imath\varepsilon_{\al\be\ga}\frac{c_\ga}{\wp_\al-\wp_\be}\,.
  }
 \end{array}
\eq

\paragraph{Properties of the function $v(z,u)$.}
Here we list some identities for the function (\ref{w210}). More properties
can be found in  \cite{KH,IK,Hikami,Z04,Ch2} and in the Appendix from \cite{MMZ}.

First,
  \beq\label{a50}
  \begin{array}{c}
  \displaystyle{
v(z,u)v(z,-u)=\sum\limits_{a=0}^3\Big(\breve\nu_a^2\wp(z+\om_a)-\nu_a^2\wp(u+\om_a)\Big)\,.
 }
 \end{array}
 \eq
Due to (\ref{a223}) we have
\beq\label{a51}
v(-z,-u)=-v(z,u)
\eq
and, therefore,
\beq\label{a52}
v(z,u)v(-z,u)=\sum\limits_{a=0}^3\Big(\nu_a^2\wp(u+\om_a)-\breve\nu_a^2\wp(z+\om_a)\Big)=-v(z,u)v(z,-u)\,.
\eq
Using (\ref{a04}) introduce
  \beq\label{a53}
  \begin{array}{c}
  \displaystyle{
v'(z,u)=\p_uv(z,u)=\sum\limits_{a=0}^3\nu_a \exp(4\pi \imath z\p_\tau\om_a)f(2z,u+\om_a)\,.
 }
 \end{array}
 \eq
Then from (\ref{a50}) we conclude that
  \beq\label{a54}
  \begin{array}{c}
  \displaystyle{
v(z,u)v'(z,-u)-v(z,-u)v'(z,u)=\sum\limits_{a=0}^3\nu_a^2\wp'(u+\om_a)\,,
 }
 \end{array}
 \eq
and from (\ref{a07}) one gets
  \beq\label{a55}
  \begin{array}{c}
  \displaystyle{
v'(0,u)=\sum\limits_{a=0}^3\nu_a f(0,u+\om_a)=-\sum\limits_{a=0}^3\nu_a E_2(u+\om_a)=
-\sum\limits_{a=0}^3\nu_a \wp(u+\om_a)+\frac{\vth'''(0)}{3\vth'(0)}\sum\limits_{a=0}^3\nu_a\,.
 }
 \end{array}
 \eq
Also,
\beq\label{a551}
  \begin{array}{c}
  \displaystyle{
\sum\limits_{k=0}^3\nu_k^2\wp(q+\om_k)=(\tnu_0^2+\tnu_1^2+\tnu_2^2+\tnu_3^2)\wp(2q)+
}
\\ \ \\
  \displaystyle{
+2\tnu_0\Big(\tnu_1\vf_2(2q)\vf_3(2q)+\tnu_2\vf_1(2q)\vf_3(2q)+\tnu_3\vf_1(2q)\vf_2(2q)\Big)+
}
\\ \ \\
  \displaystyle{
  +2\Big(\tnu_1\tnu_2\vf_1(2q)\vf_2(2q)+\tnu_2\tnu_3\vf_2(2q)\vf_3(2q)+\tnu_1\tnu_3\vf_1(2q)\vf_3(2q)\Big)\,.
 }
 \end{array}
 \eq
The latter follows from comparing  (\ref{w370}) and (\ref{a50}).

Summation formulae are as follows:
\beq\label{a56}
  \begin{array}{c}
  \displaystyle{
v(x,u)\phi(x+y,w-u)+v(x,w)\phi(x-y,u-w)+v(y,-u)\phi(x+y,u+w)=
 }
 \\ \ \\
   \displaystyle{
=v(y,w)\phi(x-y,u+w)
}
\end{array}
\eq
and
\beq\label{a57}
\begin{array}{c}
   \displaystyle{
\phi(z,u-w)(v'(0,w)-v'(0,u))=
 }
\end{array}
\eq
$$
   \displaystyle{
2v(-z,w)f(z,u+w)+2v(z,u)f(-z,u+w)+v'(-z,w)\phi(z,u+w)+v'(z,u)\phi(-z,u+w)\,.
 }
$$

We also use the identity
\beq\label{a62}
\begin{array}{c}
\displaystyle{
\phi(z - w, v)\phi(z, u_1 - v)\phi(w, u_2 + v) - \phi(z - w, u_1 - u_2 - v)\phi(z, u_2 + v)\phi(w, u_1 - v)=}
\\ \ \\
\displaystyle{
= \phi(z, u_1)\phi(w, u_2)\Big(E_1(v)-E_1(u_1- u_2- v) + E_1(u_1- v)-E_1(u_2 + v)\Big)}\,.
\end{array}
\eq
In the limiting case $z=w$ it follows from (\ref{a62}) (using also (\ref{a08})) that
\beq\label{a621}
\begin{array}{c}
\displaystyle{
- \phi(z, u_1 - v)\phi'(z, u_2 + v) + \phi(z, u_2 + v)\phi'(z, u_1 - v) +
}
\\ \ \\
\displaystyle{
+\phi(z, u_2 + v)\phi(z, u_1 - v)\Big(E_1(v)-E_1(u_1- u_2- v)\Big)=
}
\\ \ \\
\displaystyle{
= \phi(z, u_1)\phi(z, u_2)\Big(E_1(v)-E_1(u_1- u_2- v) + E_1(u_1- v)-E_1(u_2 + v)\Big)}\,,
\end{array}
\eq
where $\phi'(z,u)=\p_z\phi(z,u)$. Using (\ref{a13})-(\ref{a141}) one gets
\beq\label{a63}
\begin{array}{c}
\displaystyle{
\phi_k(2q,\eta+\om_k)\phi'_j(2q,\eta+\om_j)-\phi'_k(2q,\eta+\om_k)\phi_j(2q,\eta+\om_j)=
}
\\ \ \\
\displaystyle{
=\phi(2q,\eta)\phi_i(2q,\eta+\om_i)\Big( E_1(\eta+\om_j)- E_1(\eta+\om_k)+ E_1(\om_k)- E_1(\om_j)\Big)
}
\end{array}
\eq
and
\beq\label{a64}
\begin{array}{c}
\displaystyle{
\phi(2q,\eta)\phi'_k(2q,\eta+\om_k)-\phi'(2q,\eta)\phi_l(2q,\eta+\om_k)=
}
\\ \ \\
\displaystyle{
=\phi_j(2q,\eta+\om_j)\phi_i(2q,\eta+\om_i)\Big( E_1(\eta+\om_k)- E_1(\eta)- E_1(\om_k)\Big)\,.
}
\end{array}
\eq
\paragraph{Gauge transformation.}
Consider an arbitrary matrix of the form
\beq\label{a65}
A\equiv A(q,p,z) = \mat{a(q,p.z)}{b(q,p,z)}{b(-q,-p,z)}{a(-q,-p,z)}\,.
\eq
Conjugation with the matrix $\Xi(z)$ (\ref{w315})\footnote{The matrix (\ref{w31513}) differs by only  shift of variable $z$ to $z+\eta$.} acts as follows:
\beq\label{w3581}
 \begin{array}{c}
  \displaystyle{
(\Xi(z)A\Xi^{-1}(z))_{11}=\frac{1}{\det\Xi}\Big[\theta_2(z+2q|2\tau)\Big(\theta_3(z-2q|2\tau)a(q,p,z)-
}
\\ \ \\
\displaystyle{
-\theta_3(z+2q|2\tau)b(-q,-p,z)\Big)+\theta_2(z-2q|2\tau)\Big(\theta_3(z-2q|2\tau)b(q,p,z)-
}
\\ \ \\
\displaystyle{
-\theta_3(z+2q|2\tau)a(-q,-p,z)\Big)\Big]=\frac{-1}{2\theta_1(z)\theta_1(2q)}\Big[a(q,p,z)\theta_2(z)\theta_2(2q)+
 }
 \\ \ \\
 \displaystyle{
 +b(q,p,z)\theta_2(0)\theta_2(z-2q)\Big]+\frac12a(q,p,z)+(q\rightarrow{-q}, p\rightarrow{-p})
 }
 \end{array}
 \eq
 Here we used (\ref{a58}), the oddness of $\theta_1(z)$ and the evenness of $\theta_a(z)$ for $a \in\{2,3,4\}$.
 For other entries we have:
 \beq\label{w3582}
 \begin{array}{c}
  \displaystyle{
(\Xi(z)A\Xi^{-1}(z))_{22}=\frac{1}{2\theta_1(z)\theta_1(2q)}\Big[a(q,p,z)\theta_2(z)\theta_2(2q)+
 }
 \\ \ \\
 \displaystyle{
 +b(q,p,z)\theta_2(0)\theta_2(z-2q)\Big]+\frac12a(q,p,z)+(q\rightarrow{-q}, p\rightarrow{-p})\,,
 }
 \\ \ \\
  \displaystyle{
(\Xi(z)A\Xi^{-1}(z))_{12}=\frac{1}{2\theta_1(z)\theta_1(2q)}\Big[a(q,p,z)\Big(\theta_3(z)\theta_3(2q)+\theta_4(z)\theta_4(2q)\Big)
 }
 \\ \ \\
 \displaystyle{
 +b(q,p,z)\Big(\theta_3(0)\theta_3(z-2q)+\theta_4(0)\theta_4(z-2q)\Big]+(q\rightarrow{-q}, p\rightarrow{-p})\,,
 }
 \\ \ \\
  \displaystyle{
(\Xi(z)A\Xi^{-1}(z))_{21}=\frac{1}{2\theta_1(z)\theta_1(2q)}\Big[a(q,p,z)\Big(-\theta_3(z)\theta_3(2q)+\theta_4(z)\theta_4(2q)\Big)
 }
 \\ \ \\
 \displaystyle{
 +b(q,p,z)\Big(-\theta_3(0)\theta_3(z-2q)+\theta_4(0)\theta_4(z-2q)\Big]+(q\rightarrow{-q}, p\rightarrow{-p})\,.
 }
 \end{array}
 \eq


\subsection*{Acknowledgments}
This work was supported by the Russian Science Foundation under grant no. 25-11-00081,\\
https://rscf.ru/en/project/25-11-00081/ and performed at Steklov Mathematical Institute of Russian Academy of Sciences.






\begin{small}

\end{small}

\end{document}